\documentclass{article}
\usepackage{booktabs}
\usepackage{amsmath,amsthm}
\usepackage{amssymb}
\usepackage{color}
\usepackage{geometry}
\usepackage{booktabs}
\usepackage{tabularx,lipsum,environ}
\usepackage{multirow}
\usepackage{graphicx}
\usepackage[hidelinks]{hyperref}
\usepackage{enumitem}
\setlist[itemize]{noitemsep, topsep=0pt}
\usepackage{subcaption} 
\usepackage{setspace}
\usepackage{nicefrac}
\usepackage{tikz}
\usetikzlibrary{arrows.meta}
\newcommand{\XC}{{\sc exact 3-cover}}
\newtheorem{definition}{Definition}
\newtheorem{observation}{Observation}
\newtheorem{theorem}{Theorem}
\newtheorem{example}{Example}
\newtheorem{lemma}{Lemma}
\newtheorem{claim}{Claim}
\newtheorem{remark}{Remark}

\newcommand{\namelong}{Pareto-Improving Random Matchings under Ex-post Stability}
\newcommand{\name}{PIRMES}
\newcommand{\myprob}{{\sc IsConstrained-SD-Eff}}

\newcommand{\acset}{A}
\definecolor{zold}{rgb}{0,0.545,0}

\usepackage{tcolorbox} 

\usepackage[numbers]{natbib}
\newcommand{\capac}{c}
\newcommand{\M}{\mathcal{M}}

\newcommand{\rank}{\mathrm{rank}}

\newcommand{\comsmti}{{\sc Com-SMTI}}

\newcommand{\set}{\mathcal{S}}
\newtcolorbox{problem}[1][]{%
  title=#1,
  colframe=black,
  coltitle=white,
  colback=gray!10,
  boxrule=0.5mm,
  sharp corners,
  before skip=10pt,
  after skip=10pt,
  fonttitle=\bfseries,
}

\newcommand{\probleminput}{\textbf{Input:} }
\newcommand{\problemquestion}{\textbf{Question:} }

\newcommand{\bestgreedyEE}{{\sc Max-Constrained-Improve-Greedy}}
\newcommand{\bestimproveEE}{{\sc Max-Constrained-Improve}}
\newcommand{\avgrank}{\mathrm{avgrank}}

\title{Smart Lotteries in School Choice:\\ Ex-ante Pareto-Improvement with Ex-post Stability}
\author{
Haris Aziz\textsuperscript{1},
Péter Biró\textsuperscript{2},
Gergely Csáji\textsuperscript{2},
Tom Demeulemeester\textsuperscript{3}\\
\\
\textsuperscript{1}UNSW Sydney, Australia\\
\textsuperscript{2}ELTE Centre for Economic and Regional Studies, Budapest, Hungary\\
\textsuperscript{3}Department of Quantitative Economics, Maastricht University, The Netherlands
}
\date{}
\begin{document}
\maketitle

\begin{abstract}
In a typical school choice application, the students have strict preferences over the schools while the schools have coarse priorities over the students based on their distance and their enrolled siblings. The outcome of a centralized admission mechanism is then usually obtained by the Deferred Acceptance (DA) algorithm with random tie-breaking. Therefore, every possible outcome of this mechanism is a stable solution for the coarse priorities that will arise with certain probability. This implies a probabilistic assignment, where the admission probability for each student-school pair is specified. In this paper, we propose a new efficiency-improving stable `smart lottery' mechanism. We aim to improve the probabilistic assignment ex-ante in a stochastic dominance sense, while ensuring that the improved random matching is still ex-post stable, meaning that it can be decomposed into stable matchings regarding the original coarse priorities. Therefore, this smart lottery mechanism can provide a clear Pareto-improvement in expectation for any cardinal utilities compared to the standard DA with lottery solution, without sacrificing the stability of the final outcome. We show that although the underlying computational problem is NP-hard, we can solve the problem by using advanced optimization techniques such as integer programming with column generation. We conduct computational experiments on generated and real instances. Our results show that the welfare gains by our mechanism are substantially larger than the expected gains by standard methods that realize efficiency improvements after ties have already been broken.
\end{abstract}




\section{Introduction}

We consider the school choice problem, that is a two-sided many-to-one matching problem, in which students are matched to schools based on the preferences of the students and their priorities at the schools ~\citep{AbSo03b}. The underlying model originates from the Gale-Shapley college admission problem \cite{gale1962college}, where both the students and the colleges have preferences over the opposite sets. The important distinction between the two models is that in school choice we only consider the students as strategic agents, and we care about their welfare. Nevertheless, the same solution concept and algorithm is used in most applications, the \textit{Deferred Acceptance (DA)} algorithm by Gale and Shapley. The DA mechanism provides a \emph{stable matching} for college admission, and a so-called \emph{justified-envy-free}, \emph{non-wasteful} matching in the school choice context. These notions are equivalent, essentially meaning that every rejection is fairly made, as the given school filled its quota with higher priority students than the rejected applicant. See a historical overview on the DA in ~\citep{Roth08a}, and several books on the topic of stable matchings by economists~\citep{RoSo90a,haeringer2018market}, computer scientists \cite{Manl13a}, and multi-disciplinary researchers~\cite{echenique_etal2023}

When the priorities are coarse, the ties are typically broken by lotteries, and then DA is applied afterwards. To what extent the priorities are weak differ across school choice applications. In the US, the typical approach is to consider four basic categories based on the catchment areas and siblings, as in the New York high school admissions \cite{Atila_etal2005NY}. However, in some other applications, such as for high schools in Amsterdam, no priorities are used and only a single lottery is conducted \cite{RuijsOosterbeek2019}, so the mechanism is a so-called \emph{Random Serial Dictatorship} (RSD). The lottery can be a single lottery applied for all schools, or multiple lotteries applied separately for each school, see a recent study on this question \cite{Arnosti2023lottery}.

Note that although we focus our attention to school choice in this paper, the same setting can apply to numerous further applications, and therefore our new smart lottery mechanism can be potentially used more widely. We mention a few examples here. In many national university admission schemes, the applicants' scores determine their priorities, as in Hungary, Spain, Chile, and Ireland, and lotteries might be used to break the ties for students with the same score, as in Ireland (see descriptions on the European applications at the website of the Matching in Practice network). Resident allocation schemes may also use common evaluation scores for the candidates, as in Scotland, or just pure lotteries, as in Israel, where the hospitals do not express their preferences over the candidates and a single lottery determines the picking order  \cite{bronfman2018redesigning}. In course allocation, coarse priorities can be based on certain student categories and then refined by lotteries, as at TU Munich \cite{BichlerMerting2021} or at ELTE university in Hungary \cite{Rusznak_etal2021}.  

The DA mechanism is strategy-proof, and provides a student-optimal matching among the stable matchings. However, even for strict priorities, it is not Pareto-efficient \cite{AbSo03b}. When the priorities are coarse, and the solution is obtained by DA with lottery, then the matching may not even be \emph{constrained-efficient}, which means that a Pareto-improvement can be possible without violating the original coarse priorities (but only the priorities by the lotteries). Erdil and Ergin have proposed a deterministic mechanism for making stable Pareto-improvements, upon the solution by the DA with lottery, to obtain a constrained-efficient matching by their so-called \emph{stable improvement cycles algorithm} \cite{erdil2008s}. They demonstrated on some New York high school matching datasets that their approach could Pareto-improve the DA with lottery matching for 2-3\% of the students (without hurting the others, and by only violating the lottery-based priorities). Nevertheless, this proposal was rejected, partly due to the unavoidable manipulability of the mechanism \cite{Atila_etal2009NY}, but perhaps also because the violation of lottery-priorities can be problematic from a legal perspective in the US applications. In our paper, we use the Erdil-Ergin (EE) solution as a benchmark for comparison with the outcome of our new mechanism. Note also that as a new theoretical result related to the concept of Erdil and Ergin, we show that computing a highest average-rank constrained Pareto-improvement is a NP-hard problem even for this deterministic case.  

Our new proposal is a smart lottery mechanism based on ex-ante Pareto-improvement, while keeping the final matchings stable. To describe the idea,  we first have to introduce random matchings. When a lottery is applied in some matching application, the different outcomes (matchings) will occur with different probabilities. A \emph{random matching} (or \emph{probabilistic assignment}) describes the probability of each student-school pair to get realized in the final outcome. A random matching $p$
is ex-ante Pareto-dominated by another random matching $q$ in a \emph{stochastic dominance} sense if for every student-school pair the probability of this student getting that school or a more preferred one can only increase from $p$ to $q$, with a strict increase for at least one pair. Such an improvement is called \emph{sd-improvement}, and if a random matching cannot be sd-improved then it is called \emph{sd-efficient}.

In their seminal paper, Bogomolnaia and Moulin \cite{BoMo01a} have shown that for standard object allocation problems (with no priorities), the random matching obtained by the \textit{Random Serial Dictatorship (RSD)} rule is not necessarily sd-efficient, and they proposed an alternative mechanism, the so-called \emph{Probabilistic Serial} mechanism, to obtain an sd-efficient random matching. 

Recall that RSD is used for school choice in Amsterdam \cite{RuijsOosterbeek2019}, and also for resident allocation in Israel. However, a remarkable redesign has been implemented in the latter application, documented in detail by \citet{bronfman2018redesigning}. The new proposal is a two-step mechanism. In the first step, the random matching by RSD is Pareto-improved for estimated utilities. In the second step, a smart lottery is conducted, where the random matching is decomposed into a convex combination of desirable matchings, where the couples are getting positions close to each other with high probability.

Our approach follows a similar spirit, but differs in three main aspects. First, since we have a school choice setting with coarse priorities, as opposed to no priorities, stability plays a key role in our paper. As such, we improve upon the random matching obtained by DA with lotteries, instead of RSD. Secondly, in the first step of the mechanism we are a bit more conservative with the Pareto-improvement, and we only look for sd-improvements, i.e., a guaranteed Pareto-improvement for any underlying cardinal utilities, and not only for estimated utilities. Finally, we do not have couples in our setting, but we want to ensure that the final outcome is stable with respect to the coarse priorities.

If a random matching can be decomposed into stable matchings for coarse priorities, then it is called \emph{ex-post stable}, as defined by \citet{KeUn15a}. Therefore, in our mechanism we make the ex-ante sd-improvement in the first phase such that the random matching is ex-post stable, and then it can be decomposed into stable matchings for the smart lottery.

Formally, our \textit{\namelong} (\name) mechanism works as follows.  

\begin{tcolorbox}[title= \textit{\namelong} (\name), title filled]
	\begin{itemize}
\item STEP 0: Compute a random matching $p$, e.g., with DA and uniform tie-breaking
\item STEP 1: Compute another random matching $q$ that sd-improves upon $p$, such that the improvement is the largest possible in terms of average ranks and $q$ is ex-post stable. If no such ex-post stable random matching $q$ can be found (because $p$ was not ex-post stable), return $p$.
\item STEP 2: Decompose $q$ into a convex combination of stable matchings and conduct the smart lottery.
\end{itemize}
\end{tcolorbox}

Our mechanism can be applied to any random matching $p$. In this paper, we will mainly evaluate our mechanism by improving upon DA with uniform tie-breaking, and we refer to this mechanism as DA-\name. This choice is motivated by DA being a standard method in practical school choice problems, and by the guaranteed stability of the returned matching, unlike methods such as, e.g., Efficiency-Adjusted Deferred Acceptance (EADA) \cite{kesten2010school}. Nevertheless, we evaluate our mechanism when applied to other random matchings in Section \ref{sec:comp_exper}.

The main conceptual difference with the Erdil-Ergin approach is that we make the Pareto-improvement ex-ante, and we conduct the smart lottery only at the end of the mechanism.

We demonstrate the mechanism with a simple example.

\begin{example}
\label{ex:basic}

Consider the following problem instance.
We have four students $1,2,3,4$ and four schools $s_1,s_2,s_3,s_4$ with unit capacities, and with the following preferences and priorities.

    \begin{center}
        \begin{minipage}{0.3\textwidth}
\centering
\begin{tabular}{rl}
      $1:$ & $s_1\succ s_3 \succ s_4 \succ s_2$ \\
      $2:$ & $s_1\succ s_4 \succ s_3\succ s_2$ \\
      $3:$ & $s_2\succ s_3\succ s_4 \succ s_1$ \\
      $4:$ & $s_2\succ s_4\succ s_3 \succ s_1$ 
      \end{tabular}
\end{minipage}
\hspace{0.01\textwidth}
\begin{minipage}{0.3\textwidth}
\centering
\begin{tabular}{rl}
      $s_1:$ & $[1,2] \succ 3\succ 4$ \\
      $s_2:$ & $[3,4]\succ 1 \succ 2$ \\
      $s_3:$ & $2\succ 4\succ [1,3]$ \\
      $s_4:$ & $1\succ 3 \succ [2,4]$
      \end{tabular}
\end{minipage}
\end{center}
Here, the notation $[i,j]$ for some agents $i,j$ means that they have the same priority at a school. Figure \ref{fig:main_ex} illustrates the first two preferences of the students, while all schools are indifferent between the students who rank that school first or second.

The possible weakly stable matchings that can arise from DA with uniform single tie-breaking and their probabilities are:

\begin{itemize}
    \item $M_1= \{ (1,s_1),(2,s_3),(3,s_2),(4,s_4)\}$ with probability $\nicefrac{1}{8}$,
    \item $M_2=\{ (1,s_1),(2,s_4),(3,s_2),(4,s_3)\}$ with probability $\nicefrac{1}{8}$,
    \item $M_3=\{ (1,s_1),(2,s_4),(3,s_3),(4,s_2)\}$ with probability $\nicefrac{1}{4}$,
    \item $M_4= \{ (1,s_3),(2,s_1),(3,s_2),(4,s_4)\}$ with probability $\nicefrac{1}{4}$,
    \item $M_5= \{ (1,s_3),(2,s_1),(3,s_4),(4,s_2)\}$ with probability $\nicefrac{1}{8}$,
    \item $M_6=\{ (1,s_4),(2,s_1),(3,s_3),(4,s_2)\}$ with probability $\nicefrac{1}{8}$.
\end{itemize}

When we take the average probabilities for each student-school pair, we obtain the following random matching $p$:
\begin{itemize}
    \item \makebox[2.3cm][l]{first choices:}      
    $p(1,s_1)=p(2,s_1)=p(3,s_2)=p(4,s_2)=\nicefrac{1}{2}$,
    \item \makebox[2.3cm][l]{second choices:}
$p(1,s_3)=p(2,s_4)=p(3,s_3)=p(4,s_4)=\nicefrac{3}{8}$,
    \item \makebox[2.3cm][l]{third choices:} $p(1,s_4)=p(2,s_3)=p(3,s_4)=p(4,s_3)=\nicefrac{1}{8}$.
\end{itemize}
\noindent However, consider the following random matching $q$ which sd-dominates $p$, and is therefore strictly better for all students:

\begin{itemize}
    \item \makebox[2.3cm][l]{first choices:} $q(1,s_1)=q(2,s_1)=q(3,s_2)=q(4,s_2)=\nicefrac{1}{2}$,
    \item \makebox[2.3cm][l]{second choices:} $q(1,s_3)=q(2,s_4)=q(3,s_3)=q(4,s_4)=\nicefrac{1}{2}$.
\end{itemize}

\begin{figure}[tb]
\begin{center}
\begin{tikzpicture}[scale=0.8,
    every node/.style={draw, thick, minimum size=6.5mm, font = \footnotesize},
    circ/.style={circle},
    sq/.style={rectangle},
    empty/.style={draw= none},
    solid/.style={->, thick},
    dashedw/.style={->, thick, dashed}
]

\node[circ] (n1) at (-2, 3) {1};
\node[circ] (n2) at ( 2, 3) {2};
\node[circ] (n3) at (-2,-3) {3};
\node[circ] (n4) at ( 2,-3) {4};

\node[circ] (S1) at (0, 1) {$s_1$};
\node[circ] (S2) at (0,-1) {$s_2$};
\node[circ] (S3) at (-2,0) {$s_3$};
\node[circ] (S4) at ( 2,0) {$s_4$};

\draw[solid, bend left=15]  (n1) to node[midway, draw=none, fill=none, above] {1} (S1);
\draw[solid, bend right=15] (n2) to node[midway, draw=none, fill=none, above] {1} (S1);

\draw[solid, bend right=15] (n3) to node[midway, draw=none, fill=none, below] {1} (S2);
\draw[solid, bend left=15]  (n4) to node[midway, draw=none, fill=none, below] {1} (S2);

\draw[dashedw]  (n1) to node[midway, draw=none, fill=none, left] {2} (S3);
\draw[dashedw] (n3) to node[midway, draw=none, fill=none,left] {2} (S3);

\draw[dashedw] (n2) to node[midway, draw=none, fill=none, right] {2} (S4);
\draw[dashedw]  (n4) to node[midway, draw=none, fill=none, right] {2} (S4);

\end{tikzpicture}

\caption{\centering First two preferences of the students in Example \ref{ex:basic}.}
\label{fig:main_ex}
\end{center}
\end{figure}
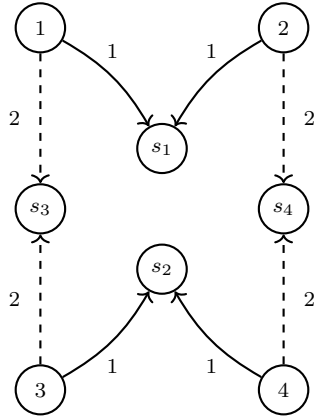

Note that this random matching $q$ is exactly $\frac{1}{2}(M_3+M_4)$, so $q$ is ex-post stable. 

\end{example}

Regarding the implementation of our mechanism, we face computational challenges. \citeauthor{ABCP24} proved recently that it is NP-complete  to decide whether a given matching is ex-post stable \cite{ABCP24}, and they provided an integer programming formulation to maximize the fraction of stable matchings in the decomposition of that matching. Likewise, we will show that the main computational task in our mechanism is NP-hard, but we demonstrate that with sophisticated optimization methods the problem becomes solvable for realistic instances, and we evaluate the expected welfare gains with computational simulations conducted on generated and real datasets.

\subsection*{Contributions}

Our main conceptual contribution is formalizing a smart lottery approach where an efficient random matching is computed which stochastically dominates a given random matching but maintains the requirement of being ex-post stable, thereby allowing it to be implemented by stable matchings. We formalize our approach via by a class of mechanisms called \textit{\namelong} (\name). Within this class, we especially focus on DA-\name~ where we improve upon DA with uniform tie-breaking.
Our results are summarized below.

\paragraph{Theoretical Results}

We present several complexity results for computational problems central to our study. Since our goal is to compute sd-improvements constrained to ex-post stability, a fundamental problem is checking whether a given  ex-post stable random matching constrained-sd-efficient. We prove that the problem is NP-hard even if the students have preferences of length at most 3 and every capacity is 1 and the random matching is a uniform combination of student optimal stable matchings with respect to some tie-breakings.
    The result contrasts with the facts that (i) testing sd-efficiency of any given random matching is polynomial-time solvable and (ii) testing constrained efficiency of a deterministic weakly stable matching is also polynomial-time solvable~\cite{erdil2008s}.
    Regarding the latter, we show that finding a weakly stable matching $M'$ that Pareto-dominates a weakly stable matching $M$ and improves average rank by at least a given amount is NP-hard, hence it is computationally challenging to find "improvement paths" that lead to the best final matching. 
    
    We also show that given a matching $M$, it is NP-hard to decide if there is a weakly stable matching $N$, which Pareto-dominates $M$ for the students, or an ex-post stable random matching $p$ that sd-dominates $M$ for the students. Furthermore, we show it is also NP-hard to compute the uniform combination of all weakly stable matchings that arise as a student optimal stable matching with respect to some tie-breaking. 
    
    Finally, we show that no strategy-proof, ex-post stable, and constrained-sd-efficient random mechanism exists. Our complexity results motivate an operations research approach for our paradigm. 

    \paragraph{OR approach and Experiments} In light of these hardness results, we propose a column generation framework in Section \ref{sec:column_gen} to find the ex-post stable random matching of minimal average rank that sd-dominates a given random matching $p$. In essence, this framework will start by finding a lottery over a subset of the weakly stable matchings, and will continuously generate additional weakly stable matchings until a certificate of optimality is obtained. 
    
    Our simulations in Section \ref{sec:comp_exper} indicate that, without losing ex-post stability, our proposed solution is capable of finding random matchings with substantially lower average ranks than the standard approach by \citet{erdil2008s} (EE), which resolves inefficiencies in DA after ties have been broken. Our proposed solution achieves as least as many beneficiaries as EE, while obtaining substantially larger welfare gains for those improving students. For real data of an Estonian kindergarten application and for a particular priority structure, for example, our proposed solution realizes a sixfold increase, compared to EE, in both the fraction of improving students upon DA, as well as their average improvement in rank. Lastly, for some structural properties of the preferences and priorities, our method manages to sd-dominate the expected outcome by existing mechanisms that remove inefficiencies in DA but do \textit{not} require stability, such as EADA \cite{kesten2010school}.

\section{Related Work}

The theory of stable matchings has a long history, surveyed in several monographs~\citep{GuIr89a,Manl13a,RoSo90a,echenique_etal2023}. Foundational work on the stable matching polytope without ties~\citet{RRV93a} and its linear programming connections~\citep{TeSe98a} provides important insights into random stable matchings. 
While the classical stable marriage problem assumes strict preferences~\citep{gale1962college}, the introduction of ties~\citep{IMS00a} significantly changed the landscape: although efficient algorithms exist for weak, strong, and super stability, many related problems become NP-hard when ties are allowed~\citep{MII+02a}. 

The issue of the achieving stability and Pareto optimality is central to market design. 
In general, Pareto optimality and stability are incompatible~\citep{AbSo03b}. Therefore, the target is typically revised to finding efficient matchings within the space of stable matchings whether we consider random matchings or discrete matchings. When both sides have strict preferences / priorities, the Deferred Acceptance algorithm gives a matching which is Pareto optimal among stable matchings. However, this is not necessarily the case when the preferences and priorities allow for ties~\citep{erdil2008s}. 

Erdil and Ergin~\cite{erdil2008s} study two-sided matching markets in which agents may be indifferent between multiple partners. They characterize matchings which are ordinally efficient within the stable set and propose an algorithm that takes a stable matching as input and iteratively eliminates stable improvement cycles to reach a Pareto-efficient matching constrained to stable matchings. 
They highlight that commonly used procedures, such as Deferred Acceptance with arbitrary tie-breaking, can produce stable outcomes that are Pareto-dominated by other stable matchings. As a result, tie-breaking rules are not innocuous: they affect welfare even when stability is preserved. In contrast to the work Erdil and Ergin, we allow for randomized matchings which lets us consider a significantly larger space of ex-ante outcomes. Our paper can be viewed as the revisiting the agenda of Erdil and Ergin in the context of randomized matchings. Interestingly, our computational experiments show that realizing ex-ante efficiency improvements (i.e., \emph{before} random tie-breaking) can lead to substantially larger welfare gains in comparison to realizing efficiency improvements \emph{after} ties have already been broken (as proposed by Erdil \& Ergin).

Inspired by the observations by \citet{erdil2008s}, there is an increased focus on efficiency improvements while keeping stability consideration or allowing to weaken them. One example is the Efficiency-Adjusted DA mechanism (EADA) of \citet{kesten2010school}, and the improved algorithm by \citet{tang2014new}, which allows students to waive their priorities whenever this does not affect their own assignment, thus possibly resulting in unstable matchings. Other examples of mechanisms aiming to reduce inefficiencies in DA while limiting stability violations include the studies on legal matchings \cite{ehlers2020legal}, \cite{faenza2022legal}, weakly stable matchings \cite{tang2021weak}, and priority-efficient matchings \cite{reny2022efficient}. Most of these works are within the space of discrete matchings. 

A mechanism-design approach to stable \textit{random} matchings was initiated by~\citet{KeUn15a}, who studied the tension between stability and efficiency and proposed propose the Fractional Deferred Acceptance mechanism with Trading (FDAT) satisfying on ex-ante stability which is a very demanding concept in the context of random matching. They find a constrained ordinally efficient solution within the class of ex-ante stable mechanisms.
In this paper, we focus on the more standard and widely-studied but weaker ex-post stability concept which also allows to access a larger set of matchings hence allowing for more efficiency.  Note, that there is no clear theoretical efficiency comparison between ex-ante stability and the method by Erdil and Ergin, nor to DA with tie-breaking.

\citet{ABCP24} prove that when either side has ties in the preferences/priorities, testing ex-post stability is NP-complete.
The result even holds if both sides have dichotomous preferences.
They also consider stronger versions of ex-post stability (and prove that they can be tested in polynomial time.
Another related result is that checking whether a random assignment is ex-post Pareto optimal or not is not is coNP-complete\citep{AMXY15a}. If we focus on sd-efficiency, checking if a random assignment can be stochastically dominated by another random matching (or equivalently by an ex-post efficient random matching) can be done with an LP in polynomial time. On the other hand,  checking if it can be dominated by an ex-post stable random matching is NP-hard which is our new contrasting result.

\citet{AzBr22c} provides a general vigilant eating approach to computing desirable matchings subject to general feasibility constraints. In our problem, capturing ex-post stability in the form of feasibility constraints is challenging as testing ex-post stability is coNP-complete~\citep{ABCP24}. 

As our main proposal is a new school choice mechanism with smart lotteries, our paper contributes to the literature on smart lotteries for matching problems. First, as already discussed in the introduction, Bronfman et al.\ \cite{bronfman2018redesigning} study the Israeli resident allocation, and Pareto-improve upon RSD for estimated utilities by using matchings that assign couples close to the same or geographically close hospitals. Second, Ashlagi and Shi \cite{ashlagi2014improving} study how to increase neighborhood cohesion in school choice, which can, in turn, decrease busing costs. They find a lottery with the same marginals as RSD (i.e., no ex-ante improvement) by using matchings that maximize neighborhood cohesion. They prove the NP-hardness of this problem, and propose a Birkhoff-von Neumann-inspired heuristic, which they evaluate on school choice data from Boston. Third, Demeulemeester et al.\ \cite{DGHL23} focus on finding lotteries that maximize worst-case criteria. For example, they study the problem of finding a lottery with the same marginals as a given random matching, while minimizing the worst-case number of unassigned students by any of the matchings in the support of the lottery, and they show the NP-hardness of this problem. Their algorithmic framework is similar to ours, as they also propose a column generation in this setting, which they evaluate on Belgian school choice data. The paper is also related to the `best of both worlds' literature in which random outcomes satisfying desirable ex-ante and ex-post properties are sought~\citep{AFS+23a}.

\section{Preliminaries}

We consider many-to-one matching markets. Here, we have a bipartite market, where on one side, we are given a set of \emph{students} $N$, and on the other side, a set of \emph{schools} $S$. 
Each student $i\in N$ has a \emph{strict preference list} $\succ_i$.  We use the notation $s_j\succ_i s_k$ to denote that $i$ strictly prefers $s_j$ to $s_k$. We denote the set of acceptable schools of a student $i$ by $\acset (i)$. We assume that any acceptable school $s_j\in \acset (i)$ is strictly preferred by $i$ to $\emptyset$ (corresponding to not assigning $i$ to any school), but any unacceptable school is worse than $\emptyset$. 
 We define the \emph{rank} of school $s\in S$ for student $i$ (denoted by $\rank_i(s)$) to be the maximum possible size $k$ of a subset $\{ s_{j_1},\dots, s_{j_k}\}$, such that $s_{j_1}\succ_is_{j_2}\succ_i\cdots \succ_i s_{j_k}= s$. That is, the best school for $i$ has rank 1, the second best school rank 2, etc.
Each school $s_j\in S$ has a capacity $\capac[s_j]$ specifying the number of available positions at $s_j$. It also has a weak priority list $\succsim_{s_j}$ over the students $i\in N$ that find $s_j$ acceptable, where $k \sim_{s_j} \ell$ denotes that school $s_j$ is indifferent between students $k, \ell \in N$. For a school $s_j$, the set of students who find it acceptable is denoted by $\acset (s_j)$. 

A \emph{tie-breaking} of a priority list $\succsim_x$ is a strict priority list $\succ'_x$ that is consistent with $\succsim_x$, that is, $a\succ_x b$ implies $a\succ'_x b$.

\begin{definition} An instance $I$ of a \emph{many-to-one matching market} is represented by a tuple $(G,\capac,\succsim)$, where $G$ is a bipartite graph $G=(N,S,E)$, with the set of edges $E$ being defined by the mutual acceptability relations, that is, $(i,s_j)\in E$ if and only if $s_j\in \acset (i)$, $\capac$ is the capacity vector of the schools, while $\succsim$ is the preference profile consisting of the preference lists $\succ_i$ for $i\in N$ and the priority lists $\succsim_{s_j}$ for $s_j\in S$. 
\end{definition}
In this framework, we use the notation $E(v)$ to denote the set of edges incident to vertex $v$.

We say that a vector $M\in  \{ 0,1\}^{E}$ is a \emph{matching} of $G$, if it holds that for any $i\in N$, $\sum_{e\in E(i)}M(e)\le 1$ and for any $s_j\in S$, $\sum_{e\in E(s_j)}M(e) \le \capac[s_j]$. 
We say that a vector $p\in [0,1]^E$ is a \emph{random matching} of $G$, if $p$ can be written as a convex combination of matchings, that is, there exists matchings $M_1,\dots, M_k$, and nonnegative numbers $\lambda_1,\dots, \lambda_k$ such that $\sum_{l=1}^k\lambda_l = 1$ and $p=\sum_{l=1}^k\lambda_lM_l$. Using such a decomposition, one can also think of $p$ as a random lottery over the matchings $M_1,\dots, M_k$, where the probability of choosing the matching $M_l$ is $\lambda_l$.

Given a matching $M$, we use the notations $M(i)=\{ s_j\in S\mid M(i,s_j)=1\}$, that is the school that $i$ gets assigned to (or $\emptyset$ otherwise) and $M(s_j)=\{ i\in N\mid M(i,s_j)=1\}$, that is, the set of students assigned to $s_j$ in $M$. For a random matching $p$, we let $p(i)= \{ s_j\in S \mid p(i,s_j)>0\}$ and $p(s_j)=\{ i\in N\mid p(i,s_j)>0\}$.

\subsection{Stability and Pareto-optimality}

One of the most widely used fairness notions for matchings is weak stability. We say that a matching $M$ is \emph{weakly stable}, if there exists no student-school pair $(i,s_j)$, such that $s_j\succ_i M(i)$ and either (i) $|M(s_j)|<\capac[s_j]$ or (ii) $i\succ_{s_j} i'$ for some $i'\in M(s_j)$.

A random matching $p$ is called \emph{ex-post stable} if $p$ can be written as a convex combination of weakly stable matchings, that is, there exists weakly stable matchings $M_1,\dots, M_k$, and nonnegative numbers $\lambda_1,\dots, \lambda_k$ such that $\sum_{l=1}^k\lambda_l=1$ and $p=\sum_{l=1}^k\lambda_lM_l$. Note that this decomposition may not be unique. Furthermore, deciding if a random matching $p$ admits such a decomposition, i.e., it is ex-post stable, is NP-complete~\cite{ABCP24}.

 We say that a random matching $q$ \emph{stochastically Pareto-dominates} a random matching $p$ \emph{for the students} (or \emph{sd-dominates} for short), if for any student $i$ and school $s_j\in \acset (i)$, we have that $\sum_{s_l:s_l\succsim_i s_j}q(i,s_l)\ge \sum_{s_l:s_l\succsim_i s_j}p(i,s_l)$ and there exists at least one student $i$ and school $s_j\in \acset (i)$ such that strict inequality holds. That is, for any student $i$, and for any threshold school $s_j$, the probability that $i$ gets accepted to a school at least as good for her as $s_j$ cannot decrease. We say that a random matching $p$ is \emph{constrained-sd-efficient}, if it is (i) ex-post stable, and (ii) no other ex-post stable random matching $q$ sd-dominates $p$.

The main theoretical question of this paper is to settle the complexity of deciding whether a random matching $p$ is constrained-sd-efficient or not. Formally, we define the following decision problem.

\begin{problem}[\myprob]
  \probleminput{A many-to-one market  $(G,\capac,\succsim)$ and an ex-post stable random matching $p$.}
  
  \problemquestion{Is $p$ constrained-sd-efficient?}
\end{problem}

\subsection{Stable Improvement Cycles}

In the deterministic setting, the problem of deciding whether a weakly stable matching $M$ can be Pareto-improved (for the students) by another weakly stable matching $M'$ was studied by Erdil and Ergin~\cite{erdil2008s}. They defined a matching $M$ to be \emph{constrained-efficient} if $M$ is weakly stable and no weakly stable matching $M'$ Pareto-dominates it for the students. 

Given a many-to-one matching market $(G,\capac,\succsim)$ and a stable matching $M$, Erdil and Ergin~\cite{erdil2008s} define a \emph{stable improvement cycle} as follows.

\begin{definition}
\label{def:SIC}
    Students $i_1,\dots, i_k$ form a \emph{stable improvement cycle} with respect to a weakly stable matching $M$, if
    \begin{enumerate}
    \item $M(i_j)\ne \emptyset$ for any $j\in [k]$,
        \item $M(i_{j+1})\succ_{i_j}M(i_j)$ for $j\in [k]$ with $k+1:=1$
        \item For any $j\in [k]$ and $s=M(i_{j+1})$, we have that $i_j\succsim_{s} i'$ for any $i'\in N$ with $s\succ_{i'}M(i')$. 
    \end{enumerate}
\end{definition}

\begin{theorem}[Erdil and Ergin~\cite{erdil2008s}]\label{thm:erdilergin}
    A weakly stable matching $M$ is constrained-efficient if and only if $M$ does not admit any stable improvement cycles. Furthermore, deciding whether $M$ admits a stable improvement cycle, and if yes, finding such a cycle can be done in polynomial time.
\end{theorem}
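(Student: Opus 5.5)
We prove the two directions of the characterization separately and then give the algorithm.

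\emph{Cycle implies not constrained-efficient.} Suppose $i_1,\dots,i_k$ form a stable improvement cycle with respect to $M$. Define $M'$ from $M$ by moving every $i_j$ to $M(i_{j+1})$ and leaving all other assignments unchanged.

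Feasibility is preserved, because every school loses and gains the same number of cycle students. By condition 2, each $i_j$ strictly improves and nobody else changes, so $M'$ Pareto-dominates $M$ for the students.

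For weak stability, assume a pair $(i,s)$ blocks $M'$. Since no school gained a free seat and students only weakly improved, $s$ must be full in $M'$, $s\succ_i M'(i)\succsim_i M(i)$, and $i\succ_s i'$ for some $i'\in M'(s)$. There are two cases.
\begin{itemize}
\item If $i'$ was already at $s$ in $M$, then $(i,s)$ blocks $M$ itself, a contradiction.
\item Otherwise $i'=i_j$ with $s=M(i_{j+1})$. Since $s\succ_i M(i)$, condition 3 gives $i_j\succsim_s i$, contradicting $i\succ_s i_j$.
\end{itemize}
Hence $M'$ is weakly stable, and $M$ is not constrained-efficient.

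\emph{Not constrained-efficient implies a cycle.} This is the main obstacle. Let $M'$ be weakly stable and Pareto-dominate $M$, and let $D$ be the set of students who strictly improve. I would plan to show that $D$ contains a stable improvement cycle.

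First, use the weak stability of $M$ to show that students in $D$ are not moving into seats that were free in $M$. If some $i\in D$ had $M'(i)$ not full in $M$, then $i$ would block $M$. More generally, counting seats at each school shows that the students in $D$ entering a school $s$ displace students of $M(s)$ who also lie in $D$, since students outside $D$ keep their school. Note that $D$-students are matched in $M'$, and by stability of $M$ a student unassigned in $M$ cannot improve into a school with a free seat.

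Next, build the digraph on $D$ with an arc $i\to i'$ whenever $i'\in M(s)\setminus M'(s)$ for $s=M'(i)$. Every node has out-degree at least one, so the digraph contains a cycle. Condition 2 holds automatically along such a cycle.

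The delicate step is condition 3. We need that each $i_j$ has top priority at $s=M(i_{j+1})$ among all students desiring $s$ under $M$. To get this, I would refine the choice of arcs: from $i$ we may point to any student $i'\in M(s)$, and a student $i''$ desiring $s$ with $i''\succ_s i$ must be considered. Use weak stability of $M'$: such an $i''$ with $s\succ_{i''}M'(i'')$ would block $M'$ because $i\in M'(s)$. So every higher-priority student wanting $s$ under $M$ must weakly prefer $M'(i'')$ to $s$, and hence must themselves be in $D$ at $s$ or better. I would then follow the approach of Erdil and Ergin: restrict to the students with maximal priority at each school among those desiring it, and argue via a careful choice of arcs (pointing only from top-priority desirers) that a cycle survives. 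If this gets messy, I would cite their original proof for this direction.

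\emph{Algorithm.} For each school $s$, compute $D_s=\{i: s\succ_i M(i)\}$ and its top-priority class $T_s$. Form the digraph on matched students with an arc $i\to i'$ whenever $i\in T_{M(i')}$. By construction, cycles in this digraph are exactly the stable improvement cycles. Detecting and finding a cycle by DFS takes polynomial time.
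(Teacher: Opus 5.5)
Your proof that a stable improvement cycle yields a weakly stable Pareto-improvement is correct, and so is the polynomial-time detection via the envy digraph. The paper does not prove this theorem itself; it imports it from Erdil and Ergin. The converse, however, is not established. Moreover, the digraph you build for it cannot be rescued by choosing arcs more carefully inside it: arcs from a student to someone who vacated the seat she takes in $M'$ can violate condition 3 on every cycle.

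Take unit-capacity schools $s_1,\dots,s_4$ and the following data:
\begin{itemize}
\item preferences $1: s_2\succ s_4\succ s_1$, $2: s_1\succ s_2$, $3: s_4\succ s_2\succ s_3$, $4: s_3\succ s_4$;
\item priorities $s_1:[1,2]$, $s_2:[2,3]\succ 1$, $s_3:[3,4]$, $s_4:[4,1]\succ 3$;
\item matchings $M=\{(1,s_1),(2,s_2),(3,s_3),(4,s_4)\}$ and $M'=\{(1,s_2),(2,s_1),(3,s_4),(4,s_3)\}$.
\end{itemize}
Both matchings are weakly stable and every student improves, but your digraph consists of the 2-cycles $(1,2)$ and $(3,4)$. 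The first fails condition 3 at $s_2$, since student 3 desires $s_2$ and $3\succ_{s_2}1$. The second fails at $s_4$, since student 1 desires $s_4$ and $1\succ_{s_4}3$. Deleting the arcs whose tail is not a top-priority desirer of the school it enters leaves no cycle at all. The unique stable improvement cycle is $3,2,1,4$, which interleaves the two trades of $M'$. The student who must point at a school is a top-priority desirer of it, and she may be taking an even better seat in $M'$.

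Three ingredients are missing. Let $S^*$ be the set of schools that receive a student of $D$ in $M'$.

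\emph{First, a global count.} Each $s\in S^*$ is full in $M$, so $|M(s)\setminus M'(s)|\ge|M'(s)\setminus M(s)|$. Summing over $S^*$ gives $|\{j\in D: M(j)\in S^*\}|\ge|D|$, hence every student of $D$ is matched in $M$ to a school of $S^*$. Your local seat count only yields out-degree at least one in your digraph.

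\emph{Second, top-priority desirers are improvers.} For $s\in S^*$, the top class $T_s$ of students desiring $s$ under $M$ meets $D$. Pick $i\in M'(s)\setminus M(s)$. Either $i\in T_s$, or every member of $T_s$ strictly outranks $i$ at $s$ and then lies in $D$ by your own observation from the stability of $M'$.

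\emph{Third, a functional-graph cycle.} Choose $t_s\in T_s\cap D$ and iterate $s\mapsto M(t_s)$, which maps $S^*$ into itself by the count. A cycle of this map, with its students $t_s$ listed in reverse order, is a stable improvement cycle.

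Without the global count, a top-priority desirer could be a dead end: unmatched in $M$, or sitting at a school no student of $D$ enters. This is why your sketch, including the fallback of pointing only from top-priority desirers, does not close.
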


To find stable improvement cycles for a stable matching $M$, they define a directed envy graph $D_M$ as follows. 

\begin{itemize}
    \item The vertex set of $D_M$ is $N$, i.e., the set of students. 
    \item For any $i,j\in N$ we have a directed arc $(i,j)$ if and only if (i) $i,j$ are both matched in $M$, (ii) $M(j)\succ_i M(i)$ and (iii) $i\succsim_{M(j)} i'$ for any $i'\in N$ with $M(j)\succ_{i'}M(i')$.
\end{itemize}

Given a set $\set$ of disjoint stable improvement cycles in $D_M$, we can create a new stable matching $M/\set$ by eliminating the cycles in $\set$, i.e. by assigning $i_j$ to $M(i_{j+1})$ for each stable improvement cycle in $\set$~\citep{erdil2008s}.

\section{Theoretical Results}
\label{sec:theory}
In this section, we resolve the complexity of {\myprob} and related problems. The omitted proofs are in Appendix \ref{app:proofs}. 

\subsection{Efficiency Benefits of the {\name} Approach}

We start our study by first recalling example that motivates our approach over existing approaches in terms of efficiency subject to ex-post stability. 
\begin{example}[Comparison with Previous Approaches]
Consider Example~\ref{ex:basic}. In this example, our approach starts from a random matching $p$ and finds an sd-improvement $q=\frac{1}{2}(M_3+M_4)$ where $M_3$ and $M_4$ are specified in  Example~\ref{ex:basic}.
    {Note also that any mechanism that Pareto-improves the matchings ex-post, such as the stable improvement cycles by \citet{erdil2008s} and EADA by~\citet{kesten2010school}, cannot improve upon the random matching $p$ in this instance, as all tie-breakings lead to student-optimal stable matchings that are already Pareto-optimal for the students. This example illustrates that all these methods can be stochastically dominated by ex-post stable lotteries.}

To gain an intuition into why the sd-improvement is possible, observe that in DA with lottery all agents receive their third choice with positive probability. Take for example agent 1. If she fails to win the lottery against 2 for $s_1$ and student 3 also fails to get $s_2$, then they will have to compete for school $s_3$, and only one of them can get it, decided by a lottery. However, if we coordinate the first two lotteries for the top choices, by making them correlated, then we can avoid the competitions for the second choices. For instance, we can make sure that whenever student 1 wins the lottery for her top choice, then her potential competitor, student 3, loses the lottery for her top choice, and gets her second choice, and vice versa.\footnote{Note that the idea of correlated lotteries for school choice was also suggested in \citet{ashlagi2014improving}. However, their goal was to improve the probability of assigning the students from a neighborhood to the same schools, which can decrease the busing costs for the city.}
\end{example}

Next, we compare with the Fractional Deferred Acceptance with Trading (FDAT) mechanism of \citet{KeUn15a}. FDAT starts from a fractional version of the Deferred Acceptance outcome and then allows agents to trade probability shares of assignments. The initial fractional matching satisfies ex-ante stability, which is a stronger stability notion than ex-post stability, while the subsequent trading phase lets agents exchange portions of their assignments to improve efficiency according to their preferences, without violating ex-ante stability.

If we do allow randomization but target the stronger notion of ex-ante stability that is achieved by FDAT, then it severely restricts the welfare benefits in comparison to ex-post stable random matchings, as demonstrated in the following example. 

\begin{example}[Comparison with FDAT]
\label{sec:fdat}
We have eight students $1,2,3,4,5,6,7,8$ and eight schools $s_1,s_2,s_3,s_4,s_5,s_6,s_7,s_8$ with unit capacities, and with the following preferences and priorities.

    \begin{center}
        \begin{minipage}{0.3\textwidth}
\centering
\begin{tabular}{rl}
      $1:$ & $s_1\succ s_3 \succ s_4 \succ s_2$ \\
      $2:$ & $s_1\succ s_4 \succ s_3\succ s_2$ \\
      $3:$ & $s_2\succ s_4\succ s_3 \succ s_1$  \\
      $4:$ & $s_2\succ s_3\succ s_4 \succ s_1$ \\
      $5:$ & $s_5\succ s_4\succ s_6$ \\
      $6:$ & $s_7\succ s_3\succ s_8$ \\
      $7:$ & $s_5\succ s_6$  \\
      $8:$ & $s_7\succ s_8$ 
      \end{tabular}
\end{minipage}
\hspace{0.03\textwidth}
\begin{minipage}{0.3\textwidth}
\centering
\begin{tabular}{rl}
      $s_1:$ & $[3,4]\succ [1,2]$ \\
      $s_2:$ & $[1,2]\succ [3,4]$ \\
      $s_3:$ & $3\succ 2\succ [1,6]\succ 4$ \\
      $s_4:$ & $4\succ 1 \succ [2,5]\succ 3$\\
      $s_5:$ & $[7,5]$\\
      $s_6:$ & $7\succ 5$\\
      $s_7:$ &$[6,8] $\\
      $s_8:$ & $8\succ 6$.
      \end{tabular}
\end{minipage}
\end{center}
Here, the notation $[i,j]$ for some agents $i,j$ means that they have the same priority.

FDAT outputs the random matching $p$ where students $7$ and $8$ receive their first two choices with probability $\nicefrac{1}{2}$, while all other students receive their first and third choices with probability $\nicefrac{1}{2}$. In particular, $p(1,s_1)=p(1,s_4)=p(2,s_1)=p(2,s_3)=p(3,s_2)=p(3,s_3)=p(4,s_2)=p(4,s_4)=p(5,s_5)=p(5,s_6)=p(6,s_7)=p(6,s_8)=p(7,s_5)=p(7,s_6)=p(8,s_7)=p(7,s_8)=\nicefrac{1}{2}$ and $p=0$ otherwise. 

Then, FDAT finds a possible improvement, where $1$ and $2$ trade their third choices for their second choices, resulting in a random matching where students $1$ and $2$ receive their first two choices with probability $\nicefrac{1}{2}$. 

Now, the key property of this instance is that students $3$ and $4$ could also switch their third-ranked schools with each other to obtain their second-ranked schools. However, this cannot happen in the FDAT, as the resulting random matching violates ex-ante stability, because the pairs $(5,s_4)$ and $(6,s_3)$ would block in the ex-ante sense. 

We will show that the random matching $p'$ that is obtained by allowing students $3$ and $4$ to swap probabilities for their third choices is ex-post stable. Note that $p'$ assigns students $5$ and $6$ to their first and third choices with probability $\nicefrac{1}{2}$, while all other students receive their first two choices with probability $\nicefrac{1}{2}$. Hence, $p'$ clearly sd-dominates FDAT as students $3$ and $4$ are strictly better off, while the other students are indifferent between both random matchings.

Observe that $p'$ is ex-post stable, as $p'= \frac{1}{2}(M_1+M_2)$, where $M_1$ and $M_2$ are the following weakly stable matchings:
\begin{itemize}
    \item $M_1= \{ (1,s_1),(2,s_4),(3,s_2),(4,s_3),(5,s_6),(6,s_7),(7,s_5),(8,s_8)\}$,
    \item $M_2 = \{ (1,s_3),(2,s_1), (3,s_4),(4,s_2), (5,s_5),(6,s_8),(7,s_6),(8,s_7)\}$.
\end{itemize}

We show that $p'$ weakly sd-dominates the result of the DA with uniform tie-breakings. 
For this, first observe that every student has at most $\frac{1}{2}$ probability of being matched to their first choice, since each top school is a top school of two students, where these two are tied. This implies that for students $1,2,3,4,7,8$, $p'$ sd-dominates the uniform tie-breaking DA output. Next, suppose that $5$ is with $s_4$ in the DA output after some tie-breaking. Then, students $1$ and $4$ must be with better schools than $s_4$. If both of them are at their top schools ($s_1$ and $s_2$), then either student $2$ is unmatched and blocks with $s_2$ or student $3$ is unmatched and blocks with $s_1$. Otherwise, student $1$ or $4$ must be at $s_3$. Then, as one of students $2$ and $3$ must be unmatched, they block with $s_3$, contradiction. Hence, $5$ is never with $s_4$ in the DA output after some tie-breaking. Similarly, by a symmetric argument, $6$ is never with $s_3$ in the DA output after some tie-breaking. Hence, for them, the uniform tie-breaking DA output is also sd-dominated by $p'$.

Also, $p'$ cannot be sd-improved, because if $5$ or $6$ increase their probability for a top-2 school, then a student there must get a worse than second school with positive probability.
Hence, $p'$ is a possible optimal output of our approach.
\end{example}

\subsection{Constrained-sd-efficiency}

We start by examining the complexity of \myprob. First, let us prove a  useful lemma, which will help us restrict the support set of constrained-sd-efficient matchings to constrained-efficient matchings.

\begin{lemma}\label{thm:onlystuopts}
   Take a many-to-one matching instance $I=(G,\capac, \succsim)$. If $p$ is constrained-sd-efficient, then in any decomposition $p=\sum_l \lambda_lM_l$ of $p$ into weakly stable matchings, each $M_i$ is constrained-efficient and a student optimal stable matching with respect to some tie-breaking.
\end{lemma}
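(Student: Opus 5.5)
The plan is a replacement argument for constrained-efficiency, followed by a tie-breaking construction that turns constrained-efficiency into student-optimality. Throughout, I take the decomposition $p=\sum_l \lambda_l M_l$ to consist of weakly stable matchings with $\lambda_l>0$. Matchings with zero weight can be dropped, and I read the statement as being about the matchings actually in the support.

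\emph{Step 1: each $M_l$ is constrained-efficient.} Suppose some $M_l$ is not. Then there is a weakly stable matching $M'$ that Pareto-dominates $M_l$ for the students: every student weakly prefers $M'(i)$ to $M_l(i)$, and at least one student $i^*$ strictly prefers it. Set $q=\sum_{k\neq l}\lambda_k M_k+\lambda_l M'$. This is a convex combination of weakly stable matchings, so $q$ is ex-post stable.

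Fix a student $i$ and a threshold school $s_j$. The cumulative probability $\sum_{s:s\succsim_i s_j} q(i,s)$ differs from the corresponding quantity for $p$ by $\lambda_l\bigl([M'(i)\succsim_i s_j]-[M_l(i)\succsim_i s_j]\bigr)$. This difference is nonnegative because $M'(i)\succsim_i M_l(i)$. For $i^*$ and the threshold $s_j=M'(i^*)$, it equals $\lambda_l>0$, since $M_l(i^*)$ is strictly worse than $M'(i^*)$; this uses that preferences are strict and that $\emptyset$ is below all acceptable schools. Hence $q$ sd-dominates $p$, contradicting constrained-sd-efficiency.

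\emph{Step 2: a constrained-efficient $M$ is student-optimal for some tie-breaking.} Given a weakly stable $M$, define a tie-breaking $\succ'$ as follows. At each school $s$, within every tie class, place the students of $M(s)$ above the students not in $M(s)$, and break the remaining ties arbitrarily.

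First, $M$ is stable under $\succ'$. A blocking pair $(i,s)$ would have $s\succ_i M(i)$, and either $s$ has a free seat, or $i\succ'_s i'$ for some $i'\in M(s)$. A free seat would already violate weak stability. In the other case, since $i\notin M(s)$, the construction forces $i\succ_s i'$ strictly, which again contradicts weak stability.

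Now let $M^*$ be the student-optimal stable matching for $\succ'$, obtained by student-proposing DA. Every student weakly prefers $M^*$ to any $\succ'$-stable matching, in particular to $M$. Moreover, $M^*$ is weakly stable for the original priorities, since any strict blocking pair under $\succsim$ also blocks under $\succ'$. If $M^*\neq M$, some student is matched differently, and by strictness of preferences that student is strictly better off in $M^*$. Then $M^*$ Pareto-dominates $M$ while being weakly stable, contradicting constrained-efficiency. Hence $M=M^*$.

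The main obstacle I expect is Step 2: choosing the tie-breaking so that $M$ is genuinely stable under it. Once that holds, student-optimality of DA together with the preservation of weak stability under tie-breaking closes the argument. Step 1 is a routine check of the sd inequalities.
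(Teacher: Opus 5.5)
Your proof is correct and takes essentially the paper's approach: you replace one matching in the decomposition by a Pareto-improving weakly stable matching, and you use the same tie-breaking that ranks the students of $M(s)$ first within each tie class at $s$. The differences are minor: you derive student-optimality deterministically from constrained-efficiency rather than by a second replacement argument, and you explicitly restrict to matchings with $\lambda_l>0$, which the paper uses tacitly (the statement fails for zero-weight matchings).
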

\begin{proof}

Suppose for the contrary that $p=\sum_i \lambda_iM_i$ but $M_j$ is not a student optimal stable matching with respect to any tie-breaking of the preference profile $\succsim$. Take a tie-breaking $\succ'$, satisfying that for each school $s$, if $i\in M(s)$ but $i'\notin M(s)$ and $i'\sim_s i$, then $i\succ'_s i'$. 

It is straightforward to verify that $M_j$ is a stable matching with respect to the profile $\succ'$.
By our assumption, we know that the student optimal stable matching $M_j'$ with respect to $\succ'$ is weakly better for all students and strictly better for at least one. By the construction of the tie-breaking $\succ'$, $M_j'$ gives a weakly stable matching that is weakly better for every student and strictly better for at least one. Then, $q=\sum_{l\ne j}\lambda_lM_l + \lambda_jM_j'$, is an ex-post stable random matching  (as witnessed by this decomposition) that SD-dominates $p$, a contradiction to $p$ being constrained-sd-efficient.

Similarly, if $M_j$ is not constrained-efficient, then there exists a weakly stable matching $M_j'$ that Pareto-dominates $M$ for the students, so $q=\sum_{l\ne j}\lambda_lM_l + \lambda_jM_j'$, is an ex-post stable random matching  that SD-dominates $p$, contradiction.
\end{proof}

We turn to our main theoretical results, concerning the computational complexity of \myprob. First, we start with a simple observation for the case where all priority lists are strict. 

\begin{observation}
    \myprob\ is solvable in polynomial time, if all priority lists are strict. 
\end{observation}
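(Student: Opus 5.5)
When every priority list $\succsim_{s}$ is strict, there is nothing to break. Weak stability coincides with classical stability, and the set of stable matchings has the lattice structure of the Gale–Shapley college admission model. In particular, there is a unique student-optimal stable matching $M^*$, produced by student-proposing DA, which every student weakly prefers to every other stable matching. The plan is to show that the only constrained-sd-efficient random matching is $M^*$ itself, viewed as a degenerate lottery. Deciding the question then amounts to running DA and comparing vectors.

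First, Lemma~\ref{thm:onlystuopts} settles one direction. If $p$ is constrained-sd-efficient, then every matching in any decomposition of $p$ into weakly stable matchings is student-optimal with respect to some tie-breaking. With strict priorities the only tie-breaking is the profile itself, so every such matching equals $M^*$, and hence $p=M^*$.

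For the converse, I will show directly that $M^*$ is constrained-sd-efficient. Take any ex-post stable $q=\sum_l\lambda_l M_l$ with each $M_l$ stable. For each student $i$ and each $l$ we have $M^*(i)\succsim_i M_l(i)$. So for every threshold school $s_j$, the probability under $q$ that $i$ gets a school at least as good as $s_j$ is at most the corresponding indicator under $M^*$; that is, $M^*$ weakly sd-dominates $q$. Consequently $q$ can sd-dominate $M^*$ only if every inequality holds with equality, which gives $q=M^*$ on the cumulative sums of every student. Cumulative sums determine the individual entries $q(i,s)$, so $q=M^*$ and the domination cannot be strict. The step that needs care here is converting weak sd-dominance in both directions into equality of the random matchings, and this differencing argument handles it.

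The algorithm is therefore: compute $M^*$ by student-proposing DA in polynomial time, and answer ``yes'' if and only if $p(i,s)=M^*(i,s)$ for every edge $(i,s)\in E$. The ex-post stability of the input $p$ is guaranteed by the problem definition, so it need not be checked. The main potential obstacle would be needing ex-post stability of arbitrary $q$, which is NP-hard to test in general. The argument above avoids this entirely, because it uses only the per-matching dominance of $M^*$ over every stable matching, never an explicit test of ex-post stability.
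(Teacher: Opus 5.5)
Your proof is correct and follows the paper's argument: with strict priorities, the student-optimal stable matching $M^*$ weakly sd-dominates every ex-post stable random matching, so it is the unique constrained-sd-efficient one, and the test reduces to running DA and checking $p=M^*$. Your use of Lemma~\ref{thm:onlystuopts} for the ``only if'' direction is valid but unnecessary, since your own weak-dominance-plus-differencing argument already shows that $M^*$ strictly sd-dominates any ex-post stable $p\neq M^*$.
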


Indeed, in this case, there exists a unique student-optimal stable matching $M$~\cite{gale1962college}, where all students receive the best possible school in any stable matching simultaneously. Hence, $M$ SD-dominates any ex-post stable random matching $p$, thus $M$ is also the unique constrained-sd-efficient random matching.

Sadly, even the introduction of ties on one side of the market makes the problem hard, as we now show.

\begin{theorem}\label{thm:improve-NPh}
    \myprob\ is coNP-complete, even if the students have preferences of length at most 3 and every capacity is 1. 
\end{theorem}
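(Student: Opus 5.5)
We prove the two parts separately: membership in coNP, then coNP-hardness via a reduction from \XC\ to the complement of \myprob.

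\textbf{Membership.} We show the complement, ``$p$ is not constrained-sd-efficient'', is in NP. A certificate is a list of at most $|E|+1$ weakly stable matchings $M_1,\dots,M_k$ together with weights $\lambda_l$. Checking weak stability of each $M_l$ is polynomial. Checking that $q=\sum_l\lambda_lM_l$ sd-dominates $p$ amounts to polynomially many linear inequalities, one of which must be strict.

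To bound the certificate size, fix the support $M_1,\dots,M_k$. Among convex combinations satisfying the weak dominance constraints, maximize the total slack $\sum_{i,s_j}\bigl(\sum_{s_l\succsim_i s_j}q(i,s_l)-\sum_{s_l\succsim_i s_j}p(i,s_l)\bigr)$. This is an LP, so the optimum is attained at a vertex whose encoding is polynomial in the input, $p$ included. By Carathéodory's theorem, any ex-post stable $q$ dominating $p$ can be rewritten using at most $|E|+1$ matchings. The input $p$ is promised to be ex-post stable, so we need not certify that.

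\textbf{Hardness.} The plan is to build instances where $p$ is a uniform combination of student-optimal stable matchings for some tie-breakings, matching the strengthening announced in the contributions. In such instances $p$ can be sd-improved by an ex-post stable $q$ if and only if a given \XC\ instance $(X,\mathcal{C})$ has an exact cover.

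The building block is Example~\ref{ex:basic}. Two pairs of students tie at their top schools and share second choices. Independent lotteries make them collide on their second choices and sometimes fall to a third choice, while correlated lotteries remove those collisions. Since preferences have length at most $3$, every student can have at most one fallback beyond the collision school.

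Concretely, I would attach such a ``coin'' gadget to each set $C\in\mathcal{C}$: its two orientations mean ``$C$ is chosen'' and ``$C$ is not chosen''. For each element $x\in X$, I would add an element gadget whose second-choice schools are contested by the students of the gadgets of sets containing $x$. The aim is that avoiding every third choice with probability one forces, in each weakly stable matching of the support, exactly one set containing $x$ to be in the ``chosen'' orientation. Priorities at the shared schools should be arranged so that any weakly stable matching in which an element is uncovered, or covered twice, creates a justified envy or a forced third choice.

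Then $p$ is defined as the uniform mixture of the student-optimal stable matchings under a few independent tie-breakings. This makes $p$ ex-post stable by construction, and every student reaches a third choice with positive probability.

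The two directions go as follows. If an exact cover exists, take $q=\tfrac12(M_{\text{cover}}+M_{\text{complement}})$, where $M_{\text{complement}}$ flips every coin. It keeps the top-choice probabilities at $\tfrac12$ and moves all third-choice mass to second choices, exactly as in Example~\ref{ex:basic}, so $q$ sd-dominates $p$. Conversely, suppose some ex-post stable $q$ sd-dominates $p$. By Lemma~\ref{thm:onlystuopts} we may assume its support consists of constrained-efficient student-optimal stable matchings. The tie at each top school caps every top-choice probability at $\tfrac12$, so a strict improvement must lower the total third-choice mass. This forces some matching in the support to achieve a consistent orientation that avoids all collisions, and from that matching we read off an exact cover.

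\textbf{Main obstacle.} The delicate step is the gadget design for this converse: ruling out ``partial'' improvements. A $q$ might shave third-choice probability in only some gadgets without any matching in the support encoding a full exact cover. I would first try to link the gadgets through the element schools, so that every weakly stable matching either avoids all third choices or is forced into at least as many third choices as $p$'s matchings. If that fails, I would try to make $p$ itself already optimal on each gadget in isolation, so that only a globally consistent orientation can improve it.
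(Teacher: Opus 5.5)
Your membership argument is fine: Carath\'eodory plus a max-slack LP over a fixed support gives a polynomial-size witness. The hardness part, however, has a genuine gap. No gadget is actually specified, and the step you flag as the ``main obstacle'' is exactly the step that carries the reduction.

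The converse inference ``a strict improvement lowers total third-choice mass, so some support matching avoids all collisions'' is invalid. Lower total third-choice mass only gives a support matching with fewer collisions than $p$'s matchings on average, not zero. Nothing in a design built from independently reorientable local coins prevents $q$ from removing collisions in only some gadgets, so no exact cover can be read off any single matching.

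Your forward direction is also inconsistent with your own design goal. In $M_{\text{complement}}$ the ``chosen'' sets are $\mathcal{C}\setminus\mathcal{C}'$, which cover each element $\deg(x)-1$ times. So $M_{\text{complement}}$ is collision-free only if every element lies in exactly two sets. In that special case an exact cover is precisely one colour class of a proper 2-colouring of the multigraph on $\mathcal{C}$ with one edge per element, so \XC\ becomes polynomial. The symmetric witness $q=\tfrac12(M_{\text{cover}}+M_{\text{complement}})$ therefore cannot serve a reduction from a hard problem.

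The missing idea is to make $p$ rigid and route every possible improvement through one global gate. The paper adds students $b_1,b_2$ and schools $b_1',b_2'$ with
$b_1\colon b_1'\succ b_2'$, $b_2\colon b_2'\succ b_1'$, $b_2'\colon b_1\succ b_2$ and $b_1'\colon [b_2,Y]\succ b_1$, where $Y$ is the set of all $y_j^\ell$ students of the set gadgets. It takes $p=\tfrac13(M_1+M_2+M_3)$, built from a K\H{o}nig decomposition of the 3-regular element--set incidence graph of the restricted \XC\ instance.

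Claim~\ref{claim:whocanimprove} shows, edge by edge, that the sd-constraints pin every marginal of $p$ except on the four $b$-edges. So the only possible improvement shifts some $\varepsilon>0$ onto $(b_1,b_1')$ and $(b_2,b_2')$. This requires a single weakly stable matching containing $(b_1,b_1')$, in which every $y_j^\ell$ must sit at $d_j^\ell$. With the other marginals frozen, each set agent $c_j^\ell$ must then hold its element agent or $x_j^\ell$. The $x$--$v$--$e$ chain then makes every set gadget all-or-nothing, so that one matching encodes an exact cover.

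Conversely, a cover yields $M_1$, with the gate flipped and the cover sets fully used. The other two matchings $M_2,M_3$ come from a Hall/Mendelsohn--Dulmage split of the remaining incidence edges. There a non-cover set uses only one or two of its set agents, and the others hold their $y_j^\ell$; this is allowed because $b_1$ is not at $b_1'$ in those matchings. This rigidity-plus-gate mechanism is what excludes the partial improvements your local-collision design cannot rule out.
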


This result is in strong contrast with the polynomial result in Theorem~\ref{thm:erdilergin} for the deterministic case. At the same time, it is also in contrast with the polynomial complexity of verifying whether an arbitrary random matching is sd-efficient, which can be verified by checking for probabilistic trading cycles \cite{BoMo01a}.

In practice, the initial ex-post stable matching $p$ is usually obtained by uniformly randomly sampling some tie-breakings, computing the DA outcomes, and weighing them equally. This is to mitigate the computational bottleneck of computing the average of all possible (exponentially many) DA outcomes for all possible tie-breakings, which we shortly show to be NP-hard in Theorem~\ref{thm:uniform}. Hence, next we consider whether the coNP-hardness of \myprob\ remains even if $p$ is restricted to such matchings. 

\begin{theorem}
    \myprob\ is coNP-complete, even if the students have preferences of length at most 3 and $p$ is a uniform combination of DA outputs with different tie-breakings. 
\end{theorem}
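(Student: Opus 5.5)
Membership in coNP needs care, since checking ex-post stability of an arbitrary $q$ is itself NP-complete. A certificate that $p$ is \emph{not} constrained-sd-efficient will therefore be a polynomial-size list of weakly stable matchings $N_1,\dots,N_m$ with weights $\mu_l$ such that $q=\sum_l \mu_l N_l$ sd-dominates $p$. By Carathéodory, $m\le |E|+1$ suffices, and the weights can be taken as a vertex of an LP with rational data of polynomial size. Weak stability of each $N_l$ and the sd-domination inequalities are checked in polynomial time. The input $p$ is promised to be ex-post stable; if it is given as a uniform combination of DA outputs, this promise is verifiable. Hence the problem lies in coNP.

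For hardness I would reuse the reduction behind Theorem~\ref{thm:improve-NPh} and modify it so that the starting random matching is a uniform average of DA outputs under explicitly given tie-breakings. I reduce from an NP-complete problem such as \XC, or the variant used for Theorem~\ref{thm:improve-NPh}, with the property that a ``yes'' instance yields an ex-post stable $q$ that sd-dominates $p$. Gadgets are built from small cycles, with students having lists of length at most 3, in the spirit of Example~\ref{ex:basic}. For each gadget there are two natural tie-breakings, and DA under them gives matchings such as $M_1,M_2$ or $M_5,M_6$ that each place some student at her third choice. A correlated choice among the ``good'' matchings $M_3,M_4$ would instead give everyone her top two choices. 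I take $p$ to be $\frac{1}{t}\sum_{r=1}^t DA(\succ^r)$ for a polynomial number $t$ of tie-breakings chosen so that each gadget's marginals match the probabilities used in the previous reduction. Each DA output is stable, so $p$ is ex-post stable by construction, and Lemma~\ref{thm:onlystuopts} lets me restrict the search for an improving $q$ to convex combinations of student-optimal, constrained-efficient matchings.

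Correctness has two directions. If the source instance is a yes-instance (for example an exact cover exists), I coordinate the gadgets globally, as in the example. I exhibit a convex combination of weakly stable matchings in which the selected gadgets switch to their ``good'' matchings consistently with the cover and all other gadgets are unchanged, and this combination sd-dominates $p$. Conversely, suppose some ex-post stable $q$ sd-dominates $p$. Because of the length-3 lists and the priority structure, any student whose probability of a third choice decreases forces specific connector students to be placed at particular schools. Reading off the support of a weakly stable matching in $q$'s decomposition then yields a cover.

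The main obstacle is the construction of $p$: I must find polynomially many explicit tie-breakings whose DA outputs average exactly to the target marginals, even though the gadgets interact through shared connector schools. A tie-breaking chosen to set one gadget's outcome may change DA outcomes elsewhere. My first attempt is to make the gadgets' priority ties independent, with connector students having strict priority below everyone relevant, so that DA decomposes gadget by gadget. Then a single family of $t=2$ or $4$ tie-breakings, applied in parallel to all gadgets with a fixed rotation, gives the uniform average. I would then verify that these averaged marginals still leave room for the global coordinated improvement exactly when a cover exists.
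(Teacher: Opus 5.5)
\textbf{Verdict: genuine gap.} Your coNP-membership argument is fine. The hardness part, however, is a plan rather than a reduction, and the step you yourself flag as the main obstacle is never resolved.

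\textbf{What is missing in your construction.} The gadgets you describe, ``small cycles in the spirit of Example~\ref{ex:basic}'', are not specified, and they are not the gadgets of the reduction for Theorem~\ref{thm:improve-NPh}. The converse direction (``forces specific connector students to be placed at particular schools'') is asserted without argument. Your tentative fix also does not fit the construction you say you are reusing:
\begin{itemize}
\item There, $p=\frac13(M_1+M_2+M_3)$, so every marginal is a multiple of $\frac13$, and $t=2$ or $4$ tie-breakings cannot reproduce it.
\item Letting DA run ``gadget by gadget with a fixed rotation'' loses the global coupling that the decomposition needs. In each $M_k$, which $\alpha_i^l$ leaves gadget $A_i$ is dictated by a Kőnig perfect matching $\mu_k$ of the element--set incidence graph, so that exactly one of $c_j^1,c_j^2,c_j^3$ receives an element agent. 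You would then still have to check that DA reproduces every set gadget and the $b$-part as well.
\end{itemize}
Separately, Lemma~\ref{thm:onlystuopts} is a statement about the decomposition of a constrained-sd-efficient $p$; it is not what restricts the search for an improving $q$.

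\textbf{The missing idea: no new construction is needed.} Keep the instance and the decomposition $p=\frac13(M_1+M_2+M_3)$ from Theorem~\ref{thm:improve-NPh} and proceed as follows.
\begin{enumerate}
\item For each $k$, take the tie-breaking $\succ'_k$ from the proof of Lemma~\ref{thm:onlystuopts}. It places the students of $M_k(s)$ ahead of their tied competitors at every school $s$.
\item $M_k$ is stable under $\succ'_k$. Hence $\mathrm{DA}(\succ'_k)$ is weakly stable for the coarse priorities and weakly better than $M_k$ for every student.
\item If $\mathrm{DA}(\succ'_k)=M_k$ for all $k$, then $p$ is literally a uniform combination of DA outputs. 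The old reduction applies verbatim, because Claim~\ref{claim:whocanimprove} and the two claims after it use only the marginals of $p$.
\item If $\mathrm{DA}(\succ'_k)\neq M_k$ for some $k$, then $p+\frac13\bigl(\mathrm{DA}(\succ'_k)-M_k\bigr)$ is an ex-post stable random matching that sd-dominates $p$. This is detected in polynomial time, and by Theorem~\ref{thm:improve-NPh} it can only happen when an exact 3-cover exists. The reduction then simply outputs a fixed no-instance of the restricted problem.
\end{enumerate}
This either/or argument is essentially the paper's entire proof, and it avoids having to engineer tie-breakings at all.
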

\begin{proof}
   Consider the constructed ex-post stable random matching $p$, where we have shown that $p=\frac{1}{3}(M_1+M_2+M_3)$ for some known weakly stable matchings. If one of them, say $M_1$ is not a student optimal stable matching with respect to some tie-breaking, then by Lemma~\ref{thm:onlystuopts} and Theorem~\ref{thm:erdilergin}, we can find in polynomial time a weakly stable matching $M_1'$ that Pareto-dominates $M_1$ for the students and conclude that $q=\frac{1}{3}(M_1'+M_2+M_3)$ SD-dominates $p$. Hence, the restriction of the problem to inputs where all of $M_1,M_2,M_3$ is a student optimal stable matching with respect to some tie-breaking is still coNP-complete.
\end{proof}

\begin{theorem}\label{thm:uniform}
  It is NP-hard to compute the uniform combination of all DA output weakly stable matchings with respect to some tie-breaking.
\end{theorem}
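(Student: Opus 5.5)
The plan is a polynomial reduction from \comsmti. In \comsmti\ we are given a one-to-one instance with ties, and we ask whether it admits a weakly stable matching that matches every student (equivalently, a complete weakly stable matching). This problem is NP-complete even when ties appear only on the school side~\cite{MII+02a}. The idea is to add a small gadget so that a single entry of the uniform combination of all DA outputs reveals the answer: that entry is positive if and only if a complete weakly stable matching exists. Any algorithm that computes the uniform combination, whose entries are explicit rationals, would therefore decide \comsmti.

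\textbf{Construction.} Given an instance $I$ with students $N$ and schools $S$, build $I'$ as follows.
\begin{itemize}
\item Add a new school $w$ of capacity $1$ and a new student $z$ whose only acceptable school is $w$.
\item Append $w$ at the end of every original student's preference list.
\item Let $w$ rank all original students in a single tie, strictly above $z$.
\end{itemize}
Every DA output of $I'$ is weakly stable and matches $w$ to at most one student. So in any DA output $M'$, either $z$ is matched to $w$, or some original student occupies $w$, or $w$ is empty.

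\textbf{Key claim.} I would prove that $z$ is matched to $w$ in some DA output of $I'$ if and only if $I$ has a complete weakly stable matching.

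\emph{($\Rightarrow$)} Suppose $M'(z)=w$. Every original student $i$ then finds $w$ acceptable and has strictly higher priority at $w$ than $z$. Hence, by weak stability, $i$ cannot be unmatched, and $i$ cannot be at $w$ either, since $w$ is taken by $z$. So every original student is matched to a school in $S$. The restriction of $M'$ to $I$ is then complete, and it is weakly stable in $I$: a blocking pair in $I$ would also block in $I'$, because preferences over $S$ are unchanged.

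\emph{($\Leftarrow$)} Take a complete weakly stable matching $M$ of $I$ and add the pair $(z,w)$. Since $w$ is every original student's last choice and all of them are matched to schools in $S$, no original student blocks with $w$; hence $M\cup\{(z,w)\}$ is weakly stable in $I'$. Now choose the tie-breaking $\succ'$ that favours current partners, exactly as in the proof of Lemma~\ref{thm:onlystuopts}. The matching $M\cup\{(z,w)\}$ is stable under $\succ'$. By the Rural Hospitals theorem, the student-optimal stable matching under $\succ'$ matches the same set of students, so $z$ is matched there. Since $w$ is $z$'s only acceptable school, $z$ is matched to $w$ in this DA output.

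\textbf{Conclusion and main obstacle.} By the claim, the coordinate $(z,w)$ of the uniform combination of all distinct DA outputs is positive exactly when $I$ is a yes-instance. Computing that combination therefore solves \comsmti, which gives NP-hardness. The delicate step is the ($\Leftarrow$) direction: a complete stable matching must be realised as an actual DA output, not merely as some weakly stable matching. The tie-breaking argument combined with Rural Hospitals handles this. I would also double-check that the gadget preserves the side on which ties occur, so that the hardness inherited from~\cite{MII+02a} still applies.
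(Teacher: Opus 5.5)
Your proof is correct and follows essentially the same route as the paper. Both reduce from \comsmti\ by appending a dummy school last on every student's list with all original students tied, and both use the tie-breaking that favours the partners of a complete weakly stable matching to realise it (up to student-optimal improvement) as an actual DA output. The differences are cosmetic. The paper omits your dummy student $z$ and reads the answer off from whether the total mass on the dummy school is strictly below $1$, i.e., whether some DA output leaves it empty; your $z$ turns this into a single coordinate being positive. Also, you invoke the Rural Hospitals theorem where the paper uses that the student-optimal matching weakly improves every student over $M$.
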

\begin{proof}
We reduce from \comsmti, shown to be NP-hard by Manlove et al.~\cite{Manlove_etal2002}.

\begin{problem}[\comsmti]
  \probleminput{A stable marriage instance $I=(U,W;E ;(\succ_u)_{u\in U},(\succsim_w)_{w\in W})$, where each $u\in U$ has a strict preference list.}
  
  \problemquestion{Is there a complete stable matching in $I$?}
\end{problem}

Let $I=(U,W;E ;(\succ_u)_{u\in U},(\succsim_w)_{w\in W})$ be an instance of \comsmti. We create many-to-one matching market $I'=(G,\capac, \succsim)$, by copying the graph and the preferences from $I$ (identifying the students $N=\{ 1,\dots, n\}$ with $U=\{ u_1,\dots, u_n\}$ and the schools $S=\{ s_1,\dots,s_m\}$ with $W=\{ w_1,\dots, w_n\}$) and setting $\capac \equiv 1$. Students have strict preferences, as $\succ_u$ is strict for all $u\in U$.
Additionally, we add a dummy school $t$ that is the worst acceptable for all students and ranks every student in a tie. Hence, we have $N=\{ u_1,\dots, u_n\}$ and $S=\{ w_1,\dots, w_n\} \cup \{ t\}$. 

If there exists a complete weakly stable matching in $I$, then let $M$ be such a matching. Then, if we break the ties in a way such that for each school $s\in S$, if $i\in M(s)$ but $i'\notin (s)$ and $i'\sim_s i$, then $i\succ'_s i'$, then $M$ is easily seen to be stable. Hence, with respect to these tie-breaking, the DA-output student optimal stable matching must match all students to strictly better schools than $t$.

Hence, the sum of values on the incident edges to the dummy school is less than 1 in the uniform combination of DA-outputs.

In the other direction, if the sum of the values on incident edges to the dummy school $t$ is less than 1 in the uniform combination of DA outputs, then there must be a tie-breaking, where the student optimal stable matching $M$ does not match anyone to $t$. Since it is acceptable to every student, each must be matched to a school corresponding to some $w_j\in W$ in $I$. It is easy to see that $M$ gives a complete weakly stable matching in $I$.
\end{proof}

\subsection{Related Questions for Deterministic Matchings}
\label{subsec:questions_deterministic}

Motivated by the positive result of Erdil and Ergin for the deterministic case~\cite{erdil2008s} (Theorem~\ref{thm:erdilergin}), we also study a related question, where instead of Pareto-dominating a weakly stable matching $M$ with a different weakly stable matching, we ask whether an arbitrary matching $M$ can be Pareto-dominated by a weakly stable matching or an ex-post stable random matching. Sadly, these questions are also computationally challenging.

\begin{theorem}
\label{th:EE_stronger}
    Given a matching $M$, it is NP-hard to decide if there is a weakly stable matching~$N$, which Pareto-dominates $M$ for the students. It is also NP-hard to decide if there is an ex-post stable random matching $p$ that Pareto-dominates $M$ for the students. 
\end{theorem}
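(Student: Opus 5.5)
The plan is a polynomial reduction from \comsmti, which Manlove et al.~\cite{Manlove_etal2002} show is NP-hard and which was already used for Theorem~\ref{thm:uniform}. From an instance $I=(U,W;E;\succ,\succsim)$ we build a market $I'$ and a matching $M$ such that three conditions are equivalent:
\begin{itemize}
\item[(a)] $I$ has a complete weakly stable matching;
\item[(b)] some weakly stable $N$ Pareto-dominates $M$ for the students;
\item[(c)] some ex-post stable $p$ sd-dominates $M$.
\end{itemize}
Since a deterministic $N$ is itself an ex-post stable random matching, (b)$\Rightarrow$(c) is immediate. So it suffices to prove (a)$\Rightarrow$(b) and (c)$\Rightarrow$(a), which gives both hardness claims at once.

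\textbf{Construction.} Copy $I$ with unit capacities. Add one extra student $z$ and one school $y$. Student $z$ ranks $y$ first and a private school $d_z$ second. The school $y$ is acceptable to every $u\in U$, placed just below all of $u$'s $W$-schools. Its priority is strict, with all $u\in U$ above $z$. Each $u$ also gets a private last-choice school $d_u$. We take $M$ to be the matching $u\mapsto d_u$, $z\mapsto d_z$; it need not be stable. The intended argument then runs as follows.
\begin{itemize}
\item[(a)$\Rightarrow$(b):] Given a complete weakly stable $M^\ast$ of $I$, set $N=M^\ast\cup\{(z,y)\}$. Then $N$ is weakly stable: no $u$ envies $y$ or $d_u$, and blocking pairs inside $W$ are excluded by stability of $M^\ast$. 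Moreover $N$ weakly improves every student and strictly improves $z$.
\item[(c)$\Rightarrow$(a):] If $z$ is at $y$ with positive probability, some weakly stable matching in the support puts $z$ at $y$. Since all $u$ have priority over $z$ at $y$ and prefer $y$ to $d_u$, in that matching no $u$ can be at $d_u$ or unmatched, and none sits at $y$. Hence all of $U$ is matched inside $W$, and the restriction to $W$ is a complete weakly stable matching of $I$.
\end{itemize}

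\textbf{Main obstacle.} The delicate step is forcing the strict improvement in (b)/(c) to come from $z$. As the gadget stands, a stable $N$ could instead strictly improve some $u$ (moving it from $d_u$ to $y$ or into $W$) while leaving $z$ at $d_z$. That would make the answer trivially yes. I therefore plan to modify $M$ so that the $U$-students cannot strictly gain without $z$ gaining too. My first attempt is to set $M(u)$ to be a school $y_u$ that $u$ ranks immediately above $d_u$ and that is tied with $y$ in $u$'s role, so that the only better options for $u$ lie in $W$. I would then add padding students $e_u$ who have top priority at $d_u$ and rank $d_u$ first. Their purpose is to make any weakly stable matching keep $u$ out of $d_u$ and pin $y_u$ unless $u$ moves into $W$, in which case stability should cascade and force the whole of $U$ into $W$.

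If this local gadget fails, the fallback is a global one. Chain all the $y$-type schools so that releasing any single one triggers a sequence of blocking pairs. The goal is an all-or-nothing property: either $M$ is the unique weakly stable Pareto-improvement candidate restricted to $U$, or $U$ is completely matched in $W$ and $z$ gets $y$.

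Once that all-or-nothing property is established, the random version should follow directly from the (c)$\Rightarrow$(a) argument above. The sd-dominance must then be strict for some student, and the gadget leaves $z$ as the only one who can strictly gain. So $z$ reaches $y$ with positive probability, and some stable matching in the support is complete on $U$.
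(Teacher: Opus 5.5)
\textbf{The construction as given does not work, and the proposed repair is never carried out.} Your reduction source (\comsmti) and the skeleton (a)$\Rightarrow$(b)$\Rightarrow$(c)$\Rightarrow$(a) are fine. Your diagnosis of the first gadget is also correct. Take any weakly stable matching of $I$ and send the highest-priority $W$-unmatched student of $U$ to $y$, or $z$ to $y$ if there is none. This always yields a weakly stable Pareto-improvement of $M$, so every instance would be a yes-instance.

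The route you sketch for fixing this does not give the all-or-nothing property you need. With $M(u)=y_u$ and padding students $e_u$ at $d_u$, nothing rules out partial improvements. Some $u$ can move into $W$ while the other students of $U$ keep their $M$-schools. For example, if $y_u$ is acceptable only to $u$, extend any nonempty weakly stable matching of $I$ by leaving the $W$-unmatched students at their $y_u$, moving one of them to $y$ if necessary. This Pareto-dominates $M$ whether or not $I$ has a complete stable matching. The ``cascade'' and the ``chained $y$-schools'' that are meant to exclude such partial improvements are not specified. So (b)$\Rightarrow$(a), and hence both hardness claims, remain unproved.

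\textbf{The missing idea is to reverse the roles.} Do not try to make $z$ the only student able to gain. Instead, make a strict gain automatic and make \emph{every} $u$ need one. The statement allows $M$ to be unstable. So for each $u_i$, add a last-choice school $t_i$ and a dummy student $y_i$ who finds only $t_i$ acceptable and has top priority there. Set $M=\{(u_i,t_i)\}$ with $y_i$ unmatched.

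In every weakly stable matching $y_i$ holds $t_i$, so some student strictly improves for free. Each $u_i$ can then be weakly better off than in $M$ only if it is matched in $W$. Hence a weakly stable Pareto-improvement exists if and only if $I$ has a complete weakly stable matching. Stability of the restriction to $I$ is inherited, and conversely a complete weakly stable matching of $I$ extended by the pairs $(y_i,t_i)$ works. For the random version, sd-dominance at threshold $t_i$ forces $u_i$ into $W$ with probability $1$, hence in every matching of the support. This is the paper's proof.

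Your $e_u$ gadget is essentially this one, but it only works under three conditions. You need $M(u)=d_u$, you need $e_u$ unmatched in $M$, and there must be no school such as $y$ or $y_u$ between $W$ and $d_u$ for $u$ to fall back on. Otherwise one $u$ can escape to $y$, and you only get ``all but one'' of $U$ matched in $W$.
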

\begin{proof}
We reduce from \comsmti. 
Let $I=(U,W;E ;(\succ_u)_{u\in U},(\succsim_w)_{w\in W})$ be an instance of \comsmti. We create many-to-one matching market $I'=(G,\capac, \succsim)$, by copying the graph and the preferences from $I$ (identifying the students with $U$ and the schools with $W=\{ w_1,\dots, w_n\}$) and setting $\capac \equiv 1$. Students have strict preferences, as $\succ_u$ is strict for all $u\in U$.
Additionally, we add a dummy school $t_i$ for all $u_i\in U$, that is the worst acceptable for student $u_i$ and ranks every student in a tie, except that we create a corresponding dummy student $y_i$ that considers only $t_i$ acceptable and is best for $t_i$. Hence, we have $N=\{ u_1,\dots, u_n\} \cup \{ y_1,\dots, y_n\}$ and $S=\{ w_1,\dots, w_n\} \cup \{ t_1,\dots, t_n\}$.

The initial matching $M$ is $\{ (u_i,t_i)\mid i\in [n]\}$.

In any weakly stable matching, $y_i$ and $t_i$ are matched for each $i\in [n]$ as they are mutual top choices. Hence, to Pareto-improve the students, each $u_i$ must be matched and necessarily to a school $w_i$. Such a matching between $U$ and $W$ must be a complete weakly stable matching in $I$. For the case, where we want to dominate with an ex-post stable random matching $p$, we still must satisfy that $u_i$ is matched with probability 1, hence all weakly stable matchings in the support of $p$ must match all $u_i$ students to schools from $W$ (by stability), so all of them must be complete weakly stable matchings in $I$.

Conversely, given a complete weakly stable matching $M'$ in $I$, we can extend $M'$ with $\{ (y_i,t_i)\mid i\in [n]\}$ and obtain a weakly stable matching (and thus also an ex-post stable matching), which Pareto-improves the students upon $M$.
\end{proof}

We also study how to select one of the (possibly many) stable improvement cycles in each iteration of the algorithm by Erdil and Ergin~\cite{erdil2008s}. Erdil and Ergin did not consider the problem of finding "optimal" stable improvement cycles, i.e., Pareto-improving the weakly stable matching $M$ in such a way that the average rank of the students is minimized. In this paper, we extend their work by analyzing such problems computationally.

The first question that arises is whether we can find an "optimal" set of stable improvement cycles in $D_M$?

\begin{problem}[\bestgreedyEE]
  \probleminput{A many-to-one matching market  $(G,\capac,\succsim)$, a stable matching $M$ and a number $R$.}
  
  \problemquestion{Is there a set $\set$ of disjoint stable improvement cycles in $D_M$, such that for $M'=M/\set$ $\avgrank(M)-\avgrank(M')\ge R$?}
\end{problem}

\bestgreedyEE\ can be solved in polynomial time, as it straightforwardly reduces to a minimum cost circulation problem with unit edge and vertex capacities, using that such a problem always admits an integer optimal solution~\cite{Ahuja1993,hoffman2009integral}. The weight of an arc $(i,s_j)$ should be set to the amount in the average rank the switch of $i$ from $M(i)$ to $s_j$ leads to (i.e., $\frac{\rank_i (s_j)-\rank_i (M(i))}{n}$).

The more interesting problem is to find a stable improvement that is globally optimal, i.e. a stable matching that Pareto-improves the students from $M$ and leads to the best average rank. The difficulty here is that even though we eliminate a set of disjoint improvement cycles that is optimal in the current envy graph $D_M$, there is no guarantee that such an improvement does not destroy all paths of eliminating improvement cycles that lead to a global optimum. In fact, we will show that this becomes NP-hard.

\begin{problem}[\bestimproveEE]
  \probleminput{A many-to-one matching market $(G,\capac,\succsim)$, a stable matching $M$ and a number $R$.}
  
  \problemquestion{Is there a stable matching $M'$, such that $M'$ is a Pareto-improvement for the students and $\avgrank(M)-\avgrank(M')\ge R$?}
\end{problem}

\begin{example}
We illustrate with an example that greedily picking the best stable improvement cycle, or the best disjoint set of such cycles, may not be optimal.
    We have six students $1,2,3,4,5,6$ and six schools $s_1,s_2,s_3,s_4,s_5,s_6$ with the following preferences and priorities. 

    \begin{center}
        \begin{minipage}{0.25\textwidth}
\centering
\begin{tabular}{rl}
     $1:$ & $s_2\succ s_4$  \\
      $2:$ & $s_1\succ s_2\succ s_3\succ s_5$   \\
      $3:$ & $s_1\succ s_3\succ s_6$ \\
      $4:$ & $s_4\succ s_1$  \\
      $5:$ & $s_5\succ s_2$ \\
      $6:$ & $s_6\succ s_3$ 
    \end{tabular}
\end{minipage}
\hspace{0.01\textwidth}
\begin{minipage}{0.25\textwidth}
\centering
\begin{tabular}{rl}
     $s_1:$ & $4\succ [2,3]$  \\
      $s_2:$ & $5\succ 1\succ 2$\\
      $s_3:$ & $6\succ 2\succ 3$ \\
      $s_4:$ & $1\succ 4$ \\
      $s_5:$ & $2\succ 5$ \\
      $s_6:$ & $3\succ 6$
    \end{tabular}
\end{minipage}
\end{center}
Take the matching $M=\{ (1,s_4),(2,s_5),(3,s_6),(4,s_1),(5,s_2),(6,s_3)\}$. It is weakly stable, as all schools have a top student. 

$D_M$ has arcs $\{ (1,5),(2,6),(2,4), (3,4),(4,1),(5,2),(6,3)\}$. We have two possibilities: stable improvement cycle 
\[
C_1=\{ (1,5),(5,2),(2,6),(6,3),(3,4),(4,1)\}
\]
or
\[
C_2= \{ (1,5),(5,2),(2,4),(4
,1)\}.
\]
It is clear that $C_1$ is a better choice greedily, as it improves the average rank more. However, that leads to constrained-efficient matching $M_1=\{ (1,s_2),(2,s_3),(3,s_1),(4,s_4),(5,s_5),(6,s_6)\}$. 

If we choose cycle $C_2$ instead, then we reach $M_2=\{ (1,s_2),(2,s_1),(3,s_6),(4,s_4),(5,s_5),(6,s_3)\}$. Then, we have that $D_{M_2}$ still has arcs $\{ (3,6),(6,3)\}$ which form a stable improvement cycle $C_3$. By eliminating this cycle as well, we get $M_3=\{ (1,s_2),(2,s_1),(3,s_3),(4,s_4),(5,s_5),(6,s_6)\}$. Then, $\frac{7}{6}=\avgrank (M_3)<\avgrank (M_2) = \frac{8}{6}$.

\end{example}

We show that checking whether there exists a weakly stable matching $M'$ that improves the average rank compared to an initial weakly stable matching $M$ by at least $R$ is NP-hard.

\begin{theorem}\label{th:bestimproveEE}
    \bestimproveEE\ is NP-hard.
\end{theorem}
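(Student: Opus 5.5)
I will reduce from \comsmti, reusing the gadget from the proof of Theorem~\ref{th:EE_stronger} but starting from a weakly stable initial matching. Let $I=(U,W;E;(\succ_u),(\succsim_w))$ be an instance with $|U|=|W|=n$. I will build a market $I'$ with students $U\cup\{y_1,\dots,y_n\}$ and schools $W\cup\{t_1,\dots,t_n\}$, all of capacity $1$, keeping all preferences and priorities of $I$. For each $i$, school $t_i$ is appended as the last acceptable school of $u_i$, and $y_i$ ranks $t_i$ first. The obstacle is that the initial matching must itself be weakly stable. So I cannot place $u_i$ at $t_i$ with $y_i$ ranked at the top of $t_i$ as before, since $y_i$ would then block.

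Instead I will let $y_i$ have a private fallback school $z_i$ with $t_i\succ_{y_i} z_i$, and let $t_i$ rank $u_i \sim y_i$ in a tie. The initial matching is $M=\{(u_i,t_i),(y_i,z_i)\}$ together with the $W$-schools left empty. Empty $W$-schools would make $M$ unstable, so I also give each $w\in W$ a dedicated student $x_w$ who ranks $w$ first and is top priority at $w$. Then $M$ contains $(x_w,w)$, and every $W$-school is full with its top-priority student. I must therefore modify the construction: $x_w$ also accepts a private fallback $f_w$, and $w$ ranks $x_w$ tied with all $U$-students. This keeps $M$ weakly stable, because no $u$ has strictly higher priority than $x_w$ at $w$, $y_i$ is tied with $u_i$ at $t_i$, and every school is filled. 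Moving $x_w$ to $f_w$ would hurt $x_w$, however, so a Pareto improvement cannot free $w$.

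The fix I will try is to let $x_w$ rank some new school $g_w$ above $w$, with $g_w$ available only to $x_w$ but not in $M$. That makes $M$ unstable, since $x_w$ blocks with the empty $g_w$. So the gadgets need care. My preferred route is to drop the requirement that $W$ be empty. I will assign the students of $U$ to the schools of $W$ through an arbitrary weakly stable, possibly incomplete, matching $M_0$ of $I$, and send the unmatched $u_i$ to $t_i$.

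The target bound $R$ will force every $u_i$ onto a $W$-school. To make average rank separate the cases, I will pad each $u_i$'s list so that $t_i$ has a very large rank $L$, for instance by inserting unacceptable-in-practice dummy schools that are top-priority-filled. Any Pareto improvement that moves all unmatched $u_i$ off the $t_i$'s then gains roughly $L$ per such student, while all other changes are bounded by $n$. I will set $R$ equal to the gain just below what is achievable only if every student is on $W$.

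For the forward direction, I must show that if a complete weakly stable matching of $I$ exists, then one exists that Pareto-dominates $M_0$ for $U$. I expect this to be the main obstacle, since an arbitrary complete matching need not dominate $M_0$. I will try to take $M_0$ to be the $U$-optimal stable matching under a tie-breaking and invoke the fact that in SMTI with strict $U$-preferences a complete stable matching, if one exists, can be reached from an incomplete one. If this fails, I will fall back to choosing $M_0=\emptyset$ on $W$ and fill the $W$-schools with top-priority placeholder students who have a better private option reachable only via a stable improvement cycle. This recreates the Erdil--Ergin structure, so that freeing $W$ is a Pareto improvement.

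The reverse direction follows as in Theorem~\ref{th:EE_stronger}. Restricting any improving weakly stable matching meeting the threshold to $U\cup W$ yields a complete weakly stable matching of $I$.
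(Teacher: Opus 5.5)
\textbf{There is a genuine gap.} The route you call preferred fails, and the fallback is only sketched. It lacks the one gadget that makes the proof work.

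The preferred route's forward direction needs the following: whenever $I$ has a complete weakly stable matching, one such matching Pareto-dominates your starting matching $M_0$ on $U$. This is false even when $M_0$ is the student-optimal matching for some tie-breaking. Take $u_1: w_1\succ w_2$ and $u_2: w_1$, with $w_1$ ranking $u_1\sim u_2$ and $w_2$ accepting only $u_1$.
\begin{itemize}
\item DA with the tie-breaking $u_1\succ u_2$ gives $M_0=\{(u_1,w_1)\}$, which is weakly stable.
\item The unique complete weakly stable matching is $\{(u_1,w_2),(u_2,w_1)\}$, and it makes $u_1$ worse off.
\item In your $I'$, $u_2$ is stuck at $t_2$, because $w_1$ is held by $u_1$, who cannot improve.
\end{itemize}
So a yes-instance of \comsmti\ maps to a no-instance. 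Nothing in the proposal gives a polynomial way to choose an $M_0$ that avoids this.

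\textbf{How to close the gap.} Your fallback is the right idea, and your first two attempts already contain its two halves; you only need to identify them. Make the tied student $y_i$ the placeholder of $w_i$:
\begin{itemize}
\item $y_i$ ranks $t_i\succ w_i$ and has strictly top priority at $w_i$;
\item $t_i$ ranks $u_i\sim y_i$, and $t_i$ is $u_i$'s last choice, placed after padding schools that are each filled by a top-priority dummy;
\item the initial matching $M$ contains $(u_i,t_i)$ and $(y_i,w_i)$ for every $i$.
\end{itemize}
In your notation this sets $z_i=w_i$, $x_{w_i}=y_i$ and $g_{w_i}=t_i$. The seat the placeholder wants is no longer empty: $u_i$ occupies it with tied priority. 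Hence $M$ is weakly stable, since every school holds a top-priority student. Moreover, $w_i$ can be vacated exactly when $u_i$ leaves $t_i$, which is the coupling your proposal lacks.

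This is precisely the paper's construction. There, students $i,i'$ and schools $s_i,s_i'$ play these roles, $n^2$ shared dummy schools $t_1,\dots,t_{n^2}$ are held by $d_1,\dots,d_{n^2}$, and $R=n^3/(n^2+2n)$.
\begin{itemize}
\item Forward direction: a complete stable $\mu$ extends, by adding $(y_i,t_i)$ for all $i$ and the dummy pairs, to a weakly stable matching that strictly improves every $u_i$ and $y_i$.
\item Reverse direction: weak stability keeps each dummy school with its dummy, so any $u_i$ not placed on $W$ stays at $t_i$ with unchanged rank. The paper restricts \comsmti\ to lists of length at most $3$, which bounds the total rank decrease in that case by $n+(n-1)(n^2+3)<n^3$ for $n\ge 4$.
\end{itemize}
Note that it is this threshold, not Pareto-dominance as in Theorem~\ref{th:EE_stronger}, that forces every $u_i$ onto $W$.
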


\subsection{Strategy-Proofness and Constrained Efficiency}
A random mechanism is \textit{strategy-proof} if no student can manipulate by misreporting, for any cardinal preferences that are consistent with the students' preferences \cite{BoMo01a}. Erdil and Ergin~\cite{erdil2008s} illustrate on an example that no strategy-proof, constrained-efficient (in the integral sense) and stable mechanism exists. Their example satisfies that there exist two constrained-efficient stable matchings, however, students $1$ and $2$ can both submit a false preference list, which forces that the only constrained-efficient stable matching will be the one that is preferred by them.

For completeness, we describe their example here. 

There are three students, $1,2,3$ and three schools $s_1,s_2,s_3$. The preferences of the students are $\succ_1=\succ_2= s_2\succ s_3\succ s_1$, and $\succ_3 = s_1\succ s_2\succ s_3$. The priorities of the schools are $\succ_{s_1}=1\succ 2\succ 3$, $\succsim_{s_2}=3\succ [1,2]$ and $\succ_{s_3}=3\succ 2\succ 1$. 

The two possible student optimal (and constrained-efficient) stable matchings are $M_1=\{ (1,s_2),\allowbreak(2,s_3),(3,s_1)\}$ and $M_2=\{ (1,s_3),(2,s_2), (3,s_1)\}$. If either student 1 or 2 submits $s_2\succ s_1\succ s_3$, then the only constrained-efficient stable matching will be the one among $M_1,M_2$ that is better for the strategizing student.

Mirroring this, we get the following result using Lemma~\ref{thm:onlystuopts}.

\begin{theorem}
There is no strategy-proof, ex-post stable, and constrained-sd-efficient random mechanism.
\end{theorem}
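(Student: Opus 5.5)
We will reuse the three-student, three-school example of Erdil and Ergin recalled above, and argue by contradiction. Suppose $\phi$ is a strategy-proof, ex-post stable and constrained-sd-efficient random mechanism, and let $p=\phi(\succsim)$ on the truthful profile.

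\textbf{Step 1: pin down the support of $p$.} By Lemma~\ref{thm:onlystuopts}, every weakly stable matching in any decomposition of $p$ is constrained-efficient and student optimal for some tie-breaking. The only tie is at $s_2$, between students $1$ and $2$, and it admits two tie-breakings. Running DA on each yields exactly $M_1$ and $M_2$. Hence $p=\lambda M_1+(1-\lambda)M_2$ for some $\lambda\in[0,1]$.

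\textbf{Step 2: choose the deviating student.} By symmetry, either $\lambda<1$ or $\lambda>0$. Assume $\lambda<1$, so student $1$ receives $s_3$ with positive probability. Student $1$ then reports $\succ_1'= s_2\succ s_1\succ s_3$. Under this report, we recompute the student optimal stable matchings for the two tie-breakings. We expect to find that only $M_1$ survives as constrained-efficient and student optimal, because the other tie-breaking's DA outcome is Pareto-dominated by $M_1$ in the new instance. This is exactly Erdil and Ergin's observation.

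Applying Lemma~\ref{thm:onlystuopts} to the manipulated instance then gives $\phi(\succsim_1',\succsim_{-1})=M_1$. Since $M_1$ is weakly stable under the true profile too, and $M_1(1)=s_2$ is student $1$'s true top choice, student $1$ obtains $s_2$ with probability $1$ instead of $\lambda<1$. This is a strict sd-improvement in her true preferences and therefore profitable for every consistent cardinal utility, contradicting strategy-proofness. The case $\lambda>0$ is symmetric: student $2$ deviates and $M_2$ results.

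\textbf{Main obstacle.} The step to check carefully is the recomputation in the manipulated instance. We must verify that with $1$'s list $s_2\succ s_1\succ s_3$, every weakly stable matching in which $1$ does not get $s_2$ is Pareto-dominated by a weakly stable matching, so that constrained-sd-efficiency, via Lemma~\ref{thm:onlystuopts}, forces the output to be $M_1$.

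Concretely, take the tie-breaking favoring $2$ at $s_2$. Student $1$ is rejected from $s_2$ and proposes to $s_1$, where she outranks $3$. Student $3$ then moves to $s_2$ and displaces $2$, whom $s_2$ ranks below $3$. Student $2$ then goes to $s3$, or cascades further. We will trace this DA run and show that its outcome is dominated by, or equal to, $M_1$, and that $M_1$ remains weakly stable under the manipulated profile. If the cascade yields a stable matching other than $M_1$, we would instead compare outcomes directly: the deviation only needs to make $1$'s probability of $s_2$ exceed $\lambda$ in some sd-sense. Every matching that survives Lemma~\ref{thm:onlystuopts} in the new instance gives $1$ either $s_2$ or a school she truly ranks worse than $s_3$ but that is never chosen. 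The last fact rests on the example's property that student $3$ can always be placed at $s_1$ only if $1$ is not there.
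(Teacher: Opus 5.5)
Your proposal is correct and follows the paper's argument. Lemma~\ref{thm:onlystuopts} restricts the truthful output to a mixture $\lambda M_1+(1-\lambda)M_2$, and the Erdil--Ergin misreport $s_2\succ s_1\succ s_3$ by student $1$ (or, symmetrically, student $2$) leaves $M_1$ (resp.\ $M_2$) as the only surviving matching, so the deviator gets her top school with probability one. The DA trace you sketch already settles the step you flagged: under the tie-break favoring $2$, DA stops at $\{(1,s_1),(2,s_3),(3,s_2)\}$, and $M_1$, which is still weakly stable, Pareto-dominates it for the reported profile. Your closing fallback paragraph is therefore unnecessary and can be dropped.
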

\begin{proof}
    By Lemma~\ref{thm:onlystuopts}, we get that such a mechanism must output a convex combination of the two constrained-efficient stable matchings. However, as shown by~\citet{erdil2008s}, student $1$ or $2$ (one of them prefers one of the constrained-efficient stable matching to the other constrained-efficient stable matching in the support of the output) can submit a false preference list to ensure that the only constrained-efficient stable matching will be the one he prefers, which by Lemma~\ref{thm:onlystuopts} will also be the unique constrained-sd-efficient matching. This contradicts strategy-proofness.
\end{proof}

\section{A Column Generation Framework}
\label{sec:column_gen}
In this section, we introduce a framework to solve the optimization variant of \myprob, namely to find the ex-post stable random matching of minimal average rank that stochastically dominates a given random matching $p$. A first possible method is to formulate an integer program with separate decision variables for each of the weakly stable matchings in the support of a random matching $q$ minimizing the average expected rank among all random matchings that sd-dominate $p$. This formulation is described in Appendix \ref{sec:IP}. However, this formulation is only capable to solve very small instances in practice; it aims to simultaneously find $n^2+1$ minimum-weight weakly stable matching, which are NP-hard to generate on their own \citep{Manlove_etal2002}, and finding them simultaneously creates symmetry and causes difficulties in providing useful bounds. 

In response, we propose a \emph{column generation} framework. The idea behind column generation is to first generate a subset of the weakly stable matchings, and to then find a convex combination of the matchings in this subset that has minimal average rank while sd-dominating $p$. Next, in a separate problem, called the \emph{pricing problem}, we verify whether the found solutions is optimal over all random matchings (i.e., including random matching with alternative weakly stable matchings in their support). If not, additional weakly stable matchings are generated and the procedure is repeated.

Denote the set of all weakly stable matchings by $\M$. Let $\tilde{\M}\subseteq \M$ denote a subset of the weakly stable matchings. These could be, for example, the weakly stable matchings that were generated in order to estimate the probabilities of DA with tie-breaking. 

The following linear program $[P(\tilde\M)]$ finds a heuristic solution for the ex-post stable random assignment with minimal rank that stochastically dominates a given random matching $p$, while only allowing for weakly stable matchings in $\tilde\M$ in the decomposition. The decision variables $\lambda_\ell$ denote the weight of matching $M_\ell\in\tilde\M$ in the final decomposition.

\allowdisplaybreaks

\begin{align}
&[\text{P($\tilde\M$)}]&& \text{min} & \sum_{\ell:M_\ell\in\tilde{\M}}\lambda_\ell\left(\sum_{(i,s_k)\in E}  M_l(i,s_k)\cdot \rank_i(s_k)\right)  & &\label{IP:CG1} \\
&&&\text{s.t.}& \sum_{\ell:M_\ell\in\tilde{\M}}\sum\limits_{s_{k'}: s_{k'}\succsim_i s_k} \lambda_\ell\cdot M_\ell(i,s_{k'}) & \ge \sum\limits_{s_{k'}: s_{k'}\succsim_i s_k}p(i,s_{k'}) & ((i,s_k)\in E)\label{IP:CG2}\\
  &&  & & \sum_{\ell:M_\ell\in\tilde{\M}}\lambda_\ell & = 1 & \label{IP:CG3}\\
&&&&\lambda_\ell & \geq 0  & (l: M_l\in\tilde\M)
\label{IP:CG4}
\end{align}

Constraints (\ref{IP:CG2}) enforce that the found random matching stochastically dominates $p$, while Constraints (\ref{IP:CG3}) and (\ref{IP:CG4}) enforce that a feasible decomposition is obtained. Denote by $\mu_{ik}\geq 0$, and $\delta$ the dual variables of Constraints (\ref{IP:CG2}), and (\ref{IP:CG3}), with $(i,s_k)\in E$. The dual of formulation of [P($\tilde\M$)] is represented by formulation [D($\tilde\M$)].
\begin{align}
    &[\text{D($\tilde\M$)}]&&\text{max} & \sum_{(i,s_k)\in E}\sum\limits_{s_{k'}: s_{k'}\succsim_i s_k} p(i,s_{k'}) \cdot \mu_{ik'} &+ \delta &\label{con:D1}\\
    &&&\text{s.t.} &\sum_{(i,s_k)\in E}\sum\limits_{s_{k'}: s_{k'}\succsim_i s_k} M_\ell(i,s_{k'}) \cdot \mu_{ik'} + \delta &\leq \sum_{(i,s_k)\in E} M_\ell(i,s_k)\cdot \rank_i(s_k) & (\ell: M_\ell\in\tilde\M)\label{con:D2}\\
    &&&&\mu_{ik}&\geq 0 &((i,s_k)\in E)\label{con:D3}
\end{align}

Denote by [D] the dual formulation that contains one constraint of type (\ref{con:D2}) for each weakly stable matching in $\M$. We know that the optimal solution of [P($\tilde\M$)] over the subset $\tilde\M$ is optimal over the set $\M$ of \emph{all} weakly stable matchings if its optimal objective value is equal to the optimal objective value of [D]. To verify optimality of a feasible solution of [P($\tilde\M$)], with dual variables $\mu_{ik}$, and $\delta$, we can check whether there exists a weakly stable matching $M\in\M\setminus\tilde\M$ that violates Constraint (\ref{con:D2}) using the following formulation, which we call the \emph{pricing problem}.
\begin{align}
    &\text{max} &\sum_{(i,s_k)\in E}&\left(\sum\limits_{s_{k'}: s_{k'}\succsim_i s_k} \left(M(i,s_{k'}) \cdot \mu_{ik'}\right)-M(i,s_k)\cdot \rank_i(s_k)\right) + \delta \label{con:pricing_obj} \\
    &\text{s.t.}&M&\in\M \label{con:pricing_con}
\end{align}

If this formulation finds a matching $M^*$ with an objective value strictly larger than zero, then the found solution of [P$(\tilde\M)$] was not optimal over all weakly stable matchings in $\M$. In response, the found matching $M^*$ is added to the subset $\tilde\M$, and the formulation [P($\tilde\M$)] is solved again. This iterative approach continues until the formulation [P($\tilde\M$)] finds a solution for which the optimal objective value of the pricing problem is greater than or equal to zero, in which case we have a guarantee that the solution we found in [P($\tilde\M$)] over the subset $\tilde\M$ of weakly stable matchings was optimal over the set $\M$ of all weakly stable matchings.

Alternatively, instead of generating weakly stable matchings by maximizing (\ref{con:pricing_obj}), one could incorporate that interesting weakly stable matchings in [P($\tilde\M$)] typically have low average rank. An alternative pricing problem that incorporates this consideration more explicitly than model (\ref{con:pricing_obj})-(\ref{con:pricing_con}) could have the following shape, where $\zeta$ is a parameter.
\begin{align}
    &\text{min} &&\sum_{(i,s_k)\in E}\left(M(i,s_k)\cdot \rank_i(s_k) - \zeta \frac{\mu_{ij}}{\max_{(i', s_{k'}) \in E} \mu_{i'k'}}\right) \label{con:pricing2_obj} \\
    &\text{s.t.} & &\sum_{(i,s_k)\in E}\left(\sum\limits_{s_{k'}: s_{k'}\succsim_i s_k} \left(M(i,s_{k'}) \cdot \mu_{ik'}\right)-M(i,s_k)\cdot \rank_i(s_k)\right) + \delta > 0\label{con:pricing2_con2}\\
    &&&M\in\M \label{con:pricing2_con1}
\end{align}
Constraint (\ref{con:pricing2_con2}) causes this pricing problem to become infeasible when the solution in [P($\tilde\M$)] is optimal over all weakly stable matchings in $\M$, as it is equivalent to objective function (\ref{con:pricing_obj}). As such, infeasibility of this alternative pricing problem is an optimality certificate for the column generation procedure.

To enforce that the matching found by the pricing problem is weakly stable, i.e., belongs to $\M$, several different formulations can be used (see \cite{delorme2019mathematical} for an overview). We consider constraints imposing so-called \emph{cut-off ranks}, given their overall strong performance (see also \cite{agoston2016integer}). 

Let $r_{ij}\in \{1,\dots,c[s_j]\}$ denote the rank of student $i$ in the priorities of school $s_j$. Note that the lower the rank of a student at a school, the higher her priority is there, so $i'\succsim_{s_j}i$ if and only if $r_{i'j}\leq r_{ij}$.
Given a matching $M$, the \textit{cut-off rank} of a school $s_j$, denoted by $y_j^M$, is the rank of student with the highest rank at school $s_j$ among those students who are assigned to $s_j$ at $M$, i.e., $y_j^M = \max\{r_{ij}: M(i,s_j) = 1\}$.

The following sets of constraints will impose that only students who have a rank that is weakly lower than the cutoff rank at a certain school can be admitted there. Let $\overline{r}_j$ denote the number of indifference classes in the priorities of school $s_j$. We link the cutoff ranks with the induced matching using the following constraints.
\begin{align}
    && y_j^M &\geq M(i, s_j)\cdot r_{ij}  &((i,s_j)\in E)\label{con:cutoff_begin}\\
    && r_{ij} &\geq y_j^M - \left(\sum_{s_{j'}\succsim_i s_j}M(i,s_{j'})\right) \cdot(\overline{r}_j+1) &((i,s_j)\in E)
    \end{align}
These constraints already provide the weak fairness of the induced matching, that is no student can be admitted to a school if another student with higher priority (and thus with lower rank) is rejected. To achieve weak stability, we also have to ensure non-wastefulness, which means that no student can be rejected from a school if a seat is left empty there. We can achieve this by enforcing the cutoff rank of a school to be equal to $\overline{r}_j + 1$ at that school if the capacity of this school is not fully filled by using the following two sets of constraints, where $f_j\in\{0,1\}$ is an auxiliary binary variable.
\begin{align}
    && f_j\cdot c[s_j] &\leq \sum_{(i,s_j)\in E}M(i,s_{j})  \;\leq c[s_j]  &(s_j\in S)\\
    && y_{j}^M &\geq (1-f_j)\cdot(\overline{r}_j+1) &(s_j\in S)\\
    && \sum_{(i,s_j) \in E} M(i,s_j) & \leq 1 & (i\in N)\\
    &&M(i, s_j) &\in \{0,1\} &((i,s_j) \in E)\\
    &&f_j&\in\{0,1\} &(s_j\in S) \label{con:cutoff_end}
    \end{align}

In Appendix \ref{app:equal_treatment}, we discuss how this framework can be extended to incorporate \emph{equal treatment of equals}, i.e., impose that students with identical preferences and priorities receive the same assignment probabilities.

\section{Computational Experiments}
\label{sec:comp_exper}
In this section, we evaluate how much can be gained in terms of efficiency by applying the proposed methods both on generated and real-world data.

\subsection{Evaluated Methods}
\label{subse:methods}
We evaluate the following methods. First, let \textbf{EE} refer to the random matching that is found by iteratively resolving stable improvement cycles (SICs, Definition~\ref{def:SIC}), starting from the matching found by DA after random tie-breaking, as proposed by \citet{erdil2008s}.

Second, let \textbf{$X$-\name-heur} refer to the random matching that sd-improves upon a random matching $X$ by only using a set $\tilde{\mathcal{M}}$ of sampled matchings, as described in Section~\ref{sec:column_gen}. We will evaluate three different random matchings which we will sd-improve upon: DA with random tie-breaking (DA, DA with random tie-breaking and resolving the SICs (EE), and the Efficiency Adjusted DA mechanism (EADA) proposed by \cite{kesten2010school}. In our experiments, the set $\tilde{\mathcal{M}}$ will contain the matchings that were sampled by applying DA to different tie-breaking rules, and the matchings obtained by resolving stable improvement cycles in those matchings. The solution of this method is found by solving formulation $[\text{P($\tilde\M$)}]$ (i.e., the first step of the column generation procedure) in Section~\ref{sec:column_gen}, and is therefore a heuristic. The performance of full column generation framework is evaluated in Section~\ref{subse:eval_CG}.

Similarly, let \textbf{$X$-\name-CG} denote the random matching that is found by running the column generation framework from the initial subset $\tilde{\mathcal{M}}$ of sampled matchings while sd-dominating a random matching $X$. 

Lastly, let \textbf{$X$-\name-$N$} refer to an extension of the column generation method in which $N$ additional weakly stable matchings are sampled and added to the initial subset $\tilde{\mathcal{M}}$.

\subsection{Data Generation}
\label{subsec:data_gen}
We follow the same data generation procedure as \cite{erdil2008s}, implemented through their code, which is available online. In short, let $\ell^i, \ell^j \in \mathbb{R}^2$ denote the locations of the students $i\in N$ and $s_j \in S$, which are generated uniformly at random on $[0,1]\times[0,1]$. We refer to a dummy student with average tastes by $i=0$. The utility of student $i\in N$ for school $s_j\in S$ is determined by 
$$U_{ij} = -\beta d(\ell^i, \ell^j) + (1-\beta) \left(\alpha Z_{0,j} + (1-\alpha) Z_{ij} \right),$$
where $d(x,y)$ denotes the Euclidean distance between points $x,y\in\mathbb{R}^2$, $Z_{ij}$ are i.i.d.\ normally distributed random variables with mean zero and variance one, and $\alpha,\beta\in[0,1]$ are input parameters. The parameter $\alpha$ captures the correlation in the students' preferences, and the parameter $\beta$ captures how sensitive the students' preferences are to locational proximity. Each student is assumed to be in the walk zone of the school closest to them. The priorities of the schools only consist of two indifference classes determined by its walk zone.

\subsection{Implementation Details}

\label{subsec:implementation}
The assignment probabilities of DA with random tie-breaking are estimated by sampling $1,000$ random orderings of the students, and applying DA with single tie-breaking to the resulting instances.\footnote{Since Random Serial Dictatorship for house allocation can be viewed as a special case of DA with single tie-breaking, it follows from \citep{ABB13b}, that computing the ex-ante random matching is \#P--complete.}
Next, the SICs are resolved for those generated matchings, thus obtaining the initial subset of weakly stable matchings. If [P($\tilde\M$)] cannot find a feasible solution using the matchings in $\tilde{\mathcal{M}}$ to sd-dominate a random matching $X$, we add an artificial column $M^0$ with a very high objective coefficient, such that $M^0_{ij}=1$ for all $(i,s_j)\in E$. Whenever $\lambda_0=0$, column $M^0$ is removed, and a feasible solution is found. After initial evaluation, we implemented the second variant of the pricing problem, (\ref{con:pricing2_obj})-(\ref{con:pricing2_con1}), with $\zeta$ set to the currently best found average rank by [P($\tilde\M$)], without imposing the equal treatment of equals constraints in Appendix \ref{app:equal_treatment}. In each iteration of the pricing problem, 500 weakly stable matchings were generated, and the SICs are resolved for each of these matchings, in line with Theorem \ref{thm:onlystuopts}. A time limit of 10 minutes was imposed on the column generation for each instance of the generated data. For generated instances, each data point is the average over 10 randomly generated instances. The implementation of SICs was inspired by the published code by \citet{erdil2019replication}.\footnote{We identified a bug in the code by \citet{erdil2019replication}, which caused it to sometimes return matchings that were not weakly stable. \citet{demeulemeester2026comment} discusses this in more detail, and proposes a modified implementation.}

\subsection{Comparison with Erdil \& Ergin (2008)}
\label{subsec:comp_EE}

\begin{figure}[!tb]
\centering
\includegraphics[width=.83\linewidth]{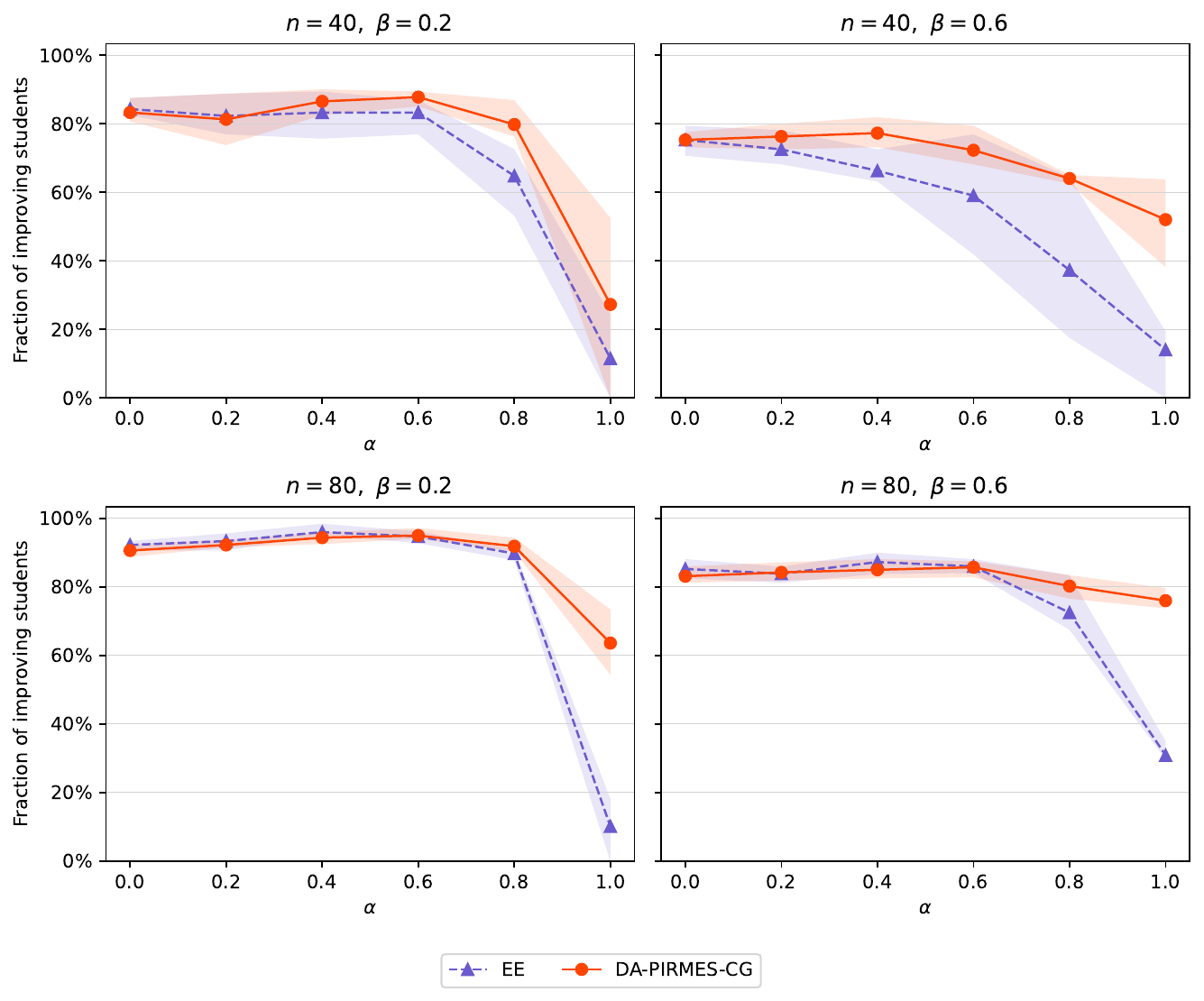}
\caption{Average fraction of improving students upon DA as a function of $\alpha$ for methods EE and DA-\name-CG, where the shaded areas display the interquartile ranges (25\%-75\%).}
\label{fig:fraction_impr_DA}

\end{figure}
In this section, we replicate the computational experiments in the seminal paper by \citet{erdil2008s}, and evaluate how our proposed solution concept DA-\name-CG performs with respect to their method EE. We evaluate both methods on generated instances with [40 students and 8 schools] and [80 students and 16 schools], $\beta\in[0.2, 0.6]$, and varying values of $\alpha$.

Figure~\ref{fig:fraction_impr_DA} shows that, for generated data, the fraction of students that improve upon DA is similar for EE and DA-\name-CG when preferences are relatively uncorrelated (i.e., $\alpha$ is low), while substantially more students improve upon DA in DA-\name-CG than in EE when $\alpha$ is high.

With the fraction of improving students upon DA being similar or higher than in EE, Figure~\ref{fig:avg_rank_impr_DA} shows that the expected improvement in rank upon DA for the improving students is substantially higher for DA-\name-CG than for EE. For 80 students, $(\alpha, \beta) = (0.4,0.2)$, for example, on average, students can be assigned to a school that is on average 1.05 spots higher on their preference list compared to DA, while EE only realizes an average improvement in rank of 0.74 spots. Note that, because the warm start of DA-\name-CG is the random matching EE, DA-\name-CG will always have a lower expected rank than EE. In general, optimality could not be proven in 10 minutes of computation time, except for 80\% of the instances with 40 students and $\beta\in\{0.8,1\}$.

\begin{figure}[!tb]
\centering
\includegraphics[width=.83\linewidth]{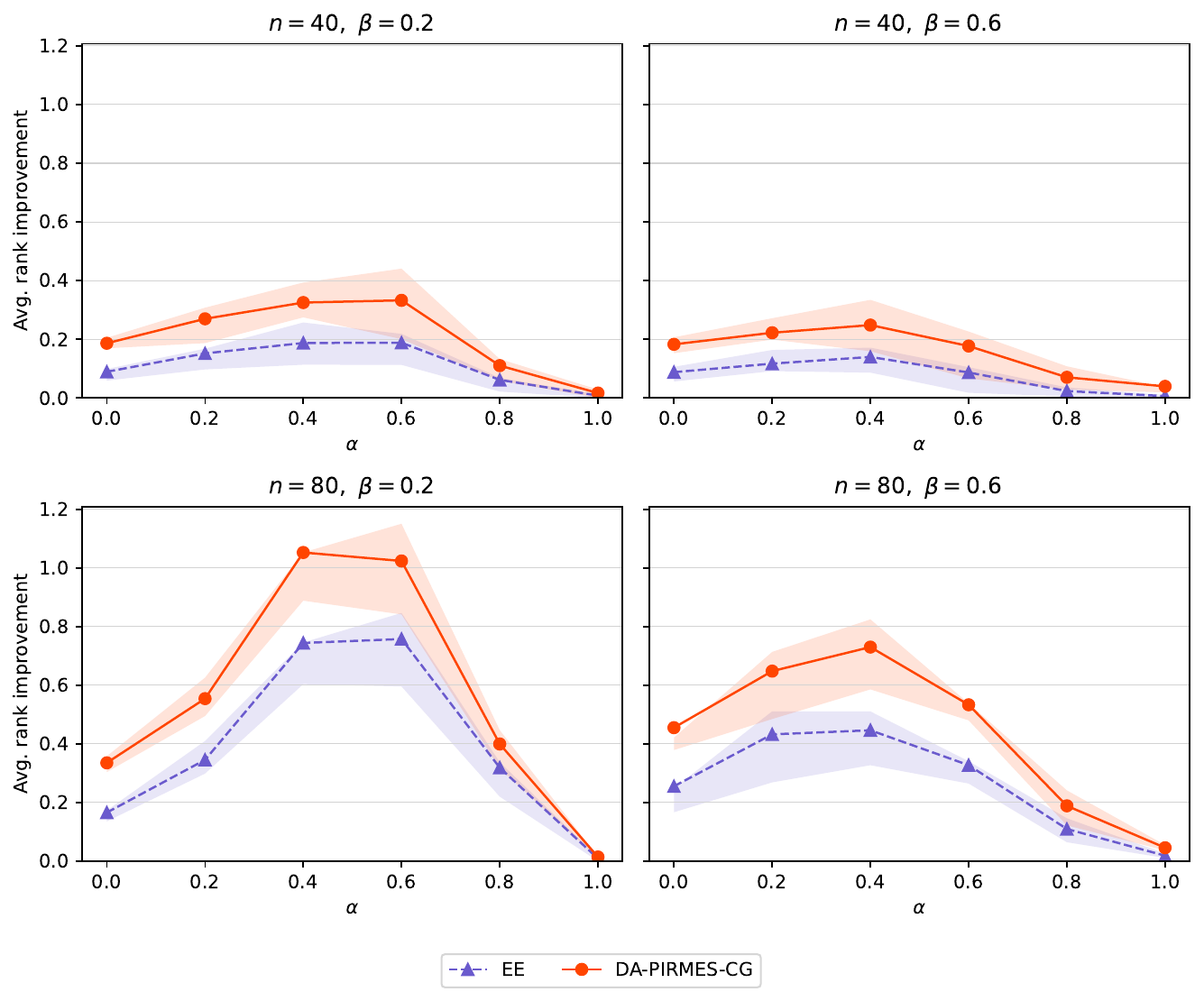}
\caption{Average improvement in rank \textit{among improving students} compared to DA as a function of $\alpha$ for methods EE, and DA-\name-CG, where the shaded areas display the interquartile ranges (25\%-75\%).}
\label{fig:avg_rank_impr_DA}
\end{figure}

Note that our proposed solution concept can be applied to any ex-post stable random matching. In Appendix \ref{app:sd_EE_cg}, we include the computational results of method EE-\name-CG, which sd-improves upon EE. In general, while the fraction of improving students is similar in EE-\name-CG and DA-\name-CG, the average rank improvement in EE-\name-CG is slightly smaller than in DA-\name-CG, because of the reduced solution space. 

\subsection{Comparison with EADA}
\label{subsec:comp_EADA}
We also evaluate the method EADA-\name, which aims to dominate the expected outcome of the Efficiency-Adjusted Deferred Acceptance (EADA) mechanism by \citet{kesten2010school}. In short, EADA allows students to waive their priorities at certain schools whenever this does not affect their own assignment. As a result, the EADA-matching may be unstable. 

In Example \ref{ex:basic}, we showed that there exist instances where EADA can be sd-dominated by an ex-post stable random matching. This is remarkable, as EADA itself is not guaranteed to be weakly stable when applied to an instance after tie-breaking. Figure \ref{fig:avg_rank_impr_EADA} evaluates how frequently EADA can be sd-dominated by ex-post stable random matchings for generated data by evaluating EADA-\name-CG. The figure displays the improvement in average rank, among improving students, compared to DA, but only for those instances where EADA-\name-CG could identify an ex-post stable random matching sd-dominating EADA. We can see that when preferences are relatively uncorrelated (i.e., $\alpha$ is low), EADA-\name-CG manages to identify an ex-post stable random matching with a lower average rank than EADA. The improvement in expected rank by EADA-\name-CG is lower than for DA-\name-CG, because of the reduced solution space. 

\begin{figure}[!tb]
\centering
\includegraphics[width=.83\linewidth]{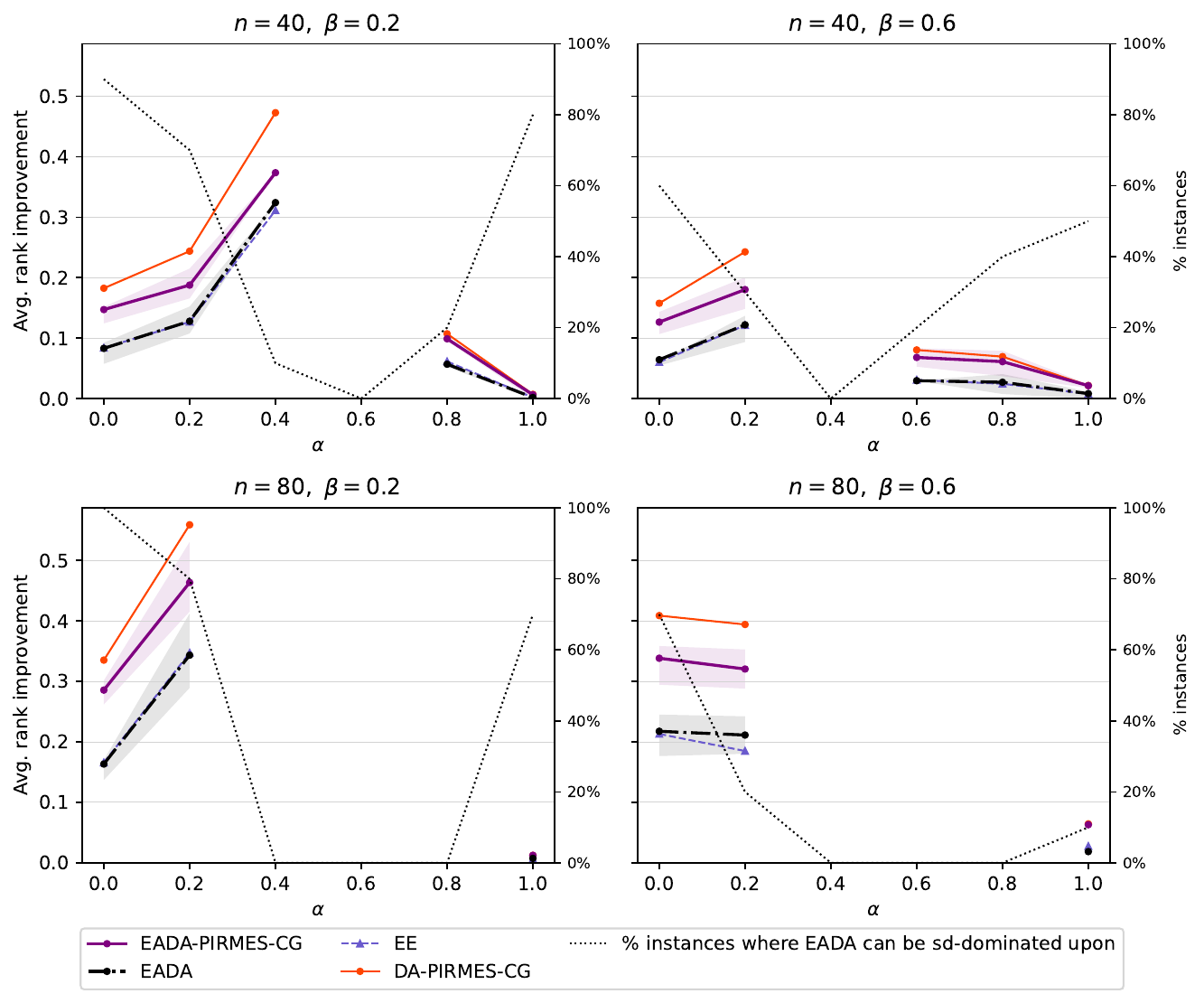}
\caption{The left axis displays the average improvement in rank \textit{among improving students} compared to DA as a function of $\alpha$, averaged over all instances where EADA could be sd-dominated upon by EADA-\name-CG. The right axis and the dotted line display the fraction of the instances in which EADA could be sd-dominated upon. The shaded areas display the interquartile ranges of EADA-\name-CG and EADA (25\%-75\%).}
\label{fig:avg_rank_impr_EADA}
\end{figure}

\subsection{Evaluation Column Generation}
\label{subse:eval_CG}
To evaluate the performance of the column generation framework with a time limit of 10 minutes per instance, we compare it to two alternative methods. DA-\name-heur only solves the first step of the column generation by including the initially generated subset of weakly stable matchings. DA-\name-10000 additionally includes the results of EE for 10,000 random tie-breakings to then solve the column generation.

Figure \ref{fig:CG_eval} displays the average rank improvement upon the column generation procedure DA-\name-CG for both methods. Note that positive values indicate an average rank that is lower than DA-\name-CG in expectation. DA-\name-10000 performs better on average than DA-\name-CG with a time limit of 10 minutes, but especially for instances with 40 students, the size of this improvement is relatively limited.

At the same time, the column generation procedure outperforms the simple heuristic DA-\name-heur, and obtains an improvement of around 0.06 in average rank across all parameter values. Nevertheless, the heuristic DA-\name-heur still realizes substantial improvements in average rank in comparison to EE, and can therefore be a valuable method in practice, especially for larger instances, or when computational power is limited.
\begin{figure}[!tb]
\centering
\includegraphics[width=.83\linewidth]{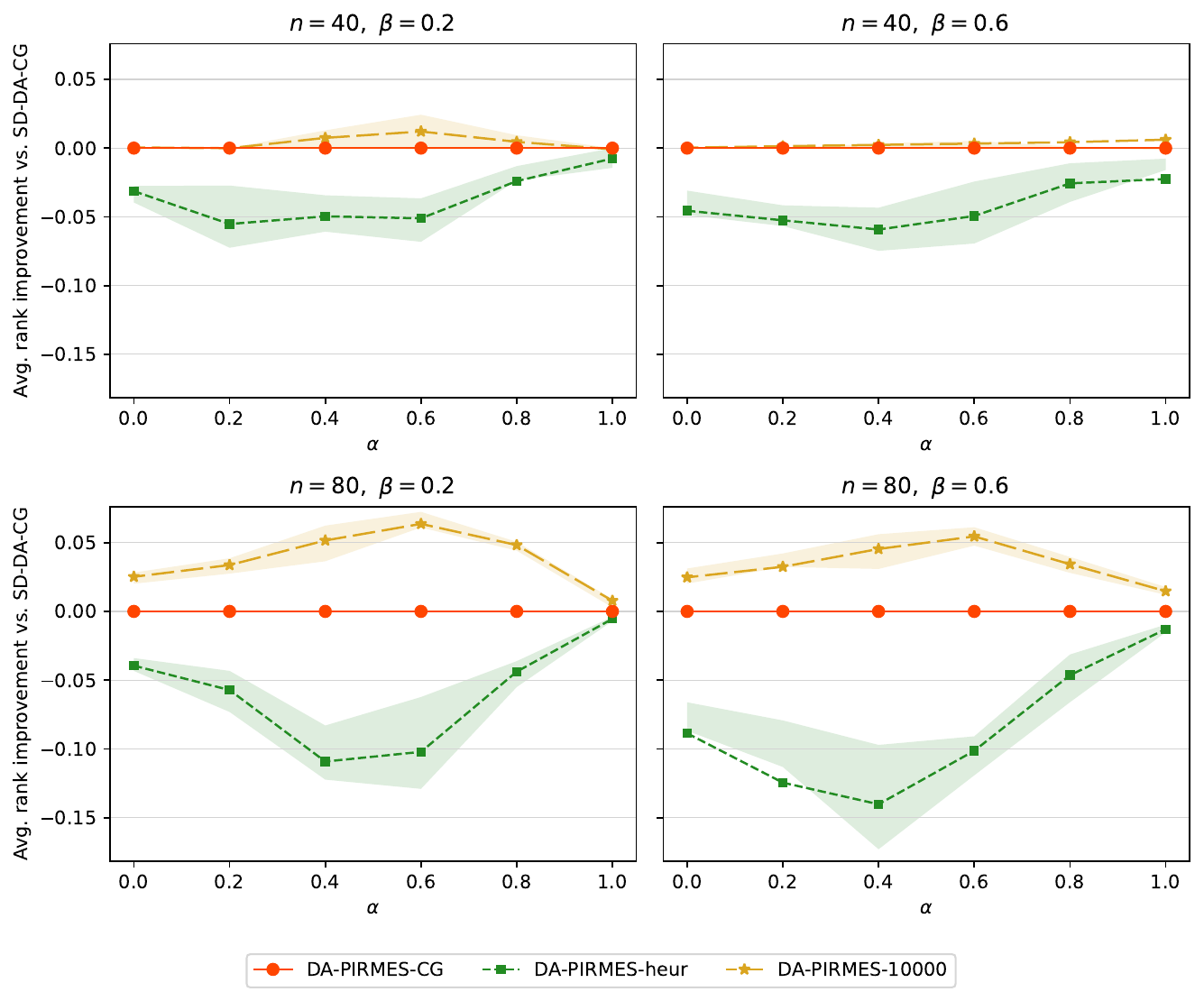}
\caption{Average rank improvement \textit{among improving students} with respect to DA-\name-CG as a function of $\alpha$ for methods DA-\name-heur, and DA-\name-10000  (positive values indicate lower expected rank), where the shaded areas display the interquartile ranges (25\%-75\%).}
\label{fig:CG_eval}
\end{figure}

\subsection{Estonian Kindergarten Data}
\label{subsec:estonia}
To evaluate the performance of the proposed methods on real-world data, we first consider the 2015 kindergarten allocation in Harku, Estonia \citep{veski2017efficiency, biro2021complexity}. The data contains the strict preferences of 152 families over 7 schools with a total capacity of 155 seats. Moreover, the data contains the distance between the students' homes and the schools, as well as whether the student has a sibling at the school they are proposing to. 

To determine the priorities of the kindergartens, we consider two different methods. First, we denote by {\sc{Sib}} the policy that gives the absolute highest priority to students who have a sibling at a given school. Denote by {\sc{NoSib}} the policy that does not take sibling information into account. Regarding the distance, the {\sc{RelDist}} method gives the same priority to all students who list a school in the same rank in their preferences, and gives higher priority to students who rank a school higher than other students. Alternatively, {\sc{Dist3}} gives the same priority to all students who rank a school in their first three choices, and ranks them above all students who rank that school fourth or worse. Combining these sibling- and distance related policies, we obtain four methods to determine the priorities of the kindergartens: {\sc{Sib-RelDist}}, {\sc{NoSib-RelDist}}, {\sc{Sib-Dist3}}, and {\sc{NoSib-Dist3}}. Note that {\sc{Sib-RelDist}} results in the priorities with the least ties, while {\sc{NoSib-Dist3}} results in the priorities with the most ties.

Table \ref{tab:Estonia} shows the findings of our computational experiments for the Estonian data set. First, both DA-\name-heur and DA-\name-CG improve substantially upon EE: for three of the four priority structures, both the fraction of improving students, as well as their average rank improvement are higher. Consider, for example, priority structure {\sc{NoSib-Dist3}}. While 11\% of the students improve upon DA under EE, DA-\name-CG manages to realize improvements for 69\% of the students. At the same time, the average improvement in rank is also six times higher in DA-\name-CG, compared to EE. For the fourth priority structure, {\sc{NoSib-RelDist}}, the fraction of improving students doubles, while the average rank improvement \textit{among improving students} is similar in DA-\name-CG and EE. Note that, because we minimize the average rank, it is possible that DA-PIRMES-CG obtains a lower fraction of improving students than DA-PIRMES-heur (e.g.{\sc{NoSib-RelDist}}). 

\begin{table}[!tbp]
\small
    \centering
    \caption{Average rank, fraction of improving students upon DA (\% stud.\ impr.), average rank improvement \textit{among improving students} upon DA (Avg.\ impr.), and average number of blocking pairs (\# BP) for different methods in data Estonian kindergartens, for different priority structures.}
    \scalebox{0.8}{
    \begin{tabular}{lcccccccc}
	\toprule
	& \multicolumn{4}{c}{{\sc{Sib-RelDist}}} & \multicolumn{4}{c}{{\sc{Sib-Dist3}}}\\
    & \multirow{2}{*}{Avg.\ rank} & \% stud.\   & \multirow{2}{*}{Avg. impr.\ } & \multirow{2}{*}{\# BP} & \multirow{2}{*}{Avg.\ rank} & \% stud.\ & \multirow{2}{*}{Avg. impr.\ } & \multirow{2}{*}{\# BP} \\
    &&impr.\ &&&&impr.\ &&\\\midrule
	DA & 1.7586 & --& --&0& 1.7603 & -- & -- & 0\\
	EE & 1.7508 &10\%& 0.0741&0& 1.7602 &3\%& 0.0020 & 0 \\
	DA-\name-heur & 1.7298 &24\%&0.1183&0& 1.7473 &35\%& 0.0373&0\\
	DA-\name-CG (8 hours) & 1.7281 & 24\%& 0.1251&0& 1.7450 & 36\% & 0.0430&0\\ 
    DA-\name-100000 (8 hours) &1.7278 &26\%&0.1171& 0 &1.7392&35\%&0.0606&0\\\midrule
    EADA & 1.6924 & 28\%&0.2393&30.0& 1.7302 &45\%& 0.0673 & 27.7\\
	\midrule
    & \multicolumn{4}{c}{{\sc{NoSib-RelDist}}} & \multicolumn{4}{c}{{\sc{NoSib-Dist3}}}\\
    & \multirow{2}{*}{Avg.\ rank} & \% stud.\   & \multirow{2}{*}{Avg. impr.\ } & \multirow{2}{*}{\# BP} & \multirow{2}{*}{Avg.\ rank} & \% stud.\ & \multirow{2}{*}{Avg. impr.\ } & \multirow{2}{*}{\# BP} \\
    &&impr.\ &&&&impr.\ &&\\\midrule
	DA & 1.7613 &--&--& 0 & 1.7497 &--&--&0\\
	EE &  1.7489 &11\%& 0.1107& 0 &1.7492 &11\%& 0.0046&0\\
	DA-\name-heur & 1.7377 &24\%&0.0972&0& 1.7319 &66\%&0.0268&0 \\
	DA-\name-CG (8 hours) &  1.7367 & 23\% & 0.1041 & 0& 1.7297 & 69\%& 0.0290 &0\\ 
    DA-\name-100000 (8 hours) &1.7355&24\%&0.1061&0 &1.7177&62\%&0.0518&0\\\midrule
    EADA &  1.6510 &38\%&0.2943&74.3& 1.6758 & 57\%& 0.1291 &73.8\\
	\bottomrule
    
\end{tabular}
}
\label{tab:Estonia}
\end{table}

Second, we observe that the column generation can improve upon the heuristic DA-\name-heur, but that the size of the improvement with eight hours of computation time is relatively limited. Third, the results for DA-PIRMES-100000 show that sampling more matchings initially may capture (limited) extra improvement in average rank, thus illustrating the potential to remove inefficiencies of our approach. Fourth, EADA obtains random matchings with lower average ranks, but the resulting matchings contained a considerable number of blocking pairs for all priority structures. EADA-\name-heur could not sd-improve upon EADA in any of the four priority structures, thus clearly illustrating the tradeoff between stability and efficiency.

\section{Conclusions}

From a theoretical perspective, we show that testing whether a random matching is constrained-sd-efficient is coNP-complete. This is in stark contrast with the polynomial complexity of testing constrained-efficiency of deterministic matchings, and the polynomial complexity of testing sd-efficiency of random matchings. 

From a practical perspective, we propose the \textit{Smart Lottery with Ex-Post Stability} (\name) mechanism, which sd-dominates the expected outcome of a given random matching (e.g., DA with random tie-breaking) without sacrificing ex-post stability. By using advanced optimization techniques such as column generation, we illustrate how this mechanism, applied to DA with random tie-breaking, can substantially reduce the average rank in practical school choice instances in comparison to standard methods to resolve Pareto-inefficiencies in, such as the \citeauthor{erdil2008s} method which resolves inefficiencies after ties have already been broken.

\paragraph{Future Directions}

Inspired by the size of the realized efficiency gains, it would be interesting to explore the adaptation of smart lotteries for other applications where inefficiencies exist due to random tie-breaking, such as college admission and resident allocation with ties in the scores, course allocation with coarse priorities, etc. We identify three main research directions towards a general framework to facilitate the adaptation of smart lotteries across these applications.

First, the proposed smart lottery framework is extremely flexible, as it can be applied to improve upon other random matchings than DA, use estimated utilities rather than sd-dominance, optimize objective functions other than minimizing the average rank, or impose different ex-post constraints. Indeed, various variants of smart lotteries have been devised for different applications in the past \citep{bronfman2018redesigning, ashlagi2014improving, DGHL23}, each with their own subtle differences in these criteria. A unified taxonomy of these smart lottery problems, inspired, for example, by the existing taxonomy of scheduling problems, could structure this growing landscape of problems. Such a taxonomy should include the random matching that is being improved upon (e.g., RSD, DA, $\ldots$), whether the random matching is simply decomposed or is also improved upon, and, if so, the objective function (e.g., minimize average rank, maximize neighborhood cohesion, etc.) and the dominance criterion (sd-dominance, estimated utilities) that are used for this improvement, and, lastly, the desired ex-post constraints (e.g., ex-post stability, ex-post Pareto-efficiency, $\ldots$). Other ex-ante stability concepts could also be considered~\citep{AzKl19b}.

Second, it would be interesting to evaluate the effect of combining different objectives when designing smart lotteries. In kidney exchange programs, for example, it is common for policies to optimize multiple objective functions in a hierarchical or weighted way. Can we, for example, simultaneously obtain improvements in welfare with ex-post stability, while also minimizing the worst-case number of unassigned students \citep{DGHL23}, and/or maximizing neighborhood cohesion \citep{ashlagi2014improving}?

Third, the Israeli resident match \citep{bronfman2018redesigning} illustrated that decision-makers are open to implement smart lotteries if ex-ante improvements can be realized. From a pragmatic point of view, how can smart lotteries best be explained to decision-makers in order to facilitate adoption in other settings?

\setlength{\parskip}{10pt}
\footnotesize\noindent \textbf{Acknowledgements}
Haris Aziz is supported by the NSF-CSIRO grant on “Fair Sequential Collective Decision-Making" (RG230833). Péter Biró acknowledges financial support from the Hungarian Scientific Research Fund (OTKA, Grant No.\ K143858) and the Hungarian Academy of Sciences (Momentum Grant No. LP2021-2). Gergely Csáji is supported by the National Research, Development
and Innovation fund, under the KDP-2023 funding scheme (grant number C2258525) and by grant ADVANCED 150556. Tom Demeulemeester acknowledges financial support from the Swiss National Science Foundation (SNSF) through Project 100018-212311. 
\setlength{\parskip}{0pt}

\normalsize

\bibliographystyle{abbrvnat}


\bibliography{abb,haris_master,aziz_personal,main}

\newpage

\clearpage
\begin{appendix}

\section{Missing Proofs from  ~Section~\ref{sec:theory}}
\label{app:proofs}

For some of our NP-hardness reductions, we use the following NP-hard problem.

\begin{problem}[\XC]
\probleminput{A set $\mathcal{X} = \{ a_1, \dots, a_{3n} \}$ of $3n$ elements and $3n$ sets $\mathcal{C} = \{ C_1, \dots, C_{3n} \}$, each containing 3 elements,  such that each element is covered by exactly 3 sets.}

\problemquestion{Is there an exact 3-cover, that is, a subset $\mathcal{C}' \subset \mathcal{C}$ such that every element is contained in exactly one of them?}
\end{problem}   
 The NP-completeness of this restricted version of \textsc{exact-3-cover} was first shown by Hein et al. \cite{hein1996complexity}, but only stated explicitly in Hickey et al. \cite{hickey2008spr}. 

\subsection*{Proof of Theorem~\ref{thm:improve-NPh}}
\begin{proof}
For containment in coNP, it is easy to see that if $p$ is not constrained-efficient, then there is an ex-post stable matching $q$, which SD-dominates $p$. Hence, given $q$, and a decomposition of $q$ into weakly stable matchings (which constitutes a polynomial size witness), we can easily check that (i) $q$ is indeed ex-post stable, and (ii) $q$ SD-dominates $p$.

To show coNP-hardness, we reduce from the NP-hard \XC\ problem. Let $I=(\mathcal{X},\mathcal{C})$ be an instance of \XC.
We create an instance $I'=(G',\capac, \succsim)$ of \myprob\ as follows.

\begin{itemize}
    \item[--]
For each element $a_i\in \mathcal{X}$, we create a gadget $A_i$, consisting of students $\alpha_i^1,\alpha_i^2,\alpha_i^3$ (they will be referred to as "element agents") and schools $\beta_i^1, \beta_i^2$. 

\item[--] For each set $C_j\in \mathcal{C}$, we create a gadget consisting of students $x_j^{\ell},v_j^{\ell},z_j^{\ell}, f_j^{\ell}$ and $y_j^{\ell}$ along with schools $c_j^{\ell},e_j^{\ell},u_j^{\ell}$ and $d_j^{\ell}$, for $\ell \in [3]$. Schools $c_j^{\ell}$ will be referred to as "set agents".

\item[--] We create two additional students $b_1,b_2$ and two additional schools $b_1',b_2'$.

\end{itemize}

Every school has capacity 1.
Let $c_l(a_i)$ denote the set agent $c_j^{\ell_i}$, such that $C_j$ is the $l$-th smallest indexed set containing $a_i$ and $a_i$ is the $\ell_i$-th smallest index element in $C_j$. That is, if $C_j=\{ a_1,a_2,a_3\}$ and it is the smallest indexed set containing  $a_2$, then $c_1(a_i)=c_j^{2}$.

Similarly, let $\alpha (c_j^{\ell})$ be the element agent $\alpha_i^l$, such that $a_i$ is the $\ell$-th smallest index element in $C_j$ and $C_j$ is the $l$-th smallest index set containing $a_i$. Finally, let $Y=\{ y_j^{\ell}\mid j\in [3n],\ell \in [3]\}$.
The preferences and priorities are described in Table \ref{tab:prefs}.

\begin{table}[ht]
    \centering
    \begin{tabular}{ll|ll}

    $\alpha_i^{l}:$  & $\beta_i^1\succ \beta_i^2\succ c_l(a_i)$  & $\beta_i^1,\beta_i^2:$ & $[\alpha_i^1,\alpha_i^2,\alpha_i^3]$ \\
    $x_j^{\ell}:$ & $d_j^{\ell}\succ c_j^{\ell}\succ e_j^{\ell -1}$ & $c_j^{\ell}:$ & $[v_j^{\ell},\alpha(c_j^{\ell}),y_j^{\ell}]\succ x_j^{\ell}$ \\
    $v_j^{\ell}:$ & $u_j^{\ell}\succ c_j^{\ell}\succ e_j^{\ell}$ & $e_j^{\ell}:$ & $[v_j^{\ell},x_j^{\ell +1}]\succ f_j^{\ell}$ \\
    $y_j^{\ell}:$ & $d_j^{\ell}\succ b_1'\succ c_j^{\ell}$ & $d_j^{\ell}:$ & $[y_j^{\ell},x_j^{\ell}]$ \\
    $z_j^{\ell}:$ & $u_j^{\ell}$ & $u_j^{\ell}:$ & $[z_j^{\ell},x_j^{\ell}]$ \\
    $b_1:$ & $b_1'\succ b_2'$ & $b_1':$ & $[b_2,Y]\succ b_1$ \\
    $b_2:$ & $b_2'\succ b_1'$ & $b_2':$ & $b_1\succ b_2$ \\
    $f_j^{\ell}:$ & $e_j^{\ell}$ & & \\
        
    \end{tabular}
   \caption{The preferences and priorities of the students and schools in Theorem~\ref{thm:improve-NPh}. Brackets indicate indifference.}
    \label{tab:prefs}
\end{table}

The initial random matching $p$ is described in Figures~\ref{fig:hardnessconstr} and \ref{fig:elem}.

\begin{figure}[ht]
    \centering
    \includegraphics[width=0.9\linewidth]{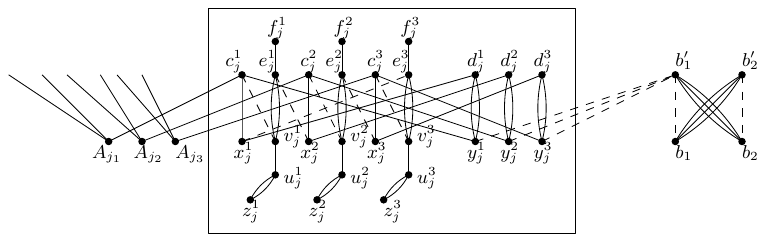}
    \caption{The construction in Theorem~\ref{thm:improve-NPh} for a set $C_j=\{ a_{j_1},a_{j_2},a_{j_3}\}$ with $j_1<j_2<j_3$. The value of each edge is $\frac{1}{3}$ times the number of parallel edges (there is only a single edge in the graph, this is only for visual purposes) for each edge in $p$. The dashed edges have value 0 in $p$.}
    \label{fig:hardnessconstr}
\end{figure}

\begin{figure}[ht]
    \centering
    \includegraphics[height=0.2\textheight]{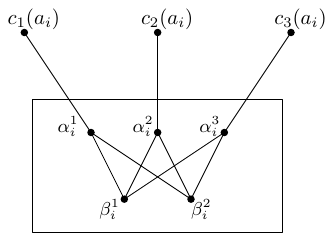}
    \caption{The gadget $A_i$ of an element $a_i$. Every edge has probability $\frac{1}{3}$.}
    \label{fig:elem}
\end{figure}

\begin{figure}
    \centering
    \includegraphics[width=0.9\linewidth]{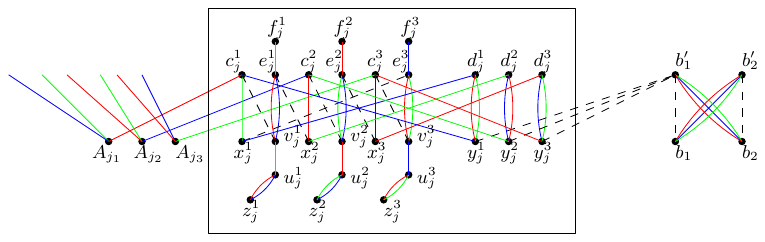}
    \caption{The decomposition of $p$ into three weakly stable matchings, shown by different colors. If $C_j$ was assigned to $a_i$ in matching $\mu_k$ ($k\in [3]$) in $G_I$, where $a_i$ is the $\ell$-th smallest index element of $C_j$, then in $M_k$, $c_j^{\ell}$ is the one from $\{ c_j^1,c_j^2,c_j^3\}$ that gets matched to an element agent, in particular to $\alpha (c_j^{\ell})$.}
    \label{fig:decompof-p}
\end{figure}
\begin{figure}
    \centering
    \includegraphics[height=0.2\textheight]{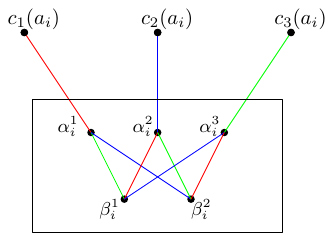}
    \caption{The decomposition of $p$ into three weakly stable matchings, shown by different colors. If $a_i$ was assigned to $C_j$ in matching $\mu_k$ ($k\in [3]$) in $G_I$, where $C_j$ is the $l$-th smallest index set containing $a_i$, then in $M_k$, $\alpha_i^l$ is the one from $\{ \alpha_i^1,\alpha_i^2,\alpha_i^3\}$ that gets matched to an set agent, in particular to $\alpha (c_j^{\ell})$.}
    \label{fig:elem-decompofp}
\end{figure}

\begin{claim}
   The random matching $p$ is ex-post stable.
  
\end{claim}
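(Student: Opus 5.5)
The plan is to exhibit an explicit decomposition $p=\frac{1}{3}(M_1+M_2+M_3)$ into three weakly stable matchings, as depicted in Figures~\ref{fig:decompof-p} and~\ref{fig:elem-decompofp}. I would then verify weak stability of each $M_k$ by a local case check.

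\textbf{Step 1: split the incidence structure into three matchings.} Consider the bipartite incidence graph $G_I$ between the elements $\mathcal{X}$ and the sets $\mathcal{C}$. By the restriction in \XC, every element lies in exactly $3$ sets and every set has exactly $3$ elements, so $G_I$ is $3$-regular bipartite. By K\H{o}nig's edge-colouring theorem, its edge set splits into three perfect matchings $\mu_1,\mu_2,\mu_3$. Each edge $(a_i,C_j)$ of $G_I$ corresponds to exactly one pair $(\alpha(c_j^{\ell}),c_j^{\ell})$ in $I'$.

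\textbf{Step 2: define $M_k$ gadget by gadget.}
\begin{itemize}
\item \emph{Element gadget $A_i$.} Let $\mu_k$ assign $a_i$ to its $l$-th set $C_j$. In $M_k$, the agent $\alpha_i^l$ is matched to $c_l(a_i)$. The two other element agents of $A_i$ are matched to $\beta_i^1$ and $\beta_i^2$, rotated over $k$ so that each $\alpha_i^{l'}$ receives each of $\beta_i^1,\beta_i^2,c_{l'}(a_i)$ exactly once over $k=1,2,3$. This reproduces the value $\frac13$ on every edge in Figure~\ref{fig:elem}.
\item \emph{Set gadget $C_j$.} Since $\mu_k$ is perfect, exactly one $c_j^{\ell}$ is matched to an element agent in $M_k$, namely the one whose element is assigned to $C_j$ by $\mu_k$. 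For the remaining students $x_j^{\ell},v_j^{\ell},z_j^{\ell},f_j^{\ell},y_j^{\ell}$ and schools $e_j^{\ell},u_j^{\ell},d_j^{\ell}$, I would read the assignment off the coloured edges of Figure~\ref{fig:decompof-p}. This is a fixed pattern depending only on which index $\ell$ is the selected one; rotating $\ell$ over $k$ gives each edge multiplicity $0,1,2,3$, i.e.\ values $0,\frac13,\frac23,1$ as in Figure~\ref{fig:hardnessconstr}.
\item \emph{Students $b_1,b_2$.} These are matched identically in all three matchings, as dictated by $p$.
\end{itemize}
Averaging the three matchings then gives $p$ edge by edge. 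Checking this is a bookkeeping step against the figures.

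\textbf{Step 3: weak stability of each $M_k$.} For each student $s$ and each school $t$ with $t\succ_s M_k(s)$, I would show that $t$ is full and holds a student with weakly higher priority than $s$.
\begin{itemize}
\item \emph{Element agents.} The $\beta$-schools are always full with tied students, so an element agent matched to $\beta_i^2$ or to $c_l(a_i)$ cannot block them.
\item \emph{Set agents $c_j^{\ell}$.} The top priority class of $c_j^{\ell}$ is $[v_j^{\ell},\alpha(c_j^{\ell}),y_j^{\ell}]$. Whenever $c_j^{\ell}$ is not taken by its element agent, it must be filled by $v_j^{\ell}$ or $y_j^{\ell}$. Then the only lower student, $x_j^{\ell}$, cannot block, and tied students cannot block by weak stability.
\item \emph{Students $y_j^{\ell}$.} A $y_j^{\ell}$ not at $d_j^{\ell}$ could block $b_1'$. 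So $b_1'$ must be occupied by $b_2$ or by some $y$, which are top-priority and tied with $y_j^{\ell}$; the latter pair uses that $Y$ and $b_2$ lie in one tie at $b_1'$.
\item \emph{Remaining schools.} The schools $d_j^{\ell}$, $u_j^{\ell}$ and $e_j^{\ell}$ have only tied top classes or dedicated students ($z$, $f$), and the same argument applies to them.
\end{itemize}

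\textbf{Expected obstacle.} The main difficulty is the set gadget: making sure the cyclic chain $x_j^{\ell}\to e_j^{\ell-1}$, $v_j^{\ell}\to e_j^{\ell}$ is consistent in each $M_k$. Concretely, when $c_j^{\ell}$ goes to an element agent, the displaced $v_j^{\ell}$ and $x_j^{\ell}$ must be placed so that no $e$-school or $c$-school is left empty while a student who prefers it remains below. I would handle this by fixing the pattern once for ``$c_j^1$ selected'' and arguing the other two cases by the cyclic symmetry $\ell\mapsto\ell+1$ of Table~\ref{tab:prefs}.
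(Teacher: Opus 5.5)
\textbf{The gap is in Step~3.} Your construction is the paper's: a K\H{o}nig splitting of $G_I$ into $\mu_1,\mu_2,\mu_3$, one selected $c_j^{\ell}$ per set in each $M_k$, and rotation inside the gadgets. But Step~3, which carries all the content of the claim, rests on a false premise. You assert that whenever $c_j^{\ell}$ is not taken by $\alpha(c_j^{\ell})$, it is filled by $v_j^{\ell}$ or $y_j^{\ell}$, so the lower student $x_j^{\ell}$ never matters. In the actual $p$ it is the opposite: $p(v_j^{\ell},c_j^{\ell})=0$ and $p(x_j^{\ell},c_j^{\ell})=\frac13$ (item~6 in the proof of Claim~\ref{claim:whocanimprove}). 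So in one of the three matchings, $c_j^{\ell}$ holds its lowest-priority student.

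The same problem affects $e_j^{\ell}$, which is not a single-tie school. It holds $f_j^{\ell}$, who is strictly below $v_j^{\ell}$ and $x_j^{\ell+1}$, with positive probability. Moreover, item~7 shows that $e_j^{\ell}$ is saturated by $v_j^{\ell}$ and $f_j^{\ell}$ alone, so $p(x_j^{\ell+1},e_j^{\ell})=0$, and $x_j^{\ell+1}$ is unmatched in one of the matchings. ``The same argument applies'' therefore does not dispose of $e_j^{\ell}$.

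\textbf{What you need to verify instead is exactly the paper's check.} Every school is saturated by $p$, hence full in each $M_k$, so non-wastefulness is automatic. The $\beta$-, $d$- and $u$-schools each consist of a single tie, and $b_1',b_2'$ always hold a top student. The only situations in which a school holds a non-top student are these two:
\begin{itemize}
\item[(i)] $c_j^{\ell}$ holds $x_j^{\ell}$. Then you need $y_j^{\ell}$ at $d_j^{\ell}$, $v_j^{\ell}$ at $u_j^{\ell}$, and $\alpha(c_j^{\ell})$ at a $\beta$-school.
\item[(ii)] $e_j^{\ell}$ holds $f_j^{\ell}$. Then you need $v_j^{\ell}$ at $u_j^{\ell}$, and $x_j^{\ell+1}$ at $d_j^{\ell+1}$ or $c_j^{\ell+1}$, not unmatched.
\end{itemize}
The condition on $\alpha(c_j^{\ell})$ comes from the perfectness of $\mu_k$. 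Since $c_j^{\ell}$ is unselected, $\mu_k$ sends its element to another set, so $\alpha(c_j^{\ell})$ is one of the two agents of its gadget placed at $\beta_i^1,\beta_i^2$.

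The conditions on $v$, $x$, $y$ are genuine constraints on the pattern you fix for ``$c_j^1$ selected''. For instance, in that pattern $x_j^1$ is unmatched: $d_j^1$ holds $y_j^1$, $c_j^1$ holds the element agent, and $p(x_j^1,e_j^3)=0$. This forces $e_j^3$ to hold $v_j^3$ rather than $f_j^3$. Your Step~3 has to establish (i) and (ii) for the chosen pattern, rather than argue that low-priority occupancy never occurs.
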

\begin{proof}
First of all, as every element is contained in exactly three sets in $I$, the bipartite graph $G_I(\mathcal{X},\mathcal{C},E_I)$, where the vertices are the sets and elements respectively, and edges represent the inclusion relations, is 3-regular. Therefore, it is a union of 3 perfect matchings $\mu_1,\mu_2,\mu_3$--a well known result due to Kőnig~\cite{konig1916graphen}. 

Hence, with the help of $\mu_1,\mu_2,\mu_3$, we can define three matchings $M_1,M_2,M_3$ in $G'$ such that in each of them, exactly one of $c_j^1,c_j^2,c_j^3$ gets matched to an element agent. These are described in Figures~\ref{fig:decompof-p} and \ref{fig:elem-decompofp}.

It is straightforward to verify that all three matchings are weakly stable. If $c_j^{\ell}$ does not get a top ranked student (i.e. it gets $x_j^{\ell}$), then all of $v_j^{\ell },\alpha (c_j^{\ell})$ and $y_j^{\ell}$ get a strictly better school. If $e_j^{\ell}$ does not get a top ranked student (so receives $f_j^{\ell}$), then both $x_j^{\ell +1}$ and $v_j^{\ell}$ get a better school. Schools $b_1',b_2'$ always get their best student. Finally, the other schools are completely indifferent and all receive a student.
\end{proof}
The following claim is the central tool of our hardness proof.
\begin{claim}\label{claim:whocanimprove}
    If $q$ is a random matching that SD-dominates $p$, then $p(e)=q(e)$ for any edge $e\notin \{ (b_1,b_1'),(b_1,b_2'),(b_2,b_1'),$  $(b_2,b_2')\}$.
\end{claim}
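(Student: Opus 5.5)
The plan is a "saturation and propagation" argument that uses only the SD inequalities and the unit school capacities; ex-post stability of $q$ is never needed. For a student $i$ and an acceptable school $s$, write $P_i(s)=\sum_{s'\succsim_i s}p(i,s')$ and $Q_i(s)=\sum_{s'\succsim_i s}q(i,s')$. SD-dominance gives $Q_i(s)\ge P_i(s)$ for all such pairs. Each school $t$ gives $\sum_i q(i,t)\le 1$. The idea is to find, school by school, a group of students whose lower bounds on $q(\cdot,t)$ already add up to $1$. This forces all of those inequalities to be tight and fixes the incident edges. We then feed the fixed values back into the next level of preference lists.

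\textbf{Step 1: element gadgets $A_i$.} By Figure~\ref{fig:elem}, each $\alpha_i^l$ has probability $\frac13$ at each of $\beta_i^1$, $\beta_i^2$ and $c_l(a_i)$, and is fully matched in $p$. The top-choice constraints give $q(\alpha_i^l,\beta_i^1)\ge\frac13$ for $l\in[3]$. These sum to $1=\capac[\beta_i^1]$, so all three are equalities and no other student receives $\beta_i^1$. Repeating this at the top-two level gives the same for $\beta_i^2$. At the third level, $q(\alpha_i^l,c_l(a_i))\ge \frac13$, so every $\alpha$ agent gets its set agent with probability exactly what $p$ prescribes, or more. In the same way, the unit-demand students $z_j^\ell$ and $f_j^\ell$, whose single acceptable school has a fixed share in $p$, give lower bounds on $q(z_j^\ell,u_j^\ell)$ and $q(f_j^\ell,e_j^\ell)$.

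\textbf{Step 2: set gadgets.} Here I would read the values of $p$ off Figure~\ref{fig:hardnessconstr} and argue as follows.
\begin{itemize}
\item The students $y_j^\ell$ and $x_j^\ell$ fill $d_j^\ell$ to total $1$ in $p$, so their top-choice constraints saturate $d_j^\ell$.
\item The students $z_j^\ell$ and $v_j^\ell$ saturate $u_j^\ell$ in the same way.
\item The top-two constraints of $x_j^\ell$ and $v_j^\ell$, the lower bound from $\alpha(c_j^\ell)$ obtained in Step 1, and the third-choice constraint of $y_j^\ell$ should together saturate $c_j^\ell$.
\item Finally, $e_j^\ell$ is saturated by $v_j^\ell$, $x_j^{\ell+1}$ and $f_j^\ell$.
\end{itemize}
Along the way every student is shown to be matched in $q$ with at least its $p$-probability. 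Wherever its cumulative constraint is tight at every level, its whole distribution coincides with $p$.

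\textbf{The main obstacle.} I expect the difficulty to be the interaction between $y_j^\ell$ and $b_1'$, together with the cyclic indexing $x_j^\ell\to e_j^{\ell-1}$. Because $b_1'$ is the second choice of $y_j^\ell$, the top-two constraint of $y_j^\ell$ could a priori be met partly via $b_1'$ instead of $c_j^\ell$. Likewise, the cycle means no school can be treated first in isolation. To handle both, I would use a global counting argument instead of a school-by-school induction. Sum the full-matching constraints $Q_i(\text{last})\ge P_i(\text{last})$ over all students outside $\{b_1,b_2\}$. Then compare this with the total capacity of all schools other than $b_1',b_2'$, which $p$ saturates exactly. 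This shows that no probability mass from outside the $b$-gadget can go to $b_1'$: in $p$ no student of $Y$ receives $b_1'$, and any mass a $y$ placed there would leave some $c_j^\ell$ or another school under-filled. Some other student would then have to fall short of its $p$-guarantee, a contradiction.

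Once it is known that no student outside the $b$-gadget is assigned to $b_1'$ or $b_2'$, every local inequality from Step 2 becomes an equality. This yields $q(e)=p(e)$ for all edges except the four edges between $\{b_1,b_2\}$ and $\{b_1',b_2'\}$, as claimed.
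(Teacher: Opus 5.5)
\textbf{Gap: the step $q(y_j^\ell,b_1')=0$.} Your saturation-and-propagation scheme is the same as the paper's for the local steps: fill $\beta_i^1,\beta_i^2$, then $d_j^\ell$, $u_j^\ell$, $c_j^\ell$, $e_j^\ell$. But the one step that needs an actual idea, showing $q(y_j^\ell,b_1')=0$, is not established by the argument you give.

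Let $N'=N\setminus\{b_1,b_2\}$, and let $K$ be the total capacity of the schools other than $b_1',b_2'$. Then $\sum_{i\in N'}P_i(\text{last})=K$. In $q$, however, the students of $N'$ can draw mass from those schools and also from $b_1'$ via the $y$'s, so $\sum_{i\in N'}Q_i(\text{last})\le K+\sum_{j,\ell}q(y_j^\ell,b_1')$. Your count therefore only yields $\sum_{j,\ell}q(y_j^\ell,b_1')\ge\sum_{i\in N'}\bigl(Q_i(\text{last})-P_i(\text{last})\bigr)\ge 0$, which is no contradiction; the inequality runs the wrong way.

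The intuition behind it is also false. If $y_j^\ell$ moves $\varepsilon$ from $c_j^\ell$ to $b_1'$, the freed capacity at $c_j^\ell$ can only help $x_j^\ell$ or $v_j^\ell$. An under-filled school never pushes anyone below their $p$-guarantee. Without the $b$-students this shift would be a genuine sd-improvement, so no argument that ignores $b_1,b_2$ can rule it out.

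\textbf{The missing idea.} Use the SD constraints of $b_1$ and $b_2$ themselves, which you explicitly left out of the count; this is the paper's point (5). In $p$ both are matched with probability $1$, and each accepts only $b_1'$ and $b_2'$. Hence $Q_{b_1}(b_2')\ge 1$ and $Q_{b_2}(b_1')\ge 1$ force $b_1,b_2$ to use the full capacity $2$ of $\{b_1',b_2'\}$, leaving nothing for any $y_j^\ell$.

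\textbf{Completing the proof.} With $q(y_j^\ell,b_1')=0$ in hand, $y_j^\ell$'s last-level constraint together with $q(y_j^\ell,d_j^\ell)=\frac23$ gives $q(y_j^\ell,c_j^\ell)\ge\frac13$, and your $c_j^\ell$ bullet then works. The school-by-school order $\beta^1,\beta^2,d,u,c,e$ goes through, so no global count is needed. The cyclic index $x_j^\ell\to e_j^{\ell-1}$ is not an obstacle: $e_j^{\ell-1}$ is $x_j^\ell$'s last choice, and $e_j^{\ell-1}$ is saturated by $v_j^{\ell-1}$ and $f_j^{\ell-1}$ alone once $c$ is fixed, so $x_j^\ell$ gets $0$ there.
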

\begin{proof}
Let $q$ be a random matching that stochastically dominates $p$.
    \begin{enumerate}
        \item For the edges $(\alpha_i^l,\beta_i^1)$ the statement follows from the fact that $\beta_i^1$ is best for all three $\alpha_i^l$ agents, and each gets it with probability $\frac{1}{3}$, so no one can improve it in $q$ without making someone worse in the SD sense. Similarly, as $\beta_i^2$ is the second best for all three, it also holds for the $(\alpha_i^l,\beta_i^2)$ edges.
        \item For edges of type $(\alpha_i^l,c_j^{\ell})$, it follows from point (1), as each student $\alpha_i^l$ must remain matched with probability 1 in $q$, and they have no other acceptable school. 
        \item For edges incident to $d_j^{\ell}$, the statements holds because $d_j^{\ell}$ is saturated in $p$ and every student that considers it acceptable, considers it best.
        \item For edges incident to $u_j^{\ell}$, we get that $q(v_j^{\ell},u_j^{\ell})\ge \frac{2}{3}$, as $u_j^{\ell}$ is the best school of $v_j^{\ell}$ and $q(z_j^{\ell},u_j^{\ell})\ge \frac{1}{3}$ for the same reason. Hence, these edges also must have the same values as $p$.
        \item If $q(y_j^{\ell},b_1')>0$ for some $j\in [3n],\ell\in [3]$, then $b_1$ and $b_2$ could not remain matched with probability 1, as they both only consider $b_1'$ and $b_2'$ acceptable, implying $q(y_j^{\ell},b_1')= p(y_j^{\ell},b_1')=0$. 
        Hence, for the edges $(y_j^{\ell},c_j^{\ell})$ we must have $q(y_j^{\ell},c_j^{\ell})=\frac{1}{3}=p(y_j^{\ell},c_j^{\ell})$, by point (3) and the fact that the $y_j^{\ell}$ students must remain matched with probability 1. 
        \item For the edges incident to $c_j^{\ell}$, we get that $q(x_j^{\ell},c_j^{\ell})=\frac{1}{3}=p(x_j^{\ell},c_j^{\ell})$, because $x_j^{\ell}$ cannot get more probability for $d_j^{\ell}$ by (3) and (5), by $c_j^{\ell}$ being his second best and by $c_j^{\ell}$ already being matched with probability at least $\frac{2}{3}$ from (2) and (5). It also follows that $q(v_j^{\ell},c_j^{\ell})=0=p(v_j^{\ell},c_j^{\ell})$. 
        \item Finally, for the edges incident to $e_j^{\ell}$, we get that $q(v_j^{\ell},e_j^{\ell})\ge \frac{2}{3}$ and $q(f_j^{\ell},e_j^{\ell})\ge \frac{1}{3}$, so the probabilities must remain the same on these edges as in $p$, as the probability of being assigned cannot decrease for any students. 
    \end{enumerate}
The only remaining edges of $G'$ are the ones in the statement of the claim.
\end{proof}

By Claim~\ref{claim:whocanimprove} we obtain that the only possible way to stochastically dominate $p$ is to increase the probability on the edges $(b_1,b_1'),(b_2,b_2')$ by some number $\varepsilon >0$ and decrease by $\varepsilon$ on the edges $(b_1,b_2'),(b_2,b_1')$. 

\begin{claim}
    If there exists an ex-post stable random matching $q$ that stochastically dominates $p$, then there exists an exact 3-cover in $I$.
\end{claim}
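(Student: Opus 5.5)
We argue by looking at a single weakly stable matching in a decomposition of $q$, and reading off an exact 3-cover from the structure that weak stability forces on it.

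\textbf{Step 1: find a matching that uses $(b_1,b_1')$.} Let $q$ be ex-post stable and sd-dominate $p$, and fix a decomposition $q=\sum_l\lambda_l M_l$ into weakly stable matchings with all $\lambda_l>0$. By Claim~\ref{claim:whocanimprove} and the remark after it, $q(b_1,b_1')=\varepsilon>0$, and $q$ agrees with $p$ on every other edge. Hence some $M:=M_l$ in the support contains $(b_1,b_1')$. Since $M\le q/\lambda_l$ coordinatewise, $M$ uses no edge with $q(e)=0$, i.e., no edge with $p(e)=0$ outside the $b$-gadget. The edges with $p(e)=0$ include $(y_j^\ell,b_1')$, $(v_j^\ell,c_j^\ell)$, and the dashed edges of Figure~\ref{fig:hardnessconstr}. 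Moreover, every student who is matched with probability $1$ in $p$ (hence in $q$) is matched in every $M_l$, in particular in $M$.

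\textbf{Step 2: consequences of $b_1$ sitting at $b_1'$.} In $M$, school $b_1'$ holds $b_1$, which is strictly below every $y_j^\ell$ (and $b_2$) in $b_1'$'s priority. Weak stability then forces every $y_j^\ell$ to be at a school she prefers to $b_1'$, namely $d_j^\ell$. Consequently no $x_j^\ell$ is at $d_j^\ell$, so each matched $x_j^\ell$ is at $c_j^\ell$ or at $e_j^{\ell-1}$.

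\textbf{Step 3: element gadgets.} Each $\alpha_i^l$ is matched in $M$. The schools $\beta_i^1,\beta_i^2$ can absorb only two of the three element agents of $a_i$. Therefore exactly one $\alpha_i^l$ is matched to its set agent $c_l(a_i)$. Define $\mathcal C'$ as the family of sets $C_j$ such that some $c_j^\ell$ is matched to an element agent in $M$. Each element is then covered by exactly one "chosen slot".

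\textbf{Step 4 (main obstacle): all-or-nothing inside a set gadget.} It remains to show that for each $j$, either all three $c_j^1,c_j^2,c_j^3$ receive element agents in $M$, or none does. Then every $C_j\in\mathcal C'$ uses all three of its slots, and since each element has exactly one slot filled, $\mathcal C'$ is an exact 3-cover. The tools are two blocking-pair observations together with the support restriction:
\begin{itemize}
\item[(a)] Since $(v_j^\ell,c_j^\ell)$ is unused, $v_j^\ell$ is at $u_j^\ell$ or at $e_j^\ell$. If $v_j^\ell$ is at $e_j^\ell$, then $c_j^\ell$ must not hold $x_j^\ell$ (its only non-top student), or else $(v_j^\ell,c_j^\ell)$ blocks. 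Because $y_j^\ell$ is at $d_j^\ell$, this means $c_j^\ell$ holds an element agent.
\item[(b)] If $c_j^\ell$ holds an element agent, then $x_j^\ell$ must go to $e_j^{\ell-1}$. This displaces $v_j^{\ell-1}$, who must then sit at $u_j^{\ell-1}$. That in turn pushes $x_j^{\ell-1}$ away from $u_j^{\ell-1}$ and away from $e_j^{\ell-2}$'s alternatives, and possible blocks involving $f_j^{\ell-1}$ and $z_j^{\ell-1}$ have to be checked.
\end{itemize}
I would do a case analysis around the cyclic index $\ell\to\ell-1\pmod 3$. The goal is to show that one $c_j^\ell$ holding an element agent propagates to $c_j^{\ell-1}$, and by symmetry around the whole cycle. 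Conversely, if some $c_j^\ell$ holds $x_j^\ell$, (a) should propagate the other way to all three. This case chase, tracking which of $x$, $v$, $f$, $z$ occupy $u$, $e$, $c$ without creating blocking pairs, is the delicate part. If the direct propagation fails, I would fall back on a counting argument: $q$ agrees with $p$ on these edges, so the marginals over the whole gadget are fixed, and combined with the local constraints this should exclude mixed configurations.
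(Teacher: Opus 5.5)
Your Steps 1--3 coincide with the paper's argument, and you correctly identify the all-or-nothing property of a set gadget as the crux. Your observation (a) is also correct; it is in fact the second half of what is needed. The gap is that Step 4 is never carried out, and sub-step (b), which is meant to drive the propagation, is wrong. You assert that if $c_j^\ell$ holds an element agent then $x_j^\ell$ must go to $e_j^{\ell-1}$, displacing $v_j^{\ell-1}$ to $u_j^{\ell-1}$. But $p(x_j^\ell,e_j^{\ell-1})=0$: this is one of the zero-weight edges, and $x_j^\ell$ is not matched with probability $1$ in $p$. So by Claim~\ref{claim:whocanimprove} and your own support remark in Step 1, $M$ cannot use this edge. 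Worse, the configuration you describe would break the propagation rather than establish it. With $v_j^{\ell-1}$ at $u_j^{\ell-1}$, school $c_j^{\ell-1}$ could hold $x_j^{\ell-1}$ without creating any blocking pair. Your fallback counting argument also cannot work. The fixed marginals of $q$ constrain only the average over the decomposition, not the single matching $M$, so they cannot exclude a mixed configuration in $M$.

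The missing step is short. Suppose $c_j^\ell$ holds an element agent in $M$. Then $x_j^\ell$ is \emph{unmatched} in $M$: $c_j^\ell$ is taken, $d_j^\ell$ is taken by $y_j^\ell$ (your Step 2), and the edge to $e_j^{\ell-1}$ is unused. Since $x_j^\ell$ has top priority at $e_j^{\ell-1}$ while $f_j^{\ell-1}$ is strictly below, weak stability (no blocking by $(x_j^\ell,e_j^{\ell-1})$) forces $e_j^{\ell-1}$ to hold $v_j^{\ell-1}$. Now apply your (a) at index $\ell-1$, using that $c_j^{\ell-1}$ is filled with probability $1$ and hence in every support matching. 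This shows that $c_j^{\ell-1}$ holds an element agent. Iterating $\ell\to\ell-1$ around the 3-cycle gives all three. This is exactly the paper's argument, phrased there as a contradiction from assuming $c_j^{\ell-1}$ holds $x_j^{\ell-1}$. No separate ``converse'' direction is needed.
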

\begin{proof}
As we have already observed, this implies that $q(b_1,b_1')>0$ and $q(e)=p(e)$ for $e\notin  \{ (b_1,b_1'),(b_1,b_2'),$  $(b_2,b_1'),(b_2,b_2')\}$. Hence, if $q$ is ex-post stable, then we must have a weakly stable matching $M$ in the support of $q$ that contains $(b_1,b_1')$.

As $b_1'$ prefers any $y_j^{\ell}\in Y$ to $b_1$, we get that every $y_j^{\ell}$ must be matched to a better partner than $b_1'$, so to $d_j^{\ell}$ in $M$.

Observing the probabilities in the gadgets $A_i$, it is easy to see that in any weakly stable matching in the support of $q$, two of $\alpha_i^l, l\in [3]$ are matched to $\beta_i^1,\beta_i^2$ and one of them to a $c_j^{\ell}$ agent. Hence, for each element $a_i$, exactly one $\alpha_i^l$ element agent is matched to a set agent. 

Suppose that an element agent $\alpha_i^l$ is matched to a set agent $c_j^{\ell}$, but $c_j^{\ell-1}$ is not matched to an element agent. Then, as $c_j^{\ell -1}$ is matched with probability 1, we get that $(x_j^{\ell-1},c_j^{\ell -1})\in M$. As $M$ is weakly stable, but $(c_j^{\ell -1},v_j^{\ell-1})$ does not block $M$, we get that $v_j^{\ell-1}$ must be matched to $u_j^{\ell-1}$ in $M$. Hence, we get that in $M$, $x_j^{\ell}$ must be unmatched and $e_j^{\ell -1}$ must be matched to $f_j^{\ell -1}$ as $q(x_j^{\ell},e_j^{\ell-1})=0$. However, this implies that $(x_j^{\ell},e_j^{\ell-1})$ blocks $M$, a contradiction.

Therefore, we also obtain that for each set $C_j$, if one of its $c_j^{\ell}$ set agents get matched to an element agent $\alpha_i^l$, then all three of them do. Hence, we obtain that there must exists an exact 3-cover in $I$.
\end{proof}

\begin{claim}
    If there exists an exact 3-cover in $I$, then there exists an ex-post stable random matching $q$ stochastically dominating $p$.
\end{claim}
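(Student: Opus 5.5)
The plan is to write $q$ as a small perturbation of the known decomposition $p=\frac13(M_1+M_2+M_3)$, and to use the exact 3-cover $\mathcal{C}'$ to build one extra weakly stable matching $M^*$ that contains $(b_1,b_1')$ and $(b_2,b_2')$. First I pin down the $b$-part of $p$. In each $M_k$ some $y_j^{\ell}$ is matched to $c_j^{\ell}$, which is worse for it than $b_1'$. Since $b_1'$ ranks every $y\in Y$ above $b_1$, weak stability then forces $(b_1,b_2'),(b_2,b_1')\in M_k$. So $p(b_1,b_2')=p(b_2,b_1')=1$, and any $q$ that keeps all other marginals and moves mass $\varepsilon>0$ to $(b_1,b_1'),(b_2,b_2')$ sd-dominates $p$, because $b_1$ and $b_2$ both strictly gain.

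Next I construct $M^*$ from $\mathcal{C}'$. For each element $a_i$, let $C_j\in\mathcal{C}'$ be the unique cover set containing it. The element agent $\alpha(c_j^{\ell})$ of that set goes to $c_j^{\ell}$, and the other two $\alpha_i^l$ go to $\beta_i^1,\beta_i^2$.
\begin{itemize}
\item For $C_j\in\mathcal{C}'$, all three $c_j^{\ell}$ get element agents. Then $x_j^{\ell}$ goes to $e_j^{\ell-1}$, $v_j^{\ell}$ to $u_j^{\ell}$, and $z_j^{\ell},f_j^{\ell}$ are treated as in the matching $M_k$ where $c_j^{\ell}$ is matched to an element agent.
\item For $C_j\notin\mathcal{C}'$, all three $c_j^{\ell}$ take $x_j^{\ell}$, and $v_j^{\ell}$ goes to $u_j^{\ell}$ (or the corresponding gadget configuration in $M_k$). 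This is exactly the ``all or nothing'' structure that the previous claim showed is forced.
\item Every $y_j^{\ell}$ goes to $d_j^{\ell}$, $b_1$ to $b_1'$, and $b_2$ to $b_2'$.
\end{itemize}
I then check weak stability gadget by gadget, as in the proof that $M_1,M_2,M_3$ are weakly stable. Each $c_j^{\ell}$ either has a top-tier student or all of $v_j^{\ell},\alpha(c_j^{\ell}),y_j^{\ell}$ are at better schools. Each $e_j^{\ell}$ either has a top-tier student, or its top-tier students $v_j^{\ell},x_j^{\ell+1}$ are better off. The school $b_1'$ holds $b_1$ while every $y$ sits at its top choice $d_j^{\ell}$, and $b_2'$ holds its top student $b_1$.

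The main obstacle is marginal bookkeeping. $M^*$ alone does not reproduce the non-$b$ marginals of $p$: for example, every $y$ is at $d$ with probability $1$ in $M^*$ instead of $\frac23$. I would therefore look for two further weakly stable matchings $N_1,N_2$ such that $M^*+N_1+N_2$ agrees with $M_1+M_2+M_3$ on every edge outside the $b$-edges, and then set
\[
q=(1-\varepsilon)p+\tfrac{\varepsilon}{3}(M^*+N_1+N_2)
\]
for some $\varepsilon\in(0,1]$. This $q$ is ex-post stable by construction, has the same marginals as $p$ off the $b$-edges, and has $q(b_1,b_1')>0$.

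To build $N_1,N_2$, I would remove from the 3-regular graph $G_I$ the edges used by $\mathcal{C}'$ and try to split the remaining incidences into two ``rounds''. In each round, every element uses one remaining set, and every $y_j^{\ell}$ sits at $c_j^{\ell}$ in exactly the rounds needed to restore the marginal $\frac13$. The $b$-edges in $N_1,N_2$ are then forced to be $(b_1,b_2'),(b_2,b_1')$.

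The remaining graph is not regular: elements keep degree $2$, sets not in $\mathcal{C}'$ keep degree $3$, and sets in $\mathcal{C}'$ drop to degree $0$. So a direct application of K\H{o}nig's theorem to it does not work. If that exact decomposition fails, my fallback is to keep the $M^*$ part small and compensate with a second correction term. Concretely, I would recombine the original $\mu_1,\mu_2,\mu_3$ edge-colouring with the cover, choosing $\mu_k$ so that $\mathcal{C}'$ is ``rotated'' consistently across the three matchings. I expect this combinatorial balancing step to be the hardest part of the argument.
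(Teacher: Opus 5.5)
\paragraph{Verdict.} Your architecture matches the paper's. With $\varepsilon=1$, your $q$ is exactly the paper's $q=\frac13(M_1+M_2+M_3)$, where $M_1$ realizes the cover and $M_2,M_3$ split the residual incidences into two rounds. However, the proposal has a genuine gap, and the one gadget you specify concretely is wrong.

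\paragraph{The cover gadget in $M^*$.} For $C_j\in\mathcal{C}'$ you send $x_j^{\ell}$ to $e_j^{\ell-1}$. But $p(x_j^{\ell},e_j^{\ell-1})=0$; this fact is used in the proof of the preceding claim. By Claim~\ref{claim:whocanimprove}, every $q$ that sd-dominates $p$ also vanishes on this edge, so no matching in the support of $q$ can contain it. Hence $M^*+N_1+N_2$ can never agree with $M_1+M_2+M_3$ off the $b$-edges, whatever $N_1,N_2$ are. The correct configuration is forced:
\begin{itemize}
\item In the cover round, $x_j^{\ell}$ must be unmatched, since $d_j^{\ell}$ holds $y_j^{\ell}$ and $c_j^{\ell}$ holds an element agent.
\item Weak stability then forces $e_j^{\ell-1}$ to hold its other top-tier student $v_j^{\ell-1}$.
\item Consequently $z_j^{\ell}$ sits at $u_j^{\ell}$ and $f_j^{\ell}$ is unmatched.
\end{itemize}
This matching is weakly stable because every $c_j^{\ell}$ holds a top-tier student.

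\paragraph{The two remaining rounds.} The step you postpone as ``the hardest part'' is the actual content of the proof, and it is left unresolved; the fallback is not an argument. The missing idea is that the two remaining rounds need not be matchings on the set side. So the non-regularity of the residual graph, and K\H{o}nig's theorem, are beside the point. A set gadget can host one or two element agents in a single round:
\begin{itemize}
\item For $C_j\notin\mathcal{C}'$: every $c_j^{\ell}$ without an element agent takes $y_j^{\ell}$, never $x_j^{\ell}$, because the whole $\frac13$ on $(x_j^{\ell},c_j^{\ell})$ is spent in the cover round. Then $x_j^{\ell}$ goes to $d_j^{\ell}$ or stays unmatched accordingly.
\item Also for such $C_j$: $v_j^{\ell-1}$ must sit at $e_j^{\ell-1}$ whenever $c_j^{\ell}$ holds an element agent. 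This is stable because $c_j^{\ell-1}$ then holds a top-tier student.
\item For $C_j\in\mathcal{C}'$: each $c_j^{\ell}$ takes $y_j^{\ell}$ in one round and $x_j^{\ell}$ in the other.
\end{itemize}

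\paragraph{What the split must satisfy, and how the paper gets it.} It suffices to partition the residual incidences into $E_1\cup E_2$ so that each part meets every element exactly once and every non-cover set once or twice. The paper builds $E_1$ in three steps:
\begin{itemize}
\item Double every non-cover set-vertex and check Hall's condition for $\mathcal{X}$: $4|X|$ edges leave $X$, while at most $3|N(X)|$ edges enter $N(X)$.
\item Take a second matching covering one copy of each set-vertex; one is available from $\mu_1$.
\item Combine the two via the Mendelsohn--Dulmage theorem.
\end{itemize}
The complement $E_2$ inherits the same property, because residual degrees are $2$ for elements and $3$ for non-cover sets. Rotating the $\alpha_i^{l}$ that are not matched out over $\beta_i^1,\beta_i^2$ then completes the decomposition.
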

\begin{proof}
\begin{figure}
    \centering
    \includegraphics[width=0.9\linewidth]{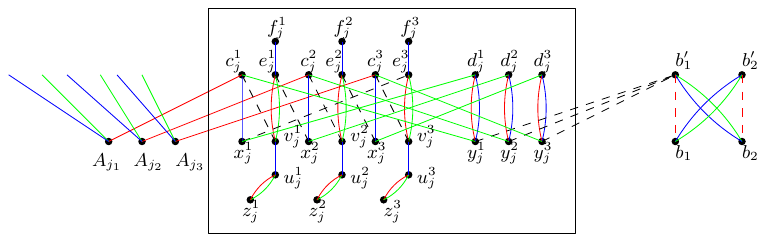}
    \caption{The three matchings in the decomposition of $q$, denoted by different colors for a set $C_j$ \textbf{\textit{within}} the exact 3-cover. The red one corresponds to the exact 3 cover $\mathcal{C}'$, while the blue and green ones are the ones constructed from $E_1$ and $E_2$. }
    \label{fig:incover}
\end{figure}

\begin{figure}
    \centering
    \includegraphics[width=0.9\linewidth]{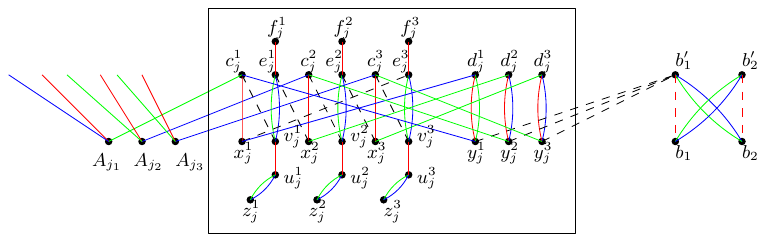}
    \caption{The three matchings in the decomposition of $q$, denoted by different colors, for a set $C_j$ \textbf{\textit{not}} in the exact 3-cover.  The red one corresponds to the exact 3 cover $\mathcal{C}'$, while the blue and green ones are the ones constructed from $E_1$ and $E_2$.}
    \label{fig:notincover}
\end{figure}
    Suppose that there exists an exact 3-cover $\mathcal{C}'\subset \mathcal{C}$. Then, let $q$ be the random matching we get by setting $q(b_1,b_1')=q(b_2,b_2')=\frac{1}{3}$, $q(b_1,b_2')=q(b_2,b_1')=\frac{2}{3}$ and $q(e)=p(e)$ otherwise.

    It is easy to verify that $q$ stochastically dominates $p$. 

    As observed before, the bipartite graph $G_I=(\mathcal{X},\mathcal{C},E_I)$, where the vertices are the sets and elements respectively, and edges represent the inclusion relations, is 3-regular, therefore, it is a union of 3 perfect matchings $\mu_1,\mu_2,\mu_3$. If we remove the edges corresponding the assignment of the elements in the exact 3-cover $\mathcal{C}'$ and the vertices of the these sets, we get a bipartite graph $G_I'=(\mathcal{X},\mathcal{C}\setminus \mathcal{C}',E_I')$, where each element-vertex has degree 2 and each set-vertex has degree 3.

    We claim that in this graph $G_I'$, there is an edge set such that every element-vertex is incident to exactly one edge and every set-vertex is incident to one or two edges. To see this, create two copies of every set-vertex $\mathcal{C}\setminus \mathcal{C}'$, and connect them to the same element-vertices. We show that Hall's condition is satisfied for $\mathcal{X}$. For any subset $\emptyset\ne X\subset \mathcal{X}$, there are $4|X|$ incident edges to $X$ now. All of these edges are incident to some vertex in the neighborhood $N(X)$. Suppose for the contrary that $|N(X)|<|X|$. As at most $3|N(X)|<3|X|<4|X|$ edges can be incident to $N(X)$, this is a contradiction. 

    Hence, there is a matching that covers $\mathcal{X}$, so every element-vertex.
    By the existence of $\mu_1$, there is also a matching that covers at least one copy of every set-vertex in $\mathcal{C}\setminus \mathcal{C}'$.
    Hence, there is a matching that covers $\mathcal{X}$ and also at least one vertex of any set by the Mendelsohn-Dulmage theorem~\cite{mendelsohn1958some} (which states that for a bipartite graph $(A,B,E)$, if $X\subseteq A$ can be covered by a matching and $Y\subseteq B$ by another matching, then so can $X\cup Y$). 
    
    Thus, such a matching gives a set of edges $E_1$ in $G_I'$ such that every element-vertex is incident exactly one edge and every set-vertex is incident to one or two edges. 
    Hence, after removing the edges of this matching from $G_I'$ too, we are left with an edge-set $E_2$ that also satisfies that every element-vertex is covered and every set-vertex has one or two incident edges.

    We use these two edge sets $E_1,E_2$, to define $3$ matchings $M_1,M_2,M_3$, illustrated with different colors (red, blue and green) for a set in the exact 3-cover in Figures~\ref{fig:incover} and for a set not in the exact 3-cover in Figure \ref{fig:notincover}. In $M_1$, the set agents corresponding to the exact 3-cover all get their corresponding element agents and none of them does in $M_2$ and $M_3$. In $M_2$ and $M_3$, for each set not in the exact 3-cover, one or two of its set agents get an element agent as defined by $E_1$ and $E_2$ and none of them does in $M_1$. (If it is not $c_j^1$ who is the only set agent obtaining an element agent in $M_2$ or $M_3$, then we cyclically permute the edges to get $M_2$ and $M_3$ - the construction is symmetric).

    In the $A_i$ gadgets, if $\alpha_i^l$ is the one matched out in $M_j$ to a set agent, then $\alpha_i^{l+1}$ is matched to $\beta_i^1$ and $\alpha_i^{l-1}$ is matched to $\beta_i^2$ in $M_j$, the same way as in Figure~\ref{fig:elem-decompofp}. 

    It is easy to verify that $q=\frac{1}{3}(M_1+M_2+M_3)$. 
    
    Also, it is easy to see that all three are weakly stable. Since most schools are indifferent among most students, the only possibilities for a blocking is when $e_j^{\ell}$ is matched to $f_j^{\ell}$, but $x_j^{\ell+1}$ is unmatched; or when $x_j^{\ell}$ is matched to $c_j^{\ell}$ and $v_j^{\ell}$ is matched to $e_j^{\ell}$, none of which happen in any of $M_1,M_2,M_3$. Finally, edges incident to $b_1'$ and $b_2'$ also do not block in any of the matchings, because when $b_1'$ obtains $b_1$ in $M_1$, then each better student ($b_2$ and $Y$) get matched to their top choice. 
\end{proof}
The coNP-hardness now follows from the above claims.
\end{proof}
\subsection*{Proof of Theorem~\ref{th:bestimproveEE}}
\begin{proof}
Containment in NP is trivial, as it is easy to check if a matching $M'$ is weakly stable and Pareto-improves the students from $M$.

    To show NP-hardness, we reduce from \comsmti. Let $I=(U,W;E;(\succ_u)_{u\in U},(\succsim_w)_{w\in W})$ be an instance of \comsmti\ with $|U|=n$. We further assume that every preference list is at most 3 long. This version remains NP-hard~\cite{Manlove_etal2002}. Also, assume $n\ge 4$ without loss of generality.

    We create an instance $I'$ of \bestimproveEE\ as follows. 
    \begin{itemize}
       
    \item We have students $i$ and $i'$ for all $u_i\in U$.
    \item We have schools $s_j,s_j'$ for all $w_j\in W$ with capacity 1.

    \item We have $n^2$ dummy students $d_1,\dots, d_{n^2}$ and dummy schools $t_1,\dots, t_{n^2}$.

\end{itemize}

Hence, we have $N=\{ i,i'\mid i\in [n]\} \cup \{ d_i\mid i\in [n^2]\}$ and $S=\{ s_j,s_j'\mid j\in [n]\}\cup \{ t_j\mid j\in [n^2]\}$.

Next, we describe the preference profile in $I'$.

\begin{itemize}
    \item 
The preferences of a student $i$ is obtained from $\succ_{u_i}$ by substituting $w_j$ with $s_j$, then appending $t_1\succ t_2\succ \cdots \succ t_{n^2}\succ s_i'$.
     \item The preferences of a student $i'$ is $s_i'\succ s_i$.
     \item For a school $s_j$, the priority list is obtained from $\succ_{w_j}$ by substituting $u_i$ with $i$ with the addition that student $j'$ is strictly preferred to all of them. 
     \item For a school $s_j'$, it has priority list $j\sim j'$.
    \item A dummy student $d_i$ only finds $t_i$ acceptable and $t_i$ has priority list $d_i\succ 1\succ 2\succ \cdots \succ n$. 
    \end{itemize}

Let $R=\frac{n^3}{n^2+2n}$ and let the initial weakly stable matching be $M=\{ (i,s_i'),(i',s_i)\mid i\in [n]\} \cup \{ (d_i,t_i)\mid i\in [n^2] \}$. $M$ is weakly stable, as every school obtains a student it ranks highest. 
We show that there exists a stable matching $M'$ Pareto-improving $M$ for the students with $\avgrank (M)-\avgrank (M')\ge R$ if and only if $I$ admits a complete weakly stable matching.

For the first direction, suppose there is a complete stable matching $\mu$ in $I$. Let $\mu'=\mu\cup \{ (i',s_i')\mid i\in [n]\}\cup \{ (d_i,t_i\mid i\in [n^2]\}$. As $\mu$ was stable, it is easy to see that so is $\mu'$. It is also easy to see that $\mu'$ is a Pareto improvement for the students in $I'$. Furthermore, the rank of the match for each student $i$ for $i\in [n]$ decreases by at least $n^2$, so the average rank also decreases by at least $\frac{n^3}{n^2+2n}$ as needed.

For the other direction, suppose that there is a stable matching $M'$ in $I'$ that Pareto improves the students and $\avgrank (M)-\avgrank (M')\ge R$. Then, the sum of the ranks in $M'$ must be at least $n^3$ smaller than in $M$. We show that $M'$ restricted to $\{ 1,\dots, n\} \cup \{ s_1,\dots, s_n\}$ corresponds to a complete stable matching in $I$. First, suppose that some student $i$ does not get a school $s_j$ in $M'$. By stability of $M'$, $(d_i,t_i)\in M'$ for $i\in [n^2]$. Hence, using that the preference lists of $u_i\in U$ have length at most 3, the sum of ranks can only decrease by at most $n + (n-1)(n^2+3)\le n^3-n^2+4n -3< n^3$ as $n\ge 4$ (the first term is the maximum decrease for students $i', i\in [n]$ who can each decrease their rank by at most 1). This shows that $M'$ must give a matching in $I$ that is complete. Weak stability also follows from the weak stability of $M'$ and the correspondence between the preferences in $I$ and $I'$.
\end{proof}

\section{An Integer Programming Formulation}
\label{sec:IP}

In this section we provide an integer programming formulation to decide if an (ex-post stable) random matching $p$ is constrained-efficient, and if not, find an ex-post stable random matching $q$ that minimizes the average rank of the students among all random matchings that sd-dominate $p$. First of all, by adding a dummy school $s_{m+1}$ with capacity $c[s_{m+1}] = |N|$ that is acceptable to every student, but is strictly worse than any other acceptable schools, we can assume that any weakly stable matching is student-perfect, so any random matching that is ex-post stable assigns every student with probability 1. 

To enforce that $q$ sd-dominates $p$, and is a lottery over feasible matchings and has minimal expected rank, we include the following constraints

\begin{align}
& \text{min} & \sum_{(i,s_j)\in E} q(i,s_j)\cdot \rank_i(s_j)  & &\label{IP:start}\\
&\text{s.t.}& \sum_{l=1}^{n^2+1} \lambda_l\cdot M_l(i,s_j)& = q(i,s_j) & ((i,s_j) \in E)  \label{IP:equality}\\
&& \sum_{l=1}^{n^2+1}\lambda_l & = 1 &\label{IP:lamdbacons} \\
&&\lambda_l & \geq 0  & (l\in [n^2+1])\label{IP:lamdbanonneg}\\
&& \sum\limits_{s_{j'}: s_{j'}\succsim_i s_j}q(i,s_{j'}) & \ge \sum\limits_{s_{j'}: s_{j'}\succsim_i s_j}p(i,s_{j'}) & (i,s_j)\in E\label{IP:sddom}\\
&& \sum_{s_j\in A(i)} M_l(i,s_j) & =1 & (i\in N,l \in [n^2+1]) \label{IP:perfcons}\\
&& \sum_{i\in A(s_j)}M_l(i,s_j) & \le c[s_j] & (s_j\in S, l \in [n^2+1])\label{IP:capaccons}\\
&&M_l(i,s_j) & \in \{ 0,1\} & ((i,s_j)\in E, l\in [n^2+1])\label{IP:end}\\
&& M_l &\text{ \textit{is weakly stable}}& (l\in[n^2+1])\label{IP:M_stable}
\end{align}

 Constraints \eqref{IP:equality}-\eqref{IP:lamdbanonneg} ensure that $q$ is a convex combination of the $M_l$, $l\in [n^2+1]$ weakly stable matchings. By Carathéodory's theorem, we can assume that $n^2+1$ matchings suffices. Constraints \eqref{IP:sddom} ensure that $q$ sd-dominates $p$. Lastly, Constraints \eqref{IP:perfcons}-\eqref{IP:end} ensure that the matchings in the lottery are feasible. To impose that the matchings in the lottery are weakly stable (Constraints (\ref{IP:M_stable})),  we could impose Constraints (\ref{con:cutoff_begin})-(\ref{con:cutoff_end}) based on cut-off ranks for each matching $M_l$, $l\in [n^2+1]$, as described in detail in Section \ref{sec:column_gen}.

Note that Constraints (\ref{IP:equality}) are not linear as both $\lambda_l$ and $M_l(i, s_j)$ are decision variables. One possible way to linearize these constraints is by replacing them with the following set of constraints, which use auxiliary variables $z_l(i,s_j)$ for each matching $M_l$, $l\in [n^2+1]$, and for each $(i,s_j)\in E$.
\begin{align}
    && z_l(i,s_j) &\leq \lambda_l &((i,s_j)\in E, l\in[n^2+1])\label{con:nonLin1}\\
    && z_l(i,s_j) &\leq M_l(i,s_j) &((i,s_j)\in E, l\in[n^2+1])\label{con:nonLin2}\\
    && z_l(i,s_j) &\geq \lambda_l - (1-M_l(i,s_j)) &((i,s_j)\in E, l\in[n^2+1])\label{con:nonLin3}\\
    && z_l(i,s_j) &\geq 0 &((i,s_j)\in E, l\in[n^2+1])\label{con:nonLin4}\\
    &&\sum_{l=1}^{n^2+1} z_l(i,s_j)& = q(i,s_j) & ((i,s_j) \in E)
\end{align}
If $M_l(i,s_j)=1$, then Constraints (\ref{con:nonLin1}) and (\ref{con:nonLin3}) enforce $z_l(i,s_j)= \lambda_l$. If $M_l(i,s_j)=0$, then Constraints (\ref{con:nonLin2}) and (\ref{con:nonLin4}) enforce $z_l(i,s_j)= 0$. 

Observe that this formulation contains a lot of symmetry because it aims to find $n^2+1$ weakly stable matchings simultaneously. Below, we present two alternative ways to reduce some of this symmetry is to add the following set of $n^2$ constraints:
\begin{align}
    && \sum_{s_j\in A(1)} j\cdot M_l(1,s_j) &\leq \sum_{s_j\in A(1)} j\cdot M_{l+1}(1,s_j) &(l\in[n^2])
\end{align}
By adding these constraints, we rank the matchings by the school to which the first student is assigned.

Alternatively, we could add the following $n^2$ constraints:
\begin{align}
    && \lambda_l &\geq \lambda_{l+1} &(l\in[n^2])
\end{align}
These constraints will rank the matchings in non-increasing order of the weights with which they are selected in the lottery.

\section{Extension: Equal Treatment of Equals}
\label{app:equal_treatment}
The aforementioned model does not explicitly enforce that students with equal preferences and equal priorities at the schools will receive the same assignment probabilities. To incorporate this minimal fairness notion into our solution, we can extend the column generation in the following way.

To formalize which students we consider to be identical, we restrict our attention to students with identical \emph{relevant} preferences. Given a random matching $p$, let $\succ_i^{+p}$ denote the restricted preference list of agent $i$ which only contains the objects that are more preferred than the least preferred object for which she receives a strictly positive probability in random matching $p$. Formally, given a random matching $p$, a student $i\in N$, and two schools $s_k, s_\ell\in S$, it holds that $s_k \succ_i^{+p} s_\ell$ if and only if $s_k \succ_i s_\ell$, and there exists a school $s_r\in S$ such that $p(i,s_r)>0$ and $s_\ell \succ_i s_r$.

The set of identical student pairs for a given random matching $p$ is modeled by a binary matrix $I^p \in \{0,1\}^{n\times n}$, where $I^p_{i,j} = 1 = I^p_{j,i}$ if and only if student pair $(i,j)$ are identical with respect to relevant preferences. For any student pair $(i,j)$ and random matching $p$, $I^p_{ij} = 1$ if \begin{enumerate}[label = (\alph*)]
    \item students $i$ and $j$ have the same restricted preference list, i.e., $\succ_i^{+p} = \succ_j^{+p}$,
    \item each school in their restricted preference lists is indifferent between students $i$ and $j$, i.e., $i\sim_{s_k} j$ for each $s_k \in \succ_i^{+p}$.
\end{enumerate}
Note that the values of $I^p$ can be determined in polynomial time by iterating over the students. Clearly, requiring that identical students in $I^p$ be treated identically is weaker than requiring that students with identical preferences and priorities be treated equally.

To ensure equal treatment of students with identical relevant preferences, we add the following constraints to [P($\tilde\M$)], in order to enforce that each student pair in $\mathcal{I}(p)$ are assigned with the same probabilities:
\begin{align}
    \sum_{\ell:M_\ell\in\tilde{\M}}\lambda_\ell\cdot M_\ell(i, s_{k}) &= \sum_{\ell:M_\ell\in\tilde{\M}}\lambda_\ell\cdot M_\ell(j, s_{k}) & ((i,j) \in \mathcal{I}(p),\text{with }i< j, (i,s_k)\in E)\label{IP:equal}
\end{align}

Denote by $\eta_{ijk}$ the dual variables of Constraints~(\ref{IP:equal}), with $(i,j)\in \mathcal{I}(p)$, and $s_k\in S$. The modified pricing problem, replacing expressions (\ref{con:pricing_obj}) and (\ref{con:pricing_con}), then becomes:
\begin{align}
    &\text{max} &\sum_{(i,s_k)\in E}\left(\sum\limits_{s_{k'}: s_{k'}\succsim_i s_k} \left(M(i,s_{k'}) \cdot \mu_{ik'}\right)+\right. &\left.\sum_{j:(i,j)\in\mathcal{I}(p)} \eta_{ijk} \left(M(i,s_k) - M(j, s_k)\right) -M(i,s_k)\cdot \rank_i(s_k)\right) + \delta \label{con:pricing_obj_equal}\\
    &\text{s.t.}&M&\in\M \label{con:pricing_con_equal}
\end{align}

\begin{remark}
Requiring equal treatment of equals might be hard to implement when the random matching $p$ that we are sd-improving upon itself violates equal treatment of equals. This can happen, for example, when approximating the expected outcome of DA with random tie-breaking by sampling uniformly at random the set of tie-breaking rules that are used (following Theorem \ref{thm:uniform}). Consider an instance where two identical students $i$ and $j$ receive the following probability vectors for schools of decreasing order in the approximation $\hat{x}^{DA}$ of DA: $\hat{x}^{DA}_i = (0.51, 0.49, \ldots)$ and $\hat{x}^{DA}_j = (0.49, 0.51, \ldots)$. Then, there does not exist a random matching that sd-dominates $\hat{x}^{DA}$ while giving the same probabilities to identical agents $i$ and $j$.
\end{remark}

\section{Computational Results EE-\name-CG}
\label{app:sd_EE_cg}
Figures \ref{fig:fraction_impr_EE} and \ref{fig:avg_rank_impr_EE} are equivalent to Figures \ref{fig:fraction_impr_DA} and \ref{fig:avg_rank_impr_DA}, but additionally show the fraction of improving students and their average rank improvement for EE-\name-CG, which sd-improves upon EE instead of upon DA. It can be seen that the fraction of the students who improve upon DA is slightly higher in EE-\name-CG, in comparison to DA-\name-CG. At the same time, the average improvement in rank by EE-\name-CG lies in between the improvements realized by EE and DA-\name-CG.
\begin{figure}[!htb]
\centering
\includegraphics[width=.8\linewidth]{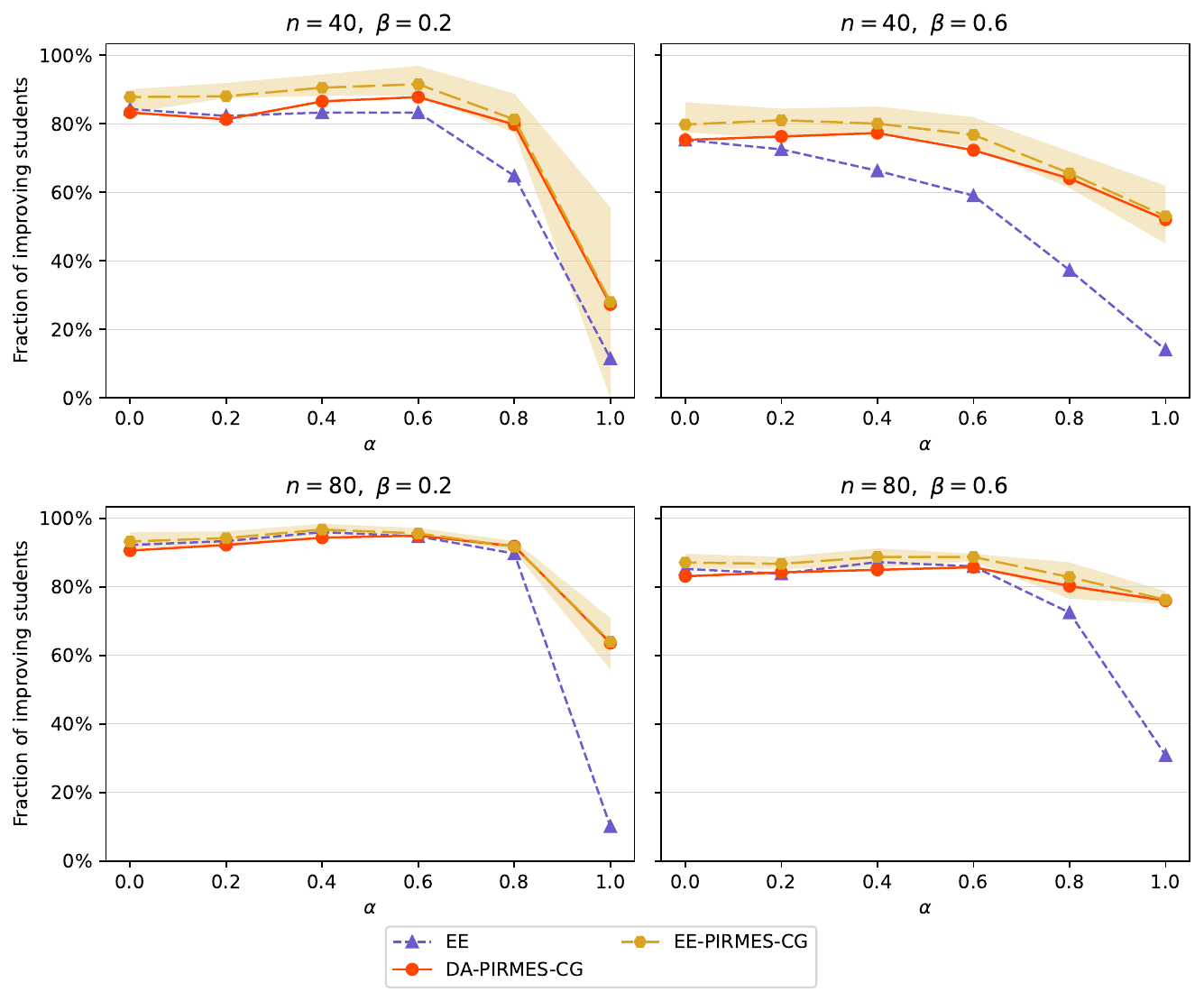}
\caption{Average fraction of improving students upon DA as a function of $\alpha$ for methods EE, DA-\name-CG, and EE-\name-CG, where the shaded areas display the interquartile ranges for EE-\name-CG (25\%-75\%) (see Figure \ref{fig:fraction_impr_DA} for interquartile ranges EE and DA-\name-CG).}
\label{fig:fraction_impr_EE}

\end{figure}

\begin{figure}[!htb]
\centering
\includegraphics[width=.8\linewidth]{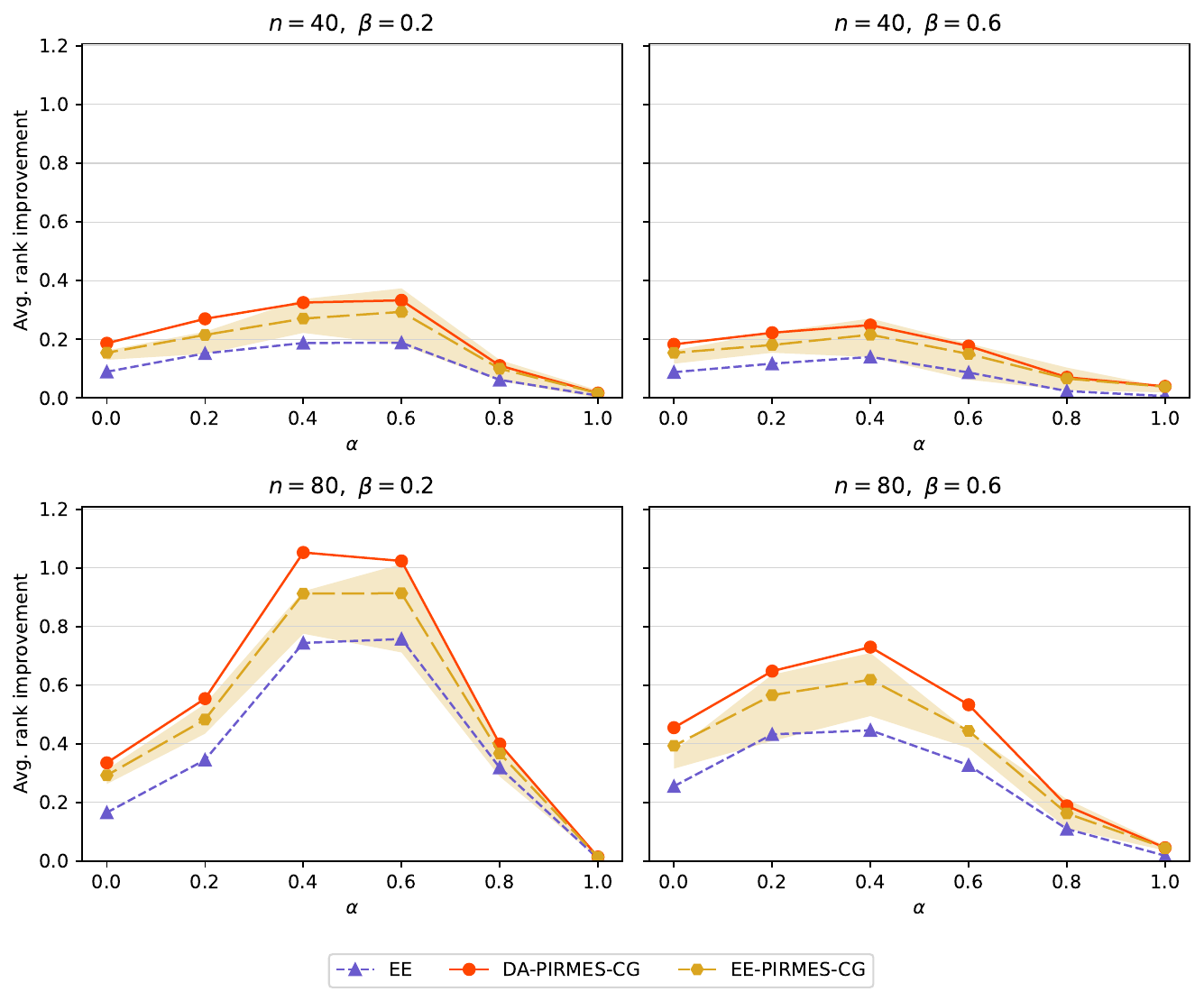}
\caption{Average improvement in rank compared to DA as a function of $\alpha$ for methods EE, DA-\name-CG, and EE-\name-CG, where the shaded areas display the interquartile ranges for EE-\name-CG (25\%-75\%) (see Figure \ref{fig:avg_rank_impr_DA} for interquartile ranges EE and DA-\name-CG).}
\label{fig:avg_rank_impr_EE}
\end{figure}

\end{appendix}

\end{document}